\RequirePackage{fix-cm}
\documentclass[smallextended]{svjour3}       
\smartqed  
\usepackage{stmaryrd}
\usepackage{bussproofs}
\usepackage{turnstile}
\usepackage{amssymb}
\usepackage{latexsym}
\setcounter{tocdepth}{3}
\usepackage{graphicx}
\usepackage{url}
\usepackage{amsmath}
\usepackage{listings}
\usepackage{subfig}
\usepackage{pgf}
\usepackage{tikz}
\usetikzlibrary{arrows,shapes,snakes,automata,backgrounds,petri}
\usepackage[latin1]{inputenc}
\usepackage{float}
\usepackage{amssymb}
\usepackage{wrapfig}
\usepackage{lscape}
\usepackage[counterclockwise]{rotating}
\usepackage{hyperref}

\usepackage{ifthen}
\usepackage{amssymb}
\newboolean{showcomments}
\setboolean{showcomments}{true}
\ifthenelse{\boolean{showcomments}}
  {\newcommand{\mynote}[2]{
    \fbox{\bfseries\sffamily\scriptsize#1}
    {\small$\blacktriangleright$\textsf{\emph{#2}}$\blacktriangleleft$}
   }
  }
  {\newcommand{\mynote}[2]{}
  }

\newtheorem{convention}{Convention}
\renewenvironment{proof}{\paragraph{Proof:}}{\hfill\qed}

\def \Cathoristic {Cathoristic logic}
\def \cathoristic {cathoristic logic}
\def\FOL{First-order logic}
\def\fol{first-order logic}
\newcommand{\NOVSPACEPARAGRAPH}[1]{\NI\textbf{#1.}}
\newcommand{\PARAGRAPH}[1]{\vspace{2mm}\NOVSPACEPARAGRAPH{#1}}
\newcommand{\NI}{\noindent}

\newcommand{\MODELLEQ}{\preceq}
\newcommand{\MODELEQ}{\simeq}
\newcommand{\SIM}{\preceq_{sim}}
\newcommand{\BISIM}{\sim}
\newcommand{\BIGLUB}{\bigsqcup}
\newcommand{\BIGGLB}{\bigsqcap}
\newenvironment{FIGURE}{\begin{figure}[h]\rule{\linewidth}{0.5pt}
}{\rule{\linewidth}{0.5pt}\end{figure}}
\newenvironment{SIDEWAYSFIGURE}{\begin{sidewaysfigure}[ht]\rule{\linewidth}{0.5pt}
}{\rule{\linewidth}{0.5pt}\end{sidewaysfigure}}
\newenvironment{RULES}{\[\begin{array}{c}}{\end{array}\]}
\newenvironment{GRAMMAR}{\[\begin{array}{lcl}}{\end{array}\]}
\newcommand{\VERTICAL}{\  \mid\hspace{-3.0pt}\mid \ }
\newcommand{\infer}[2]{\frac{\displaystyle{ #1 }}{\displaystyle{ #2 }}}
\newcommand{\ZEROPREMISERULE}[1]{\infer{-}{#1}}
\newcommand{\ONEPREMISERULE}[2]{\infer{#1}{#2}}
\newcommand{\TWOPREMISERULE}[3]{\infer{#1 \quad #2}{#3}}

\newcommand{\RULENAME}[1]{\textsc{#1}}
\newcommand{\SMALLRULENAME}[1]{\textsc{\small #1}}
\newcommand{\ZEROPREMISERULENAMEDRIGHT}[2]{\ZEROPREMISERULE{#1}\,\SMALLRULENAME{#2}}
\newcommand{\ONEPREMISERULENAMEDRIGHT}[3]{\ONEPREMISERULE{#1}{#2}\,\SMALLRULENAME{#3}}
\newcommand{\TWOPREMISERULENAMEDRIGHT}[4]{\TWOPREMISERULE{#1}{#2}{#3}\,\SMALLRULENAME{#4}}

\newcommand{\FV}[1]{\mathsf{fv}(#1)}
\newcommand{\THEORY}[1]{\mathsf{Th}(#1)}
\newcommand{\FORGET}[1]{\mathsf{forget}(#1)}
\newcommand{\ACTIONS}[1]{\mathsf{actions}(#1)}
\newcommand{\MAX}[1]{\mathsf{max}(#1)}
\newcommand{\MIN}[1]{\mathsf{min}(#1)}
\newcommand{\SIMPL}[1]{\mathsf{simpl}(#1)}
\newcommand{\CHAR}[1]{\mathsf{char}(#1)}
\newcommand{\MAY}[2]{\langle #1 \rangle #2}

\newcommand{\AND}{\land}
\newcommand{\BIGAND}{\bigwedge}
\newcommand{\BIGOR}{\bigvee}
\newcommand{\OR}{\lor}

\newcommand{\CAL}[1]{\mathcal{#1}}

\newcommand{\SEMB}[1]{\lbrack\!\lbrack #1 \rbrack\!\rbrack}
\newcommand{\SEMBTWO}[1]{\langle\!\langle #1 \rangle\!\rangle}

\newcommand{\TRUE}{\top}
\newcommand{\FALSE}{\bot}

\newcommand{\MMM}{\frak{M}}
\newcommand{\NNN}{\frak{N}}
\newcommand{\PPP}{\frak{P}}
\newcommand{\SSS}{\CAL{S}}
\newcommand{\VVV}{\CAL{V}}
\newcommand{\IMPLIES}{\rightarrow}

\newcommand{\RESTRICT}[2]{\mathsf{Restrict}_{#1}(#2)}
\newcommand{\ALLOWED}[3]{\mathsf{Allow}^{#1}(#2, #3)}

\newcommand{\ARROW}[3]{\mathsf{Arrow}_{#1}(#2, #3)}
\newcommand{\ARROWTWO}[4]{\mathsf{Arrow}^{#1}(#2, #3, #4)}
\newcommand{\LLL}{\mathcal{L}}
\newcommand{\RRR}{\mathcal{R}}
\newcommand{\TRANS}[1]{\stackrel{#1}{\longrightarrow}}

\lstnewenvironment{code}
    {\lstset{}%
      \csname lst@SetFirstLabel\endcsname}
    {\csname lst@SaveFirstLabel\endcsname}
    \lstset{
      basicstyle=\small\ttfamily,
      flexiblecolumns=false,
      basewidth={0.5em,0.45em},
      literate={+}{{$+$}}1 {/}{{$/$}}1 {*}{{$*$}}1 {=}{{$=$}}1
               {>}{{$>$}}1 {<}{{$<$}}1 {\\}{{$\lambda$}}1
               {\\\\}{{\char`\\\char`\\}}1
               {->}{{$\rightarrow$}}2 {>=}{{$\geq$}}2 {<-}{{$\leftarrow$}}2
               {<=}{{$\leq$}}2 {=>}{{$\Rightarrow$}}2 
               {\ .}{{$\bigcirc$}}2 {\ .\ }{{$\bigcirc$}}2
               {>>}{{>>}}2 {>>=}{{>>=}}2
               {|}{{$\mid$}}1               
    }

\newcommand{\CASE}[1]{\underline{Case #1.}}
\newcommand{\SUBCASE}[1]{\underline{Subcase #1.}}


\def\fBang {!}
\def\fOr {\ | \ }



\newcommand{\judge}{\vdash}

\EnableBpAbbreviations

\begin{document}

\title {\Cathoristic{}{}} 
\subtitle {A modal logic of incompatible  propositions}
\author{Richard Prideaux Evans \and  Martin Berger}
\institute {Richard Prideaux Evans,
Imperial College \email{richardprideauxevans@imperial.ac.uk} \and
 Martin Berger, University of
 Sussex. \email{M.F.Berger@sussex.ac.uk}.}
\date{Received: date / Accepted: date}

\bibliographystyle{abbrv} 
\maketitle

\keywords{Modal logic, Hennessy-Milner logic, transition systems, negation, 
exclusion, elementary equivalence, incompatibility semantics,
knowledge representation, philosophy of language.}

\begin{abstract}

\Cathoristic{} is a multi-modal logic where negation is replaced by a
novel operator allowing the expression of incompatible sentences.  
We present the syntax and semantics of the logic including complete proof rules, and
establish a number of results such as compactness, a semantic
characterisation of elementary equivalence, the existence of a
quadratic-time decision procedure, and Brandom's incompatibility
semantics property.  
We demonstrate the usefulness of the logic as a language for knowledge representation.
\end{abstract}

\setcounter{tocdepth}{2}
\tableofcontents

\section{Introduction}\label{introduction}

Natural language is full of incompatible alternatives.
If Pierre is the current king of France, then nobody else can simultaneously fill that role.
A traffic light can be green, amber or red - but it cannot be more than one colour at a time.
Mutual exclusion is a natural and ubiquitous concept.

\FOL{} can represent mutually exclusive alternatives, of course.
To say that Pierre is the only king of France, we can write, following Russell:
\[
king(france, pierre) \land \forall x . (king(france, x) \rightarrow x = pierre).
\]
To say that a particular traffic light, $tl$, is red - and red is its only colour - we could write:
\[
colour(tl, red) \land \forall x . colour(tl, x) \rightarrow x = red.
\]
In this approach, incompatibility is a \emph{derived} concept, reduced to 
a combination of universal quantification and identity.  
\FOL{}, in other words, uses relatively complex machinery to express a
simple concept:
\begin{itemize}

\item Quantification's complexity comes from the
  rules governing the distinction between free
  and bound variables\footnote{Efficient handling of free/bound variables
    is an active field of research, e.g.~nominal approaches to logic
    \cite{PittsAM:nomsetnasics}.
    The problem was put in focus in recent years with the rising
     interest in the computational cost of syntax manipulation in
     languages with binders.}.

\item Identity's complexity comes from the infinite collection of axioms required to formalise the
  indiscernibility of identicals.

\end{itemize}

\NI The costs of quantification and identity, such as a larger proof
search space, have to be borne every time one expresses a sentence that excludes others - even
though incompatibility does not, prima facie, appear to have anything to do
with the free/bound variable distinction, or require the full power of 
the identity relation.

This paper introduces an alternative approach, where
exclusion is expressed directly, as a first-class concept.
 \Cathoristic{}\footnote{``Cathoristic'' comes from the Greek
  $\kappa \alpha \theta o \rho \acute{\i} \zeta \epsilon i \nu$: to impose narrow
  boundaries. We are grateful to Tim Whitmarsh for suggesting this
  word.} is the simplest logic we could find in which incompatible
statements can be expressed.  
It is a multi-modal logic, a variant of Hennessy-Milner logic,
that replaces negation with a new logical primitive
\[
   !A
\]
pronounced \emph{tantum}\footnote{``Tantum'' is Latin for ``only''.}
$A$. Here $A$ is a finite set of alternatives, and $!A$ says that the
alternatives in $A$ exhaust all possibilities.  For example:
\begin{eqnarray*}
\fBang \{green, amber, red\}
\end{eqnarray*}
states that nothing but $green$, $amber$ or $red$ is possible.  Our
logic uses modalities to state facts, for example $\MAY{amber}{}$
expresses that $amber$ is currently the case.  The power of the logic
comes from the conjunction of modalities and tantum. For example
\[
   \MAY{amber}{}\ \AND\ !\{green, amber, red\} 
\]
expresses that $amber$ is currently the case and $red$ as well as
$green$ are the only two possible alternatives to $amber$.  Any
statement that exceeds what tantum $A$ allows, like
\[
   \MAY{blue} \ \AND\ !\{green, amber, red\},
\]
is necessarily false.  When the only options are green, amber, or red,
then blue is not permissible.  Now to say that Pierre is the only king
of France, we write:
\[
\MAY{king}\MAY{france}(\MAY{pierre} \land \fBang \{pierre\}).
\]
Crucially, \cathoristic{}'s representation involves no
universal quantifier and no identity relation.  It is a purely
propositional formulation.  To say that the traffic
light is currently red, and red is its only colour, we write:
\[
\MAY{tl} \MAY{colour} (\MAY{red} \land !\{red\}).
\]
This is simpler, both in terms of representation length and
computational complexity, than the formulation in \fol{} given on the
previous page.
Properties changing over time can be expressed by adding extra
modalities that can be understood as time-stamps.  To say that that
the traffic light was red at time $t_1$ and amber at time $t_2$, we
can write:
\[
   \MAY{tl} \MAY{colour} (\MAY{t_1} (\MAY{red} \land !\{red\}) \land \MAY{t_2} (\MAY{amber} \land !\{amber\}))
\]
Change over time can be expressed in first-order logic with bounded
quantification - but modalities are succinct and avoid introducing
bound variables.

Having claimed that incompatibility is a natural logical concept, not
easily expressed in first-order logic\footnote{We will precisify this
  claim in later sections; (1) first-order logic's representation of
  incompatibility is longer in terms of formula length than
  \cathoristic{}'s (see Section \ref{incompatiblepredicatesinfol});
  and (2) logic programs in \cathoristic{} can be optimised to run
  significantly faster than their equivalent in \fol{} (see Section
  \ref{optimizingpreconditions}).}, we will now argue the following:

\begin{itemize}

\item Incompatibility is conceptually prior to negation.

\item Negation arises as the weakest form of incompatibility.

\end{itemize}

\subsection{Material incompatibility and negation}

\NI Every English speaker knows that
\begin{quote}
``Jack is male'' is incompatible with ``Jack is female''
\end{quote}

\NI But \emph{why} are these sentences incompatible? The orthodox
position is that these sentences are incompatible because of the
following general law:
\begin{quote}
If someone is male, then it is not the case that they are female
\end{quote}
Recast in first-order logic:
\[
\forall x. ( male(x) \IMPLIES \neg female(x) ).
\]

\NI In other words, according to the orthodox position, the
incompatibility between the two particular sentences depends on a
general law involving universal quantification, implication and
negation.

Brandom \cite{brandom2} follows Sellars in proposing an alternative explanation: ``Jack
is male'' is incompatible with ``Jack is female'' because ``is male''
and ``is female'' are \emph{materially incompatible} predicates.  They
claim we can understand incompatible predicates even if we do
not understand universal quantification or negation.  
Material incompatibility is conceptually prior to logical negation.

Imagine, to make this vivid, a primitive people speaking a primordial
language of atomic sentences\footnote{In this paper, we define a sentence as \emph{atomic} if it does
  not contain another sentence as a syntactic constituent.}. These people can express sentences
that \emph{are} incompatible.  But they cannot express \emph{that}
they are incompatible.  They recognise when atomic sentences are
incompatible, and see that one sentence entails another - but their
behaviour outreaches their ability to articulate it.

Over time, these people \emph{may} advance to a more sophisticated
language where incompatibilities are made explicit, using a negation
operator - but this is a later (and optional) development:
\begin{quote}
[If negation is added to the language], it lets one say that two
claims are materially incompatible:``If a monochromatic patch is red,
then it is not blue.'' That is, negation lets one make explicit in the
form of claims - something that can be said and (so) thought - a
relation that otherwise remained implicit in what one practically did,
namely treat two claims as materially
incompatible\footnote{\cite{brandom} pp.47-48}.
\end{quote}

\NI But before making this optional
explicating step, our primitive people understand incompatibility
without understanding negation.  If this picture of our primordial
language is coherent, then material incompatibility is conceptually
independent of logical negation.

Now imagine a modification of our primitive linguistic practice in
which no sentences are ever treated as incompatible.  If one person
says ``Jack is male'' and another says ``Jack is female'', nobody
counts these claims as \emph{conflicting}.  The native speakers never
disagree, back down, retract their claims, or justify them. They just
say things.  Without an understanding of incompatibility, and the
variety of behaviour that it engenders, we submit (following Brandom)
that there is insufficient richness in the linguistic practice for
their sounds to count as assertions.  Without material
incompatibility, their sounds are just \emph{barks}.
\begin{quote}

  Suppose the reporter's differential responsive dispositions to call
  things red are matched by those of a parrot trained to utter the
  same noises under the same stimulation. What practical capacities of
  the human distinguish the reporter from the parrot? What, besides
  the exercise of regular differential responsive dispositions, must
  one be able to \emph{do}, in order to count as having or grasping
  \emph{concepts}? ... To grasp or understand a concept is, according
  to Sellars, to have practical mastery over the inferences it is
  involved in... The parrot does not treat ``That's red'' as
  incompatible with ``That's green''\footnote{\cite{brandom2}
    pp.88-89, our emphasis.}.
\end{quote}

\NI If this claim is also accepted, then material incompatibility is
not just conceptually \emph{independent} of logical negation, but
conceptually \emph{prior}.  

\subsection{Negation as the minimal incompatible}

In \cite{brandom2} and \cite{brandom}, Brandom describes 
logical negation as a limiting form of material incompatibility:
\begin{quote}
Incompatible sentences are Aristotelian \emph{contraries}. A sentence
and its negation are \emph{contradictories}. What is the relation
between these? Well, the contradictory is a contrary: any sentence is
incompatible with its negation. What distinguishes the contradictory
of a sentence from all the rest of its contraries? The contradictory
is the \emph{minimal} contrary: the one that is entailed by all the
rest. Thus every contrary of ``Plane figure $f$ is a circle'' - for
instance ``$f$ is a triangle'', ``$f$ is an octagon'', and so on -
entails ``$f$ is \emph{not} a circle''.
\end{quote}

\NI If someone asserts that it is not the case that Pierre is the (only) King of France,
we have said very little.  There are so many different ways in which
it could be true:
\begin{itemize}
\item
The King of France might be Jacques
\item
The King of France might be Louis
\item
...
\item
There may be no King of France at all
\item
There may be no country denoted by the word ``France''
\end{itemize}
Each of these concrete propositions is incompatible with Pierre being the King of France.
To say ``It is not the case that the King of France is Pierre'' is just to claim that one of these indefinitely many concrete possibilities is true.
Negation is just the logically weakest form of incompatibility.

In the rest of this paper, we assume - without further argument - that material incompatibility is conceptually prior to logical negation.
We develop a simple
 modal logic to articulate Brandom's intuition: a language, without negation, in which we can nevertheless make incompatible claims.

\subsection{Inferences between atomic sentences}
\label{intrasentential}
So far, we have justified the claim that incompatibility is a
fundamental logical concept by arguing that incompatibility is
conceptually prior to negation.  Now incompatibility is an inferential
relation between \emph{atomic sentences}.  In this subsection, we
shall describe \emph{other} inferential relations between atomic
sentences - inferential relations that \fol{} cannot articulate (or
can only do so awkwardly), but that \cathoristic{} handles naturally.

The \emph{atomic sentences} of a natural language can be
characterised as the sentences which do not contain any other
sentences as constituent parts\footnote{Compare Russell \cite{russell}
  p.117: ``A sentence is of atomic form when it contains no logical
  words and no subordinate sentence''. We use a broader notion of
  atomicity by focusing solely on whether or not it contains a
  subordinate sentence, allowing logical words such as ``and'' \emph{as long
  as they are conjoining noun-phrases} and not sentences.}.  According
to this criterion, the following are atomic:

\begin{itemize}

\item Jack is male
\item Jack loves Jill
\end{itemize}

\NI The following is not atomic:

\begin{quote}
  Jack is male and Jill is female
\end{quote}

\NI because it contains the complete sentence ``Jack is male'' as a
syntactic constituent.  Note that, according to this criterion, the
following \emph{is} atomic, despite using ``and'':

\begin{quote}
  Jack loves Jill and Joan
\end{quote}

\NI Here, ``Jack loves Jill'' is not a syntactic constituent\footnote{To see that ``Jack loves Jill'' is not a constituent of ``Jack loves Jill and Joan'', observe that ``and'' conjoins constituents of the \emph{same syntactic type}. But ``Jack loves Jill'' is a sentence, while ``Joan'' is a noun. Hence the correct parsing is ``Jack (loves (Jill and Joan))'', rather than ``(Jack loves Jill) and Joan''.}.

There are many types of inferential relations between atomic
sentences of a natural language.  For example:

\begin{itemize}

\item ``Jack is male'' is incompatible with ``Jack is female''
\item ``Jack loves Jill'' implies ``Jack loves''
\item ``Jack walks slowly'' implies ``Jack walks''
\item ``Jack loves Jill and Joan'' implies ``Jack loves Jill''
\item ``Jack is wet and cold'' implies ``Jack is cold''

\end{itemize}

\NI The first of these examples involves an incompatibility relation,
while the others involve entailment relations.  A key question this
paper seeks to answer is: what is the simplest logic that can capture
these inferential relations between atomic sentences?

\subsection{Wittgenstein's vision of a logic of elementary propositions}

\NI In the \emph{Tractatus} \cite{wittgenstein-tractatus}, Wittgenstein
claims that the world is a set of atomic sentences in an idealised
logical language.  Each atomic sentence was supposed to be
\emph{logically independent} of every other, so that they could be
combined together in every possible permutation, without worrying
about their mutual compatibility.
But already there were doubts and problem cases.  He was aware that
certain  statements seemed atomic, but did not seem logically
independent:

\begin{quote}
  For two colours, e.g., to be at one place in the visual field is
  impossible, and indeed logically impossible, for it is excluded by
  the logical structure of colour. (6.3751)
\end{quote}

\NI At the time of writing the \emph{Tractatus}, he hoped that further
analysis would reveal that these statements were not really atomic.

Later, in the \emph{Philosophical Remarks} \cite{wittgenstein-remarks}, he
renounced the thesis of the logical independence of atomic
propositions.  In \S 76, talking about incompatible colour predicates,
he writes:

\begin{quote}
  That makes it look as if \emph{a construction might be possible
    within the elementary proposition}. That is to say, as if there
  were a construction in logic which didn't work by means of truth
  functions.  What's more, it also seems that these constructions have
  an effect on one proposition's following logically from another.
  For, if different degrees exclude one another it follows from the
  presence of one that the other is not present.  In that case,
  \emph{two elementary propositions can contradict one another}.
\end{quote}

\NI Here, he is clearly imagining a logical language in which there
are incompatibilities between atomic propositions. In \S 82:

\begin{quote}
  This is how it is, what I said in the Tractatus doesn't exhaust the
  grammatical rules for 'and', 'not', 'or', etc; \emph{there are rules
    for the truth functions which also deal with the elementary part
    of the proposition}.  The fact that one measurement is right
  \emph{automatically} excludes all others.
\end{quote}

\NI Wittgenstein does not, unfortunately, show us what this
language would look like.  
In this paper, we present \cathoristic{} as one way of formalising inferences
between atomic sentences.

\subsection{Outline}

\NI The rest of this paper is organised as follows: The next section
briefly recapitulates the mathematical background of our work.
Section \ref{coreEL} introduces the syntax and semantics of
\cathoristic{} with examples. Section \ref{naturalLanguageInference}
discusses how \cathoristic{} can be used to model inferences between
atomic sentences.  Section \ref{kr} describes informally how our logic
is useful as a knowledge representation language.  Section
\ref{elAndBangCore} presents core results of the paper, in particular
a semantic characterisation of elementary equivalence and a decision
procedure with quadratic time-complexity. The decision procedure has
been implemented in Haskell and is available for public use
\cite{HaskellImplementation} under a liberal open-source
license. This section also
shows that Brandom's incompatibility semantics condition holds for
\cathoristic{}.  Section \ref{elAndBangMore} presents the proof rules
for \cathoristic{} and proves completeness. Section \ref{compactness}
provides two translations from \cathoristic{} into \fol{}, and proves
compactness using one of them.  Section \ref{ELAndNegation}
investigates a variant of \cathoristic{} with an additional negation
operator, and provides a decision procedure for this extension that
has an exponential time-complexity.  Section \ref{quantifiedEL}
extends \cathoristic{} with first-order quantifiers and sketches the
translation of first-order formulae into first-order
\cathoristic{}. The conclusion surveys related work and lists open
problems.  Appendix \ref{pureModels} outlines a different approach to
giving the semantics of \cathoristic{}, including a characterisation
of the corresponding elementary equivalence. The appendix also
discusses the question of non-deterministic models. The remaining
appendices present routine proof of facts used in the main section.

A reader not interested in mathematical detail is recommended to look
only at Chapters \ref{introduction}, \ref{coreEL},
\ref{naturalLanguageInference}, \ref{kr}, the beginning of Chapter
\ref{elAndBangMore}, and the Conclusion.

\section{Mathematical preliminaries}\label{preliminaries}

\NI This section briefly surveys the mathematical background of our
paper.  A fuller account of order-theory can be found in
\cite{DaveyBA:intlatao}.  Labelled transition systems are explored in
\cite{HennessyM:Algtheop,SassoneV:modcontac} and bisimulations in
\cite{SangiorgiD:intbisac}. Finally, \cite{EndertonHB:matinttl} is one
of many books on first-order logic.

\PARAGRAPH{Order-theory}
A \emph{preorder} is a pair $(S, \sqsubseteq)$ where $S$ is a set, and
$\sqsubseteq$ is a binary relation on $S$ that is reflexive and
transitive. Let $T \subseteq S$ and $x \in S$. We say $x$ is an
\emph{upper bound} of $T$ provided $t \sqsubseteq x$ for all $t \in
T$. If in addition $x \sqsubseteq y$ for all upper bounds $y$ of $T$,
we say that $x$ is the \emph{least} upper bound of $T$.  The set of
all least upper bounds of $T$ is denoted $\BIGLUB T$.  \emph{Lower
  bounds}, \emph{greatest lower bounds} and $\BIGGLB T$ are defined
mutatis mutandis.  A \emph{partial order} is a preorder $\sqsubseteq$
that is also anti-symmetric.  A partial order $(S, \sqsubseteq)$ is a
\emph{lattice} if every pair of elements in $S$ has a least upper and
a greatest lower bound.  A lattice is a \emph{bounded lattice} if it
has top and bottom elements $\top$ and $\bot$ such that for all $x \in
S$:
\[
x \sqcap \bot = \bot \nonumber \qquad
x \sqcup \bot = x \nonumber \qquad
x \sqcap \top = x \nonumber \qquad
x \sqcup \top = \top \nonumber.
\]

\NI If $(S, \sqsubseteq)$ is a preorder, we can turn it into a
partial-order by quotienting: let $a \simeq b$ iff $a \sqsubseteq b$
as well as $b \sqsubseteq a$. Clearly $\simeq$ is an equivalence. Let
$E$ be the set of all $\simeq$-equivalence classes of $S$. We get a
canonical partial order, denoted $\sqsubseteq_E$, on $E$ by setting:
$[a]_{\simeq} \sqsubseteq_E [b]_{\simeq}$ whenever $a \sqsubseteq
b$. If all relevant upper and lower bounds exist in $(S,
\sqsubseteq)$, then $(E, \sqsubseteq_E)$ becomes a bounded lattice by
setting
\[
   [x]_{\simeq} \sqcap [y]_{\simeq} = [ x \sqcap y ]_{\simeq}
       \quad
   [x]_{\simeq} \sqcup [y]_{\simeq} = [ x \sqcup y ]_{\simeq}
       \quad
   \bot_{E} = [\bot]_{\simeq}
       \quad
   \top_{E} = [\top]_{\simeq}.
\]

\PARAGRAPH{Transition systems}
Let $\Sigma$ be a set of \emph{actions}.  A \emph{labelled transition
  system over $\Sigma$} is a pair $(\mathcal{S}, \rightarrow)$ where
$\mathcal{S}$ is a set of \emph{states} and $\rightarrow \subseteq
\mathcal{S} \times \Sigma \times \mathcal{S}$ is the \emph{transition
  relation}.  We write $x \xrightarrow{a} y$ to abbreviate $(x,a,y)
\in \rightarrow$. We let $s, t, w, w', x, y, z, ...$ range over
states, $a, a', b, ...$ range over actions and $\LLL, \LLL', ...$
range over labelled transition systems. We usually speak of labelled
transition systems  when the set of actions is clear from the
context.  We say $\LLL$ is \emph{deterministic} if $x \TRANS{a} y$ and
$x \TRANS{a} z$ imply that $y = z$. Otherwise $\LLL$ is
\emph{non-deterministic}.  A labelled transition system is
\emph{finitely branching} if for each state $s$, the set $\{t\ |\ s
\TRANS{a} t\}$ is finite.

\PARAGRAPH{Simulations and bisimulations}
Given two labelled transition systems $\LLL_i = (S_i, \rightarrow_i)$
over $\Sigma$ for $i = 1, 2$, a \emph{simulation from $\LLL_1$ to
  $\LLL_2$} is a relation $\RRR \subseteq S_1 \times S_2$ such that
whenever $(s, s') \in \RRR$: if $s \TRANS{a}_1 s'$ then there exists a
transition $t \TRANS{a}_2 t'$ with $(t, t') \in \RRR$.  We write $s \SIM
t$ whenever $(s, t) \in \RRR$ for some simulation $\RRR$.  We say
$\RRR$ is a \emph{bisimulation between $\LLL_1$ and $\LLL_2$} if both,
$\RRR$ and $\RRR^{-1}$ are simulations. Here $\RRR^{-1} = \{(y,
x)\ |\ (x, y) \in \RRR\}$.  We say two states $s, s'$ are
\emph{bisimilar}, written $s \BISIM s'$ if there is a bisimulation
$\RRR$ with $(s, s') \in \RRR$.

\PARAGRAPH{First-order logic}
A \emph{many-sorted first-order signature} is specified by the
following data.  A non-empty set of \emph{sorts}, a set \emph{function
  symbols} with associated \emph{arities}, i.e.~non-empty list of
sorts $\#(f)$ for each function symbol $f$; a set of \emph{relation
  symbols} with associated \emph{arities}, i.e.~a list of sorts
$\#(R)$ for each relation symbol $R$; a set of \emph{constant symbols}
with associated \emph{arity}, i.e.~a sort $\#(c)$ for each constant
symbol $c$. We say a function symbol $f$ is \emph{$n$-ary} if $\#(f)$
has length $n+1$. Likewise, a relation symbol is \emph{$n$-ary} if
$\#(R)$ has length $n$.

Let $\SSS$ be a signature. An \emph{$\SSS$-model} $\CAL{M}$ is an
object with the following components.  For each sort $\sigma$ a set
$U_{\sigma}$ called \emph{universe} of sort $\sigma$.  The members of
$U_{\sigma}$ are called \emph{$\sigma$-elements} of $\CAL{M}$; an
element $c^\CAL{M}$ of $U_{\sigma}$ for each constant $c$ of sort
$\sigma$; a function $f^\CAL{M} : (U_{\sigma_1}\times \dots \times
U_{\sigma_n}) \rightarrow U_{\sigma}$ for each function symbol $f$ of
arity $(\sigma_1, ..., \sigma_n, \sigma)$; a relation $R^\CAL{M}
\subseteq U_{\sigma_1}\times \dots \times U_{\sigma_n}$ for each
relation symbol $R$ of arity $(\sigma_1, ..., \sigma_n)$.

Given an infinite set of variables for each sort $\sigma$, the
\emph{terms} and \emph{first-order formulae} for $\SSS$ are given by
the following grammar
\begin{GRAMMAR}
  t &\ ::=\ & x \VERTICAL c \VERTICAL f(t_1, ..., t_n) \\[1mm]
  \phi &::=& t = t' \VERTICAL R(t_1, ..., t_n) \VERTICAL \neg \phi \VERTICAL \phi \AND \psi \VERTICAL \forall x.A
\end{GRAMMAR}

\NI Here $x$ ranges over variables of all sorts, $c$ over constants,
$R$ over $n$-ary relational symbols and $f$ over $n$-ary function
symbols from $\SSS$.  Other logical constructs such as disjunction or
existential quantification are given by de Morgan duality, and truth
$\top$ is an abbreviation for $x = x$. If $\SSS$ has just a single
sort, we speak of \emph{single-sorted first-order logic} or just
\emph{first-order logic}. 

Given an $\SSS$-model $\CAL{M}$, an \emph{environment}, ranged over by
$\sigma$, is a partial function from variables to $\CAL{M}$'s
universes.  We write $x \mapsto u$ for the environment that maps $x$ to
$u$ and is undefined for all other variables. Moreover, if $\sigma, x
\mapsto u$ is the environment that is exactly like $\sigma$, except
that it also maps $x$ to $u$, assuming that $x$ is not in the domain
of $\sigma$.  The \emph{interpretation} $\SEMB{t}_{\CAL{M}, \sigma}$
of a term $t$ w.r.t. $\CAL{M}$ and $\sigma$ is given by the following
clauses, assuming that the domain of $\sigma$ contains all free
variables of $t$:
\begin{itemize}

\item $\SEMB{x}_{\CAL{M}, \sigma} = \sigma(x)$.
\item $\SEMB{c}_{\CAL{M}, \sigma} = c^{\CAL{M}}$.
\item $\SEMB{f(t_1, ..., t_n)}_{\CAL{M}, \sigma} =
  f^{\CAL{M}}(\SEMB{t_1}_{\CAL{M}, \sigma}, ..., \SEMB{t_n}_{\CAL{M},
    \sigma})$.

\end{itemize}

\NI The \emph{satisfaction relation} $\CAL{M} \models_{\sigma} \phi$
is given by the following clauses, this time assuming that the domain
of $\sigma$ contains all free variables of $\phi$:
\begin{itemize}

\item $\CAL{M} \models_{\sigma} t = t'$ iff $\SEMB{t}_{\CAL{M}, \sigma} = \SEMB{t'}_{\CAL{M}, \sigma}$.
\item $\CAL{M} \models_{\sigma} R(t_1, ..., t_n)$ iff
  $R^{\CAL{M}}(\SEMB{t_1}_{\CAL{M}, \sigma}, ..., \SEMB{t_n}_{\CAL{M},
  \sigma})$.
\item $\CAL{M} \models_{\sigma} \neg \phi$ iff $\CAL{M} \not\models_{\sigma} \phi$.
\item $\CAL{M} \models_{\sigma} \phi \AND \psi$ iff $\CAL{M} \models_{\sigma} \phi$ and $\CAL{M} \models_{\sigma} \psi$.
\item $\CAL{M} \models_{\sigma} \forall x.\phi$ iff for all $u$ in the
  universe of $\CAL{M}$ we have $\CAL{M} \models_{\sigma, x \mapsto v} \phi$.

\end{itemize}

\NI Note that if $\sigma$ and $\sigma'$ agree on the free variables of
$t$, then $\SEMB{t}_{\CAL{M}, \sigma} =\SEMB{t}_{\CAL{M},
  \sigma'}$. Likewise $\CAL{M} \models_{\sigma} \phi$ if and only iff
$\CAL{M} \models_{\sigma'} \phi$, provided $\sigma$ and $\sigma'$ agree
on the free variables of $\phi$.

The \emph{theory} of a model $\CAL{M}$, written $\THEORY{\CAL{M}}$, is
the set of all formulae made true by $\CAL{M}$, i.e.~$\THEORY{\CAL{M}}
= \{\phi\ |\ \CAL{M}\models \phi\}$. We say two models $\CAL{M}$ and
$\CAL{N}$ are \emph{elementary equivalent} if $\THEORY{\CAL{M}} =
\THEORY{\CAL{N}}$. In first-order logic $\THEORY{\CAL{M}} \subseteq
\THEORY{\CAL{N}}$ already implies that $\CAL{M}$ and $\CAL{N}$ are
elementary equivalent.

\section{\Cathoristic{}}\label{coreEL}

In this section we introduce the syntax and semantics of \cathoristic{}.

\subsection{Syntax}
\label{elsyntax}
\NI Syntactically, \cathoristic{} is a multi-modal logic with one new
operator.

\begin{definition} Let $\Sigma$ be a non-empty set of \emph{actions}.
Actions are ranged over by $a, a', a_1, b, ...$, and $A$ ranges over
finite subsets of $\Sigma$. The \emph{formulae} of \cathoristic{}, ranged over by $\phi,
\psi, \xi ...$, are given by the
following grammar.

\begin{GRAMMAR}
  \phi 
     &\quad ::= \quad & 
  \TRUE 
     \VERTICAL 
  \phi \AND \psi
     \VERTICAL 
  \MAY{a}{\phi}
     \VERTICAL 
  \fBang A 
\end{GRAMMAR}
\end{definition}

\NI The first three forms of $\phi$ are standard from Hennessy-Milner
logic \cite{HennessyM:alglawfndac}: $\TRUE$ is logical truth, $\AND$
is conjunction, and $\MAY{a}{\phi}$ means that the current state can
transition via action $a$ to a new state at which $\phi$ holds. Tantum
$A$, written $!A$, is the key novelty of \cathoristic{}.  Asserting
$!A$ means: in the current state at most the modalities $\MAY{a}{}$
that satisfy $a \in A$ are permissible.

We assume that $\MAY{a}{\phi}$ binds more tightly than conjunction, so
$\MAY{a}{\phi} \AND \psi$ is short for $(\MAY{a}{\phi}) \AND \psi$.
We often abbreviate $\MAY{a}{\TRUE}$ to $\MAY{a}{}$. We define falsity
$\FALSE$ as $!\emptyset \AND \MAY{a}{}$ where $a$ is an arbitrary
action in $\Sigma$. 
Hence, $\Sigma$ must be
non-empty. 
Note that, in the absence of negation, we cannot
readily define disjunction, implication, or $[a]$ modalities by de
Morgan duality. 

\begin{convention}
From now on we assume a fixed set $\Sigma$ of actions, except where
stated otherwise.
\end{convention}

\subsection{Semantics}

\NI The semantics of \cathoristic{} is close to Hennessy-Milner logic,
but uses deterministic transition systems augmented with labels on
states.



\begin{definition}\label{cathoristicTS}
A \emph{cathoristic transition system} is a triple $\LLL = (S,
\rightarrow, \lambda)$, where $(S, \rightarrow)$ is a deterministic
labelled transition system over $\Sigma$, and $\lambda$ is a function
from states to sets of actions (not necessarily finite), subject to
the following constraints:
\begin{itemize}

\item For all states $s \in S$ it is the case that $ \{a \fOr \exists
  t \; s \xrightarrow{a} t\} \subseteq \lambda(s)$. We call this
  condition \emph{admissibility}.

\item For all states $s \in S$, $\lambda (s)$ is either finite or
  $\Sigma$. We call this condition \emph{well-sizedness}.

\end{itemize}
\end{definition}

\NI The intended interpretation is that $\lambda(w)$ is the set of
allowed actions emanating from $w$.  The $\lambda$ function
is the semantic counterpart of the $!$ operator.  The admissibility
restriction is in place because transitions $s \TRANS{a} t$ where $a
\notin \lambda(s)$ would be saying that an $a$ action is possible at
$s$ but at the same time prohibited at $s$.
Well-sizedness is not a fundamental restriction but rather a
convenient trick. Cathoristic transition systems have two kinds
of states:

\begin{itemize}

\item States $s$ without restrictions on outgoing transitions. Those are
  labelled with $\lambda ( s) = \Sigma$.

\item States $s$ with restriction on outgoing transitions. Those are
  labelled by a finite set $\lambda ( s)$ of actions.

\end{itemize}

\NI Defining $\lambda$ on all states and not just on those with
restrictions makes some definitions and proofs slightly easier.

As with other modal logics, satisfaction of formulae is defined
relative to a particular state in the transition system, giving
rise to the following definition.

\begin{definition}
A \emph{cathoristic model}, ranged over by $\MMM, \MMM', ...$, is a
pair $(\LLL, s)$, where $\LLL$ is a cathoristic transition system $(S,
\rightarrow, \lambda)$, and $s$ is a state from $S$. We call $s$ the
\emph{start state} of the model.  An cathoristic model 
 is a \emph{tree} if the underlying transition system is a tree
whose root is the start state.
\end{definition}

\NI Satisfaction of a formula is defined relative to a cathoristic model.

\begin{definition}\label{ELsatisfaction}
The \emph{satisfaction relation} $\MMM \models \phi$ is defined
inductively by the following clauses, where we assume that $\MMM =
(\LLL, s)$ and $\LLL = (S, \rightarrow, \lambda)$.
\[
\begin{array}{lclcl}
  \MMM & \models & \top   \\
  \MMM & \models & \phi \AND \psi &\ \mbox{ iff } \ & \MMM  \models \phi \mbox { and } \MMM \models \psi  \\
  \MMM & \models & \langle a \rangle \phi & \mbox{ iff } & \text{there is transition } s \xrightarrow{a} t \mbox { such that } (\LLL, t) \models \phi  \\
  \MMM & \models & \fBang A &\mbox{ iff } & \lambda(s) \subseteq A
\end{array}
\]
\end{definition}

\NI The first three clauses are standard. The last clause enforces the
intended meaning of $!A$: the permissible modalities in the model are
\emph{at least as constrained} as required by $!A$. They may even be
more constrained if the inclusion $\lambda(s) \subseteq A$ is
proper. For infinite sets $\Sigma$ of actions, allowing $\lambda(s)$
to return arbitrary infinite sets in addition to $\sigma$ does not
make a difference because $A$ is finite by construction, so
$\lambda(s) \subseteq A$ can never hold anyway for infinite
$\lambda(s)$.

\begin{FIGURE}
\centering
\begin{tikzpicture}[node distance=1.3cm,>=stealth',bend angle=45,auto]
  \tikzstyle{place}=[circle,thick,draw=blue!75,fill=blue!20,minimum size=6mm]
  \tikzstyle{red place}=[place,draw=red!75,fill=red!20]
  \tikzstyle{transition}=[rectangle,thick,draw=black!75,
  			  fill=black!20,minimum size=4mm]
  \tikzstyle{every label}=[red]
  
  \begin{scope}
    \node [place] (w1) {$\Sigma$};
    \node [place] (e1) [below left of=w1] {$\{b,c\}$}
      edge [pre]  node[swap] {a}                 (w1);      
    \node [place] (e2) [below right of=w1] {$\emptyset$}
      edge [pre]  node[swap] {c}                 (w1);      
    \node [place] (e3) [below of=e1] {$\Sigma$}
      edge [pre]  node[swap] {b}                 (e1);      
  \end{scope}
    
\end{tikzpicture}
\caption{Example model.}\label{figure:elSmall}
\end{FIGURE}

We continue with concrete examples.  The model in Figure
\ref{figure:elSmall} satisfies all the following formulae, amongst
others.
\[
\begin{array}{lclclclcl}
\MAY{a} &\qquad&
\MAY{a} \MAY{b} &\qquad&
\MAY{a} \fBang \{b,c\} &\qquad&
\MAY{a} \fBang \{b,c,d\} &\qquad&
\MAY{c} \\[1mm]
\MAY{c} \fBang \emptyset &&
\MAY{c} \fBang \{a\} &&
\MAY{c} \fBang \{a,b\} &&
\MAY{a} \land \MAY{c} &&
\MAY{a} (\MAY{b} \land \fBang \{b,c\}
\end{array}
\]

\NI Here we assume, as we do with all subsequent figures, that the top
state is the start state.  The same model does not satisfy any of the
following formulae.
\[
\MAY{b} \qquad
\fBang \{a\} \qquad
\fBang \{a, c\} \qquad
\MAY{a} \fBang \{b\} \qquad
\MAY{a} \MAY{c} \qquad
\MAY{a} \MAY{b} \fBang \{c\} 
\]

\NI Figure \ref{threemodels} shows various models of $\MAY{a} \MAY{b}$
and Figure \ref{more models} shows one model that does, and one that
does not, satisfy the formula $\fBang \{a,b\}$.  Both models validate
$!\{a, b, c\}$.

\Cathoristic{} does not have the operators $\neg, \lor, $ or
$\IMPLIES$.  This has the following two significant consequences.
First, every satisfiable formula has a unique (up to isomorphism)
simplest model.  In Figure \ref{threemodels}, the left model is the
unique simplest model satisfying$\MAY{a} \MAY{b}$.  We will clarify
below that model simplicity is closely related to the process
theoretic concept of similarity, and use the existence of unique
simplest models in our quadratic-time decision procedure.

\begin{FIGURE}
\centering
\begin{tikzpicture}[node distance=1.3cm,>=stealth',bend angle=45,auto]
  \tikzstyle{place}=[circle,thick,draw=blue!75,fill=blue!20,minimum size=6mm]
  \tikzstyle{red place}=[place,draw=red!75,fill=red!20]
  \tikzstyle{transition}=[rectangle,thick,draw=black!75,
  			  fill=black!20,minimum size=4mm]
  \tikzstyle{every label}=[red]
  \begin{scope}[xshift=0cm]
    \node [place] (w1) {$\Sigma$};
    \node [place] (e1) [below of=w1] {$\Sigma$}
      edge [pre]  node[swap] {a}                 (w1);
    \node [place] (e2) [below of=e1] {$\Sigma$}
      edge [pre]  node[swap] {b}                 (e1);
  \end{scope}   
  \begin{scope}[xshift=4cm]
    \node [place] (w1) {$\Sigma$};
    \node [place] (e1) [below of=w1] {$\{a,b,c\}$}
      edge [pre]  node[swap] {a}                 (w1);
    \node [place] (e2) [below of=e1] {$\Sigma$}
      edge [pre]  node[swap] {b}                 (e1);
  \end{scope}   
  \begin{scope}[xshift=8cm]
    \node [place] (w1) {$\{a\}$};
    \node [place] (e1) [below of=w1] {$\{b\}$}
      edge [pre]  node[swap] {a}                 (w1);
    \node [place] (e2) [below of=e1] {$\emptyset$}
      edge [pre]  node[swap] {b}                 (e1);
  \end{scope}   
\end{tikzpicture}
\caption{Three models of $\langle a \rangle \langle b \rangle \top$}\label{figure:elAndBang:models}
\label{threemodels}
\end{FIGURE}

\begin{FIGURE}
\centering
\begin{tikzpicture}[node distance=1.3cm,>=stealth',bend angle=45,auto]
  \tikzstyle{place}=[circle,thick,draw=blue!75,fill=blue!20,minimum size=6mm]
  \tikzstyle{red place}=[place,draw=red!75,fill=red!20]
  \tikzstyle{transition}=[rectangle,thick,draw=black!75,
  			  fill=black!20,minimum size=4mm]
  \tikzstyle{every label}=[red]
  \begin{scope}[xshift=0cm]
    \node [place] (w1) {$\{a\}$};
    \node [place] (e1) [below of=w1] {$\Sigma$}
      edge [pre]  node {a}                 (w1);
  \end{scope}   
  \begin{scope}[xshift=4cm]
    \node [place] (w1) {$\{a, b, c\}$};
    \node [place] (e1) [below of=w1] {$\{a\}$}
      edge [pre]  node[swap] {c}                 (w1);
  \end{scope}   
\end{tikzpicture}
\caption{The model on the left validates $!\{a, b\}$
while the model on the right does not.}\label{figure:elAndBang:moreMdels}
\label{more models}
\end{FIGURE}

Secondly, \cathoristic\  is different from other logics in that there is an
asymmetry between tautologies and contradictories: logics with
conventional negation have an infinite number of non-trivial
tautologies, as well as an infinite number of contradictories.  In
contrast, because \cathoristic{} has no negation or disjunction
operator, it is expressively limited in the tautologies it can
express: $\top$ and conjunctions of $\top$ are its sole tautologies. On the
other hand, the tantum operator enables an infinite number of
contradictories to be expressed.  For example:
\[
   \MAY{a} \;\land\; \fBang \emptyset \qquad
   \MAY{a} \;\land\; \fBang \{b\} \qquad
   \MAY{a} \;\land\; \fBang \{b, c\} \qquad
   \MAY{b} \;\land\; \fBang \emptyset \qquad
\]


\NI Next, we present the semantic consequence relation.
\begin{definition} 
 We say the formula $\phi$ \emph{semantically implies} $\psi$, written $\phi
   \models \psi$, provided for all cathoristic models $\MMM$ if it is the
   case that $\MMM \models \phi$ then also $\MMM \models \phi$.
   We sometimes write $\models \phi$ as a shorthand for $\top \models \phi$.
\end{definition}

\NI \Cathoristic{} shares with other (multi)-modal logics the following
implications:
\begin{eqnarray*}
\MAY{a} \MAY{b} \models \MAY{a} 
 \qquad\qquad
\MAY{a} (\MAY{b} \land \MAY{c}) \models \MAY{a} \MAY{b}
\end{eqnarray*}
As \cathoristic{} is restricted to deterministic models, it also
validates the following formula:
\begin{eqnarray*}
\MAY{a} \MAY{b} \land \MAY{a} \MAY{b}  \models \MAY{a} (\MAY{b} \land \MAY{c})
\end{eqnarray*}
\Cathoristic{} also validates all implications in which the set of constraints is relaxed from left to right. For example:
\begin{eqnarray*}
\fBang \{c\} \models\ \fBang \{a, b, c\} 
 \qquad\qquad
\fBang \emptyset \models\ \fBang \{a, b\} 
\end{eqnarray*}

\section{Inferences between atomic sentences}\label{naturalLanguageInference}

\NI \Cathoristic{} arose in part as an attempt to answer the
question: what is the simplest logic that can capture inferences
between atomic sentences of natural language?  In this section, we
give examples of such inferences, and then show how \cathoristic{}
handles them.  We also compare our approach with attempts at
expressing the inferences in first-order logic.

\subsection{Intra-atomic inferences in \cathoristic{}}

\NI Natural language admits many types of inference between atomic
sentences.  First, exclusion:
\begin{quote}
``Jack is male'' is incompatible with ``Jack is female''.
\end{quote}
Second, entailment inferences from dyadic to monadic predicates:
\begin{quote}
``Jack loves Jill'' implies ``Jack loves''.
\end{quote}
Third, adverbial inferences:
\begin{quote}
``Jack walks quickly'' implies ``Jack walks''.
\end{quote}
Fourth, inferences from conjunctions of sentences to conjunctions of noun-phrases (and vice-versa):
\begin{quote}
``Jack loves Jill'' and ``Jack loves Joan'' together imply that ``Jack loves Jill and Joan''.
\end{quote}
Fifth, inferences from conjunctions of sentences to conjunction of
predicates\footnote{See \cite{sommers} p.282 for a spirited defence of
  predicate conjunction against Fregean regimentation.} (and
vice-versa):
\begin{quote}
``Jack is bruised'' and ``Jack is humiliated'' together imply that ``Jack is bruised and humiliated''.
\end{quote}

\NI They all can be handled directly and naturally in \cathoristic{}, as we
shall now show.


Incompatibility, such as that between ``Jack is male'' and ``Jack is
female'', is translated into \cathoristic{} as the pair of incompatible
sentences:
\begin{eqnarray*}
\MAY{jack} \MAY{sex} (\MAY{male} \land \fBang \{male\}) 
   \qquad\qquad
\MAY{jack} \MAY{sex} (\MAY{female} \land \fBang \{female\}).
\end{eqnarray*}

\NI \Cathoristic{} handles entailments from dyadic to monadic predicates\footnote{Although
  natural languages are full of examples of inferences from dyadic to
  monadic predicates, there are certain supposed counterexamples to
  the general rule that a dyadic predicate always implies a monadic
  one. For example, ``Jack explodes the device'' does not, on its most
  natural reading, imply that ``Jack explodes''. Our response to cases
  like this is to distinguish between two distinct monadic predicates
  $explodes_1$ and $explodes_2$:
 \begin{itemize}
 \item
 $X explodes_1$ iff $X$ is an object that undergoes an explosion
 \item
 $X explodes_2$ iff $X$ is an agent that initiates an explosion
 \end{itemize}
 Now ``Jack explodes the device'' does imply that ``Jack $explodes_2$'' but does not imply that ``Jack $explodes_1$''. 
There is no deep problem here - just another case where natural language overloads the same word in different situation to have different meanings.}.
``Jack loves Jill'' is translated into \cathoristic{} as:
\begin{eqnarray*}
   \MAY{jack} \MAY{loves} \MAY{jill}.
\end{eqnarray*}
The semantics of modalities ensures that this directly entails:
\begin{eqnarray*}
   \MAY{jack} \MAY{loves}.
\end{eqnarray*}

\NI Similarly, \cathoristic{} supports inferences from triadic to dyadic
predicates:
\begin{quote}
  ``Jack passed the biscuit to Mary'' implies ``Jack passed the biscuit''.
\end{quote}

\NI This can be expressed directly in \cathoristic{} as:
\[
   \MAY{jack} \MAY{passed} \MAY{biscuit} \MAY{to} (\MAY{mary} \land !\{mary\}) \models \MAY{jack} \MAY{passed} \MAY{biscuit}.
\]

\NI Adverbial inferences is captured in \cathoristic{} as follows.
\begin{eqnarray*}
  \MAY{jack} \MAY{walks} \MAY{quickly}
\end{eqnarray*}
entails:
\begin{eqnarray*}
  \MAY{jack} \MAY{walks}.
\end{eqnarray*}

\NI \Cathoristic{} directly supports inferences from conjunctions of
sentences to conjunctions of noun-phrases.  As our models are
deterministic, we have the general rule that $ \MAY{a} \MAY{b} \land
\MAY{a} \MAY{c} \models \MAY{a} (\MAY{b} \land \MAY{c})$ from which
it follows that
\begin{eqnarray*}
   \MAY{jack} \MAY{loves} \MAY{jill}
      \qquad\text{and}\qquad
   \MAY{jack} \MAY{loves} \MAY{joan}
\end{eqnarray*}
together imply
\begin{eqnarray*}
\MAY{jack} \MAY{loves} (\MAY{jill} \land \MAY{joan}).
\end{eqnarray*}

\NI Using the same rule, we can infer that
\begin{eqnarray*}
   \MAY{jack} \MAY{bruised} \land \MAY{jack} \MAY{humiliated}
\end{eqnarray*}

\NI together imply
\begin{eqnarray*}
\MAY{jack} (\MAY{bruised} \land \MAY{humiliated}).
\end{eqnarray*}
 
\subsection{Intra-atomic inferences in \fol{}}
Next, we look at how these inferences are handled in \fol{}.

\subsubsection{Incompatible predicates in \fol{}}\label{incompatiblepredicatesinfol}

\NI How are incompatible predicates represented in \fol{}?  Brachman
and Levesque \cite{brachman} introduce the topic by remarking:
\begin{quote}
   We would consider it quite ``obvious'' in this domain that if it
   were asserted that $John$ were a $Man$, then we should answer
   ``no'' to the query $Woman(John)$.
\end{quote}

\NI They propose adding an extra axiom to express the incompatibility:
\[
   \forall x. ( Man(x) \IMPLIES \neg Woman(x) )
\]  
 
\NI This proposal imposes a burden on the knowledge-representer: an
extra axiom must be added for every pair of incompatible predicates.
This is burdensome for large sets of incompatible predicates.  For
example, suppose there are 50 football teams, and a person can only
support one team at a time.  We would need to add ${50 \choose
  2}$ axioms, which is unwieldy.
\[
\begin{array}{l}
  \forall x.  \neg (SupportsArsenal(x) \land SupportsLiverpool(x))  \\
  \forall x.  \neg (SupportsArsenal(x) \land SupportsManUtd(x))  \\
  \forall x.  \neg (SupportsLiverpool(x) \land SupportsManUtd(x))  \\
  \qquad \qquad \qquad \vdots
\end{array}
\]

\NI Or, if we treat the football-teams as objects, and have a
two-place $Supports$ relation between people and teams, we could have:
\[
   \forall x y z . (Supports(x,y) \land y \neq z \IMPLIES \neg Supports(x,z)).
\]   

\NI If we also assume that each football team is distinct from all 
others, this certainly captures the desired uniqueness condition.  But
it does so by using relatively complex logical machinery.

\subsubsection{Inferences from dyadic to monadic predicates in \fol{}}
If we want to capture the inference from ``Jack loves Jill`` to ``Jack
loves'' in \fol{}, we can use a non-logical axiom:
\[
   \forall x. y .( Loves_2(x,y) \IMPLIES Loves_1(x))
\]

\NI We would have to add an extra  axiom like this for every
$n$-place predicate.  This is cumbersome at best.  In \cathoristic{}, by
contrast, we do not need to introduce any non-logical machinery 
to capture these inferences because they all follow from the general
rule that $\MAY{a} \MAY{b} \models \MAY{a}$.

\subsubsection{Adverbial inferences in \fol{}}

\NI How can we represent verbs in traditional \fol{} so as to
support adverbial inference?  Davidson \cite{davidson2} proposes that
every $n$-place action verb be analysed as an $n$+1-place predicate,
with an additional slot representing an event.  For example, he
analyses ``I flew my spaceship to the Morning Star'' as
\[
\exists x. ( Flew(I, MySpaceship, x) \land To(x, TheMorningStar))
\]
This implies 
\[
\exists x.  Flew(I, MySpaceship, x)
\]
This captures the inference from ``I flew my spaceship to the Morning
Star'' to ``I flew my spaceship''.

First-order logic cannot support logical inferences between atomic
sentences.  If it is going to support inferences from adverbial
sentences, it \emph{cannot} treat them as atomic and must instead
\emph{reinterpret} them as logically complex propositions.  The cost
of Davidson's proposal is that a seemingly simple sentence - such as
``Jones walks'' - turns out, on closer inspection, not to be atomic at
all - but to involve existential quantification:
\[
\exists x.  Walks(Jones, x)
\]

\NI First-order logic \emph{can} handle such inferences - but only by
reinterpreting the sentences as logically-complex compound
propositions.

\section{\Cathoristic{} as a language for knowledge representation}\label{kr}

\Cathoristic{} has been used as the representation language for a
large, complex, dynamic multi-agent simulation \cite{evans-and-short}.
This is an industrial-sized application, involving tens of thousands
of rules and facts\footnote{The application has thousands of paying
  users, and available for download on the App Store for the iPad
  \cite{Versu}.}.  In this simulation, the entire world state is stored
as a cathoristic model.
	
We found that \cathoristic\ has two distinct advantages as a language for knowledge representation. First, it is ergonomic: ubiquitous concepts (such as uniqueness) can be expressed directly.
Second, it is efficient: the tantum operator allows certain sorts of optimisation that would not otherwise be available.
We shall consider these in turn.

\subsection{Representing facts  in \cathoristic{}}

A sentence involving a one-place predicate of the form $p(a)$ is
expressed in \cathoristic{} as
\begin{eqnarray*}
   \MAY{a} \MAY{p}
\end{eqnarray*}

\NI A sentence involving a many-to-many two-place relation of the form
$r(a,b)$ is expressed in \cathoristic{} as
\begin{eqnarray*}
  \MAY{a} \MAY{r} \MAY{b}
\end{eqnarray*}

\NI But a sentence involving a many-to-one two-place relation of the
form $r(a,b)$ is expressed as:
\begin{eqnarray*}
  \MAY{a} \MAY{r} (\MAY{b} \land \fBang \{b\})
\end{eqnarray*}


\NI So, for example, to say that ``Jack likes Jill'' (where ``likes'' is,
of course, a many-many relation), we would write:
\begin{eqnarray*}
  \MAY{jack} \MAY{likes} \MAY{jill}
\end{eqnarray*}

\NI But to say that ``Jack is married to Joan''
(where``is-married-to'' is a many-one relation), we would write:
\begin{eqnarray*}
  \MAY{jack} \MAY{married} (\MAY{joan} \land \fBang \{joan\})
\end{eqnarray*}

\NI Colloquially, we might say that ``Jack is married to Joan - and
only Joan''.  Note that the relations are placed in infix position, so
that the facts about an object are ``contained'' within the object.
One reason for this particular way of structuring the data will be
explained below.
 
Consider the following facts about a gentleman named Brown:

\[
   \MAY{brown} 
   \left(
   \begin{array}{l}
     \MAY{sex} (\MAY{male} \land \fBang \{male\}) \\
        \qquad \AND \\
     \MAY{friends} (\MAY{lucy} \land \MAY{elizabeth}) 
   \end{array}
   \right)
\]

\NI All facts  starting with the prefix $\MAY{brown}$ form a
sub-tree of the entire database.  And all  facts which start with
the prefix $\MAY{brown} \MAY{friends}$ form a sub-tree of that tree.
A sub-tree can be treated as an individual via its prefix.  
A sub-tree of formulae is the \cathoristic{} equivalent of an
\emph{object} in an object-oriented programming language.

To model change over time, we assert and retract statements from the
database, using a non-monotonic update mechanism.  If a fact is
inserted into the database that involves a state-labelling restricting
the permissible  transitions emanating from that state, then all
transitions out of that state that are incompatible with the
restriction are removed.  So, for example, if the database currently
contains the fact that the traffic light is amber, and then we update
the database to assert the traffic light is red:
\[
\MAY{tl} \MAY{colour} (\MAY{red} \land !\{red\})
\]
Now the restriction on the state (that red is the only transition)
means that the previous transition from that state (the transition
labelled with amber) is automatically removed.

The tree-structure of formulae allows us to express the \emph{life-time of data} in a natural way. 
If we wish a piece of data $d$ to exist for just the duration of a proposition $t$, then we make $t$ be a sub-expression of $d$. 
For example, if we want the friendships of an agent to exist just as long as the agent, then we place the relationships inside the agent: 
\[
\MAY{brown} \MAY{friends}
\]
Now, when we remove $\MAY{brown}$ all the sub-trees, including the data about who he is friends with, will be automatically deleted as well.

Another advantage of our representation is that we get a form of \emph{automatic currying} which simplifies queries.
So if, for example, Brown is married to Elizabeth, then the database would contain 
\begin{eqnarray*}
\MAY{brown} \MAY{married} (\MAY{elizabeth} \land \fBang \{elizabeth\})
\end{eqnarray*}
In \cathoristic{}, if we want to find out whether Brown is married, we can query the sub-formula directly -  we just ask if 
\begin{eqnarray*}
\MAY{brown} \MAY{married}
\end{eqnarray*}
In \fol, if $married$ is a two-place predicate, then we need to fill in the extra argument place with a free variable - we would need to find out if there exists an $x$ such that $married(brown, x)$ - this is more cumbersome to type and slower to compute. 

\subsection{Simpler postconditions}

In this section, we contrast the representation in action languages based on \fol{}\footnote{E.g. STRIPS \cite{strips}}, with our \cathoristic{}-based representation.
Action definitions are rendered in typewriter font.

When expressing the pre- and postconditions of an action, planners
based on \fol{} have to explicitly describe the propositions that
are removed when an action is performed:
\begin{verbatim}
   action move(A, X, Y)
       preconditions
           at(A, X)
       postconditions
           add: at(A, Y) 
           remove: at(A, X)
\end{verbatim}
Here, we need to explicitly state that when $A$ moves from $X$ to $Y$, $A$ is no longer at $X$. It might seem obvious to us that if $A$ is now at $Y$, he is no longer at $X$ - but we need to explicitly tell the system this. This is unnecessary, cumbersome and error-prone. In \cathoristic{}, by contrast, the exclusion operator means we do not need to specify the facts that are no longer true:
\begin{verbatim}
   action move (A, X, Y)
       preconditions
           <A><at>(<X> /\ !{X})
       postconditions
           add: <A><at>(<Y> /\ !{Y})
\end{verbatim}
The tantum operator $!$ makes it clear that something can only be at one
place at a time, and the non-monotonic update rule described above
\emph{automatically} removes the old invalid location data.

\subsection{Using tantum $!$  to optimise preconditions}
\label{optimizingpreconditions}
Suppose, for example, we want to find all married couples who are both Welsh.
In Prolog, we might write something like:
\begin{verbatim}
   welsh_married_couple(X, Y) :-
       welsh(X),
       welsh(Y),
       spouse(X,Y).
\end{verbatim}	
Rules like this create a large search-space because we need to find all instances of $welsh(X)$ and all instances of  $welsh(Y)$ and take the cross-product \cite{smith-and-genesereth}. If there are $n$ Welsh people, then we will be searching $n^2$ instances of $(X,Y)$ substitutions.

If we express the rule in \cathoristic{}, the compiler is able to use the extra information expressed in the $!$ operator to reorder the literals to find the result significantly faster.
Assuming someone can only have a single spouse at any moment, the rule is expressed in \cathoristic{} as:
\begin{verbatim}
   welsh_married_couple(X, Y) :-
       <welsh> <X>,
       <welsh> <Y>,
       <spouse> <X> (<Y> /\ !{Y}).
\end{verbatim}	
Now the compiler is able to reorder these literals to minimise the search-space. 
It can see that, once $X$ is instantiated, the following literal can be instantiated without increasing the search-space:
\begin{verbatim}
   <spouse> <X> (<Y> /\ !{Y})
\end{verbatim}
The \emph{tantum} operator can be used by the compiler to see that there is at most one $Y$ who is the spouse of $X$.
So the compiler reorders the clauses to produce:
\begin{verbatim}
   welsh_married_couple (X, Y) :-
       <welsh> <X>,
       <spouse> <X> (<Y> /\ !{Y}),
       <welsh> <Y>.
\end{verbatim}	
Now it is able to find all results by just searching $n$ instances - a significant optimisation.
In our application, this optimisation has made a significant difference to the run-time cost of query evaluation.

\section{Semantics and Decision Procedure}\label{elAndBangCore}

In this section we provide our key semantic results.  We define a
partial ordering $\MODELLEQ$ on models, and show how the partial
ordering can be extended into a bounded lattice.  We use the bounded
lattice to construct a quadratic-time decision procedure.

\subsection{Semantic characterisation of elementary equivalence}\label{elementaryEquivalence}

Elementary equivalence induces a notion of model equivalence: two
models are elementarily equivalent exactly when they make the same
formulae true. Elementary equivalence as a concept thus relies on
\cathoristic{} even for its definition. We now present an alternative
characterisation that is purely semantic, using the concept of
(mutual) simulation from process theory. Apart from its intrinsic
interest, this characterisation will also be crucial for proving
completeness of the proof rules.

We first define a pre-order $\MODELLEQ$ on models by extending the
notion of simulation on labelled transition systems to cathoristic
models. Then we prove an alternative characterisation of $\MODELLEQ$
in terms of set-inclusion of the theories induced by models. We then
show that two models are elementarily equivalent exactly when they are
related by $\MODELLEQ$ and by $\MODELLEQ^{-1}$.

\begin{definition}
Let $\LLL_i = (S_i, \rightarrow_i, \lambda_i)$ be cathoristic transition
systems for $i = 1, 2$.  A relation $\RRR \subseteq S_1 \times S_2$ is
a \emph{simulation from $\LLL_1$ to $\LLL_2$}, provided:
\begin{itemize} 

\item $\RRR$ is a simulation on the underlying transition systems. 

\item Whenever $(x, y) \in \RRR$ then also $\lambda_1(x) \supseteq
  \lambda_2(y)$.

\end{itemize}

\NI If $\MMM_i = (\LLL_i, x_i)$ are models, we say $\RRR$ is a
\emph{simulation from $\MMM_1$ to $\MMM_2$}, provided the following hold.

\begin{itemize}

\item $\RRR$ is a simulation from $\LLL_1$ to $\LLL_2$ as cathoristic transition systems.

\item  $(x_1, x_2) \in \RRR$. 

\end{itemize}

\end{definition}

\NI Note that the only difference from the usual definition of
simulation is the additional requirement on the state labelling
functions $\lambda_1$ and $\lambda_2$.

\begin{definition}
The largest simulation from $\MMM_1$ to $\MMM_2$ is denoted $\MMM_1
\SIM \MMM_2$.  It is easy to see that $\SIM$ is itself a
simulation from $\MMM_1$ to $\MMM_2$, and the union of all such
simulations.  If $\MMM_1 \SIM \MMM_2$ we say $\MMM_2$
\emph{simulates} $\MMM_1$.

We write $\MODELEQ$ for $\SIM \cap \SIM^{-1}$. We call $\MODELEQ$ the
\emph{mutual simulation} relation.
\end{definition}

\NI We briefly discuss the relationship of $\MODELEQ$ with
bisimilarity, a notion of equality well-known from process theory and
modal logic. For non-deterministic transition systems $\MODELEQ$ is a
strictly coarser relation than bisimilarity.

\begin{definition}
We say $\RRR$ is a \emph{bisimulation} if $\RRR$ is a simulation from
$\MMM_1$ to $\MMM_2$ and $\RRR^{-1}$ is a simulation from $\MMM_2$ to
$\MMM_1$. By $\BISIM$ we denote the largest bisimulation, and we say
that $\MMM_1$ and $\MMM_2$ are \emph{bisimilar} whenever $\MMM_1
\BISIM \MMM_2$.
\end{definition}

\begin{lemma}
On cathoristic models, $\BISIM$ and $\MODELEQ$ coincide.
\end{lemma}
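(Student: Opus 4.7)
The plan is to prove the two inclusions separately, with the harder direction relying crucially on the fact that cathoristic transition systems are deterministic by Definition~\ref{cathoristicTS}.

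The inclusion $\BISIM\ \subseteq\ \MODELEQ$ is immediate: if $\RRR$ witnesses $\MMM_1 \BISIM \MMM_2$, then $\RRR$ is a simulation from $\MMM_1$ to $\MMM_2$ and $\RRR^{-1}$ is a simulation from $\MMM_2$ to $\MMM_1$, so $\MMM_1 \SIM \MMM_2$ and $\MMM_2 \SIM \MMM_1$, whence $\MMM_1 \MODELEQ \MMM_2$.

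For the reverse inclusion, I propose to show that $\MODELEQ$ itself is a bisimulation, from which $\MODELEQ\ \subseteq\ \BISIM$ follows by maximality of $\BISIM$. So let $\RRR = \SIM \cap \SIM^{-1}$ and suppose $(x,y) \in \RRR$. The label condition is easy: since $x \SIM y$ we have $\lambda_1(x) \supseteq \lambda_2(y)$, and since $y \SIM x$ we have $\lambda_2(y) \supseteq \lambda_1(x)$, hence $\lambda_1(x) = \lambda_2(y)$, so both $\RRR$ and $\RRR^{-1}$ satisfy the labelling constraint. For the transfer condition, suppose $x \TRANS{a}_1 x'$. Because $x \SIM y$, there is some $y'$ with $y \TRANS{a}_2 y'$ and $x' \SIM y'$. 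It remains to verify $y' \SIM x'$. Apply $y \SIM x$ to the transition $y \TRANS{a}_2 y'$: this yields some $x''$ with $x \TRANS{a}_1 x''$ and $y' \SIM x''$. Here is where determinism does the work: $x \TRANS{a}_1 x'$ and $x \TRANS{a}_1 x''$ force $x' = x''$, so $y' \SIM x'$, giving $(x',y') \in \RRR$. The symmetric case for transitions out of $y$ is handled identically.

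The only subtlety, and what would otherwise be the main obstacle, is precisely this matching step: in an arbitrary non-deterministic system, the witness $x''$ produced from $y \SIM x$ need not coincide with the original $x'$, which is exactly why $\SIM \cap \SIM^{-1}$ fails to be a bisimulation in general. Determinism collapses the two witnesses, so no further ingenuity is needed.
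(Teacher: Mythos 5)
Your proof is correct. The paper dismisses this lemma as ``straightforward from the definitions,'' and your argument is the natural elaboration of that claim: the two inclusions, with determinism collapsing the two witnesses $x'$ and $x''$ so that $\SIM \cap \SIM^{-1}$ is itself a bisimulation --- exactly the point that fails for non-deterministic systems, as the paper notes just before the lemma.
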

\begin{proof}
Straightforward from the definitions.
\end{proof}

\begin{definition}
Let $\THEORY{\MMM}$ be the \emph{theory} of $\MMM$, i.e.~the formulae
made true by $\MMM$, i.e.~$\THEORY{\MMM} = \{\phi\ |\ \MMM \models
\phi \}$.
\end{definition}

\NI We give an alternative characterisation on $\SIM^{-1}$ using
theories. In what follows, we will mostly be interested in
$\SIM^{-1}$, so we give it its own symbol.

\begin{definition}
Let $\MODELLEQ$  be short for $\SIM^{-1}$. 
\end{definition}

\NI Figure \ref{figure:leq} gives some examples of models and how they
are related by $\MODELLEQ$.

\begin{FIGURE}
\centering
\begin{tikzpicture}[node distance=1.3cm,>=stealth',bend angle=45,auto]
  \tikzstyle{place}=[circle,thick,draw=blue!75,fill=blue!20,minimum size=6mm]
  \tikzstyle{red place}=[place,draw=red!75,fill=red!20]
  \tikzstyle{transition}=[rectangle,thick,draw=black!75,
  			  fill=black!20,minimum size=4mm]
  \tikzstyle{every label}=[red]
  \begin{scope}[xshift=0cm]
    \node [place] (w1) {$\Sigma$};
    \node [place] (e1) [below left of=w1] {$\{b\}$}
      edge [pre]  node[swap] {a}                 (w1);      
    \node [place] (c) [below of=e1] {$\Sigma $}
      edge [pre]  node[swap] {b}                 (e1);      
    \node [place] (e2) [below right of=w1] {$\Sigma $}
      edge [pre]  node[swap] {c}                 (w1);      
  \end{scope}  
  
  \begin{scope}[xshift=3cm]
    \node [place] (w1) {$\Sigma $};
    \node [place] (e1) [below of=w1] {$\{b,c\}$}
      edge [pre]  node[swap] {a}                 (w1);      
    \node [place] (e2) [below of=e1] {$\Sigma $}
      edge [pre]  node[swap] {b}                 (e1);      
  \end{scope}  
  
  \begin{scope}[xshift=6cm]
    \node [place] (w1) {$\Sigma $};
    \node [place] (e1) [below of=w1] {$\Sigma $}
      edge [pre]  node[swap] {a}                 (w1);      
    \node [place] (e2) [below of=e1] {$\Sigma $}
      edge [pre]  node[swap] {b}                 (e1);      
  \end{scope}  
  
  \begin{scope}[xshift=9cm]
    \node [place] (w1) {$\Sigma $};
    \node [place] (e1) [below of=w1] {$\Sigma $}
      edge [pre]  node[swap] {a}                 (w1);      
  \end{scope}  
  
  \draw (2,0) node {$\MODELLEQ $};
  \draw (4.5,0) node {$\MODELLEQ $};
  \draw (7.5,0) node {$\MODELLEQ $};
  
\end{tikzpicture}
\caption{Examples of $\MODELLEQ $}\label{figure:leq}
\end{FIGURE}

\begin{theorem}[Characterisation of elementary equivalence]\label{theorem:completeLattice}
\begin{enumerate}

\item\label{theorem:completeLattice:1} $\MMM' \MODELLEQ \MMM$ if and
  only if $\THEORY{\MMM} \subseteq \THEORY{\MMM'}$.

\item\label{theorem:completeLattice:2} $\MMM' \MODELEQ \MMM$ if and
  only if $\THEORY{\MMM} = \THEORY{\MMM'}$.

\end{enumerate}
\end{theorem}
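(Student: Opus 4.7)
The plan is to deduce claim (\ref{theorem:completeLattice:2}) from (\ref{theorem:completeLattice:1}): unfolding $\MODELEQ = \SIM \cap \SIM^{-1}$ gives that $\MMM' \MODELEQ \MMM$ iff both $\MMM' \MODELLEQ \MMM$ and $\MMM \MODELLEQ \MMM'$ hold, which by (\ref{theorem:completeLattice:1}) is equivalent to $\THEORY{\MMM} \subseteq \THEORY{\MMM'}$ and $\THEORY{\MMM'} \subseteq \THEORY{\MMM}$, i.e.~$\THEORY{\MMM} = \THEORY{\MMM'}$. So the real work is (\ref{theorem:completeLattice:1}).

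For the ``only if'' direction of (\ref{theorem:completeLattice:1}) I would prove by induction on $\phi$ the following strengthening: for any simulation $\RRR$ from a cathoristic transition system $\LLL_1 = (S_1, \rightarrow_1, \lambda_1)$ to $\LLL_2 = (S_2, \rightarrow_2, \lambda_2)$, whenever $(s, s') \in \RRR$ and $(\LLL_1, s) \models \phi$, then $(\LLL_2, s') \models \phi$. The cases $\TRUE$ and $\AND$ are immediate, and $\MAY{a}{\psi}$ follows from the transition-matching clause of simulation together with the induction hypothesis. The interesting case is $\fBang A$: from $(\LLL_1, s) \models \fBang A$ we have $\lambda_1(s) \subseteq A$, and the new simulation clause $\lambda_2(s') \subseteq \lambda_1(s)$ yields $\lambda_2(s') \subseteq A$, hence $(\LLL_2, s') \models \fBang A$. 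Applying this at the root pair $(x_1, x_2)$ gives $\THEORY{\MMM} \subseteq \THEORY{\MMM'}$.

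For the harder ``if'' direction I would define the candidate relation
\[
\RRR = \{(s, s') \in S_1 \times S_2 \mid \THEORY{(\LLL_1, s)} \subseteq \THEORY{(\LLL_2, s')}\},
\]
which contains $(x_1, x_2)$ by hypothesis, and verify that $\RRR$ is a simulation from $\MMM$ to $\MMM'$. For the transition clause, if $s \TRANS{a}_1 t$, then $(\LLL_1, s) \models \MAY{a}{\TRUE}$, so $(\LLL_2, s')$ has an $a$-successor, which by determinism of $\LLL_2$ is a unique $t'$. Moreover, for every $\phi \in \THEORY{(\LLL_1, t)}$ we have $(\LLL_1, s) \models \MAY{a}{\phi}$, hence $(\LLL_2, s') \models \MAY{a}{\phi}$, and determinism again forces $(\LLL_2, t') \models \phi$; thus $(t, t') \in \RRR$. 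For the label clause, well-sizedness is what makes the argument go through: if $\lambda_1(s) = \Sigma$ the inclusion $\lambda_2(s') \subseteq \lambda_1(s)$ is automatic, while if $\lambda_1(s)$ is finite then $(\LLL_1, s) \models \fBang \lambda_1(s)$, so by the $\RRR$-hypothesis $(\LLL_2, s') \models \fBang \lambda_1(s)$, giving $\lambda_2(s') \subseteq \lambda_1(s)$.

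The main obstacle is this ``if'' direction, and inside it the label clause. Two features of the semantics are doing all the work and must be invoked explicitly: determinism of $\LLL_2$, which lets us identify the unique successor forced to witness each $\MAY{a}{\phi}$ and makes the inductive transfer of theories through transitions coherent; and well-sizedness, which guarantees that every restrictive state-labelling is \emph{internally describable} by a tantum formula $\fBang \lambda_1(s)$. Without well-sizedness an arbitrary infinite $\lambda_1(s) \neq \Sigma$ would not be witnessed by any formula of \cathoristic{}, and the inclusion $\lambda_2(s') \subseteq \lambda_1(s)$ could fail despite theory inclusion at $(s,s')$.
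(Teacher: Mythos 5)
Your proposal is correct, and its overall architecture matches the paper's: part (2) is read off from part (1), the ``only if'' direction is an induction on $\phi$ using the extra label clause of cathoristic simulations for the $\fBang A$ case, and the ``if'' direction takes the theory-inclusion relation $\RRR = \{(s,s') \mid \THEORY{(\LLL_1,s)} \subseteq \THEORY{(\LLL_2,s')}\}$ as the candidate simulation. Your label-clause argument (split on whether $\lambda_1(s)$ is finite, using well-sizedness and the formula $\fBang\lambda_1(s)$) is exactly the paper's. Where you genuinely diverge is the transition clause of the ``if'' direction. The paper produces the required successor $y'$ by observing $x \models \MAY{a}\CHAR{(\LLL,y)}$ and transporting the \emph{characteristic formula} of the target state across the theory inclusion; you instead use $\MAY{a}\TRUE$ to obtain \emph{some} $a$-successor, invoke determinism of $\LLL_2$ to make it unique, and then transfer each $\phi \in \THEORY{(\LLL_1,t)}$ one at a time, with determinism again forcing every witness of $\MAY{a}\phi$ to be that same $t'$. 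Your route is more elementary and, arguably, more robust: $\CHAR{\cdot}$ is only introduced later in the paper, is defined there for tree models, and is a finite formula only for acyclic models with well-sized labels, so the paper's appeal to it for an arbitrary cathoristic model glosses over a finiteness issue that your argument avoids entirely. The trade-off is that your argument leans irrevocably on determinism, whereas the characteristic-formula device is the one that generalises to non-deterministic settings (where a single formula must pin down one successor satisfying a whole conjunction of properties). Your closing remarks correctly identify determinism and well-sizedness as the two load-bearing hypotheses.
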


\begin{proof}
For (\ref{theorem:completeLattice:1}) assume $\MMM' \MODELLEQ \MMM$
and $\MMM \models \phi$.  We must show $\MMM' \models \phi$.  Let
$\MMM = (\LLL, w)$ and $\MMM' = (\LLL', w')$.  The proof proceeds by
induction on $\phi$.  The cases for $\top$ and $\land$ are trivial.
Assume $\phi = \MAY{a}\psi$ and assume $(\LLL, w) \models
\MAY{a}\psi$.  Then $w \xrightarrow{a} x$ and $(\LLL, x) \models
\psi$.  As $\MMM'$ simulates $\MMM$, there is an $x'$ such that
$(x,x') \in R$ and $w' \xrightarrow{a} x'$.  By the induction
hypothesis, $(\LLL', x') \models \psi$.  Therefore, by the semantic
clause for $\MAY{}$, $(\LLL', w') \models \MAY{a}\psi$.  Assume now
that $\phi = \; ! \; A$, for some finite $A \subseteq \Sigma$, and
that $(\LLL, w) \models \; ! \; A$.  By the semantic clause for $!$,
$\lambda(w) \subseteq A$.  Since $(\LLL', w') \MODELLEQ (\LLL, w)$, by
the definition of simulation of cathoristic transition systems, $\lambda(w)
\supseteq \lambda'(w')$.  Therefore, $\lambda'(w') \subseteq
\lambda(w) \subseteq A$.  Therefore, by the semantic clause for $!$,
$(\LLL', w') \models \; ! \; A$.

For the other direction, let $\MMM = (\LLL, w)$ and $\MMM' = (\LLL',
w')$.  Assume $\THEORY{\MMM} \subseteq \THEORY{\MMM'} $. We need to
show that $\MMM'$ simulates $\MMM$.  In other words, we need to
produce a relation $R \subseteq S \times S'$ where $S$ is the state
set of $\LLL$, $S'$ is the state set for $\LLL'$ and $(w,w') \in R$
and $R$ is a simulation from $(\LLL, w)$ to $ (\LLL', w')$.  Define $R
= \{(x,x') \; | \; \THEORY{ (\LLL, x)} \subseteq \THEORY{ (\LLL',
  x')}\}$.  Clearly, $(w,w') \in R$, as $\THEORY{(\LLL, w)} \subseteq
\THEORY{(\LLL', w')} $.  To show that $R$ is a simulation, assume $x
\xrightarrow{a} y$ in $\LLL$ and $(x,x') \in R$. 
We need to provide a
$y'$ such that $x' \xrightarrow{a} y'$ in $\LLL'$ and $(y,y') \in R$.  
Consider the formula $\MAY{a}\CHAR{(\LLL, y)}$. 
Now $x \models \MAY{a}\CHAR{(\LLL, y)}$, and since $(x,x') \in R$, $x' \models \MAY{a}\CHAR{(\LLL, y)}$.
By the semantic clause for $\MAY{a}$, if $x' \models \MAY{a}\CHAR{(\LLL, y)}$ then there is a $y'$ such that 
$y' \models \CHAR{(\LLL, y)}$.
We need to show $(y,y') \in R$, i.e. that $y \models \phi$ implies $y' \models \phi$ for all $\phi$.
Assume $y \models \phi$. 
Then by the definition of $\CHAR$, $\CHAR{(\LLL, y)} \models \phi$.
Since $y' \models \CHAR{(\LLL, y)}$, $y' \models \phi$. 
So $(y,y') \in R$, as required.

Finally,we need to show that whenever $(x,x') \in R$, then $\lambda(x)
\supseteq \lambda'(x')$.  Assume, first, that $\lambda(x)$ is finite.
Then $(\LLL, x) \models \; ! \; \lambda(x)$.  But as $(x,x') \in R$,
$\THEORY{(\LLL, x)} \subseteq \THEORY{(\LLL', x')} $, so $(\LLL', x')
\models \; ! \; \lambda(x)$.  But, by the semantic clause for $!$,
$(\LLL', x') \models \; ! \; \lambda(x)$ iff $\lambda'(x') \subseteq
\lambda(x)$.  Therefore $\lambda(x) \supseteq \lambda'(x')$.  If, on
the other hand, $\lambda(x)$ is infinite, then $\lambda(x) = \Sigma$
(because the only infinite state labelling that we allow is
$\Sigma$). Every state labelling is a subset of $\Sigma$, so here too,
$\lambda(x) = \Sigma \supseteq \lambda'(x')$.  

This establishes (\ref{theorem:completeLattice:1}), and
(\ref{theorem:completeLattice:2}) is immediate from the definitions.

\end{proof}

\NI Theorem
\ref{theorem:completeLattice}.\ref{theorem:completeLattice:1}
captures one way in which the model theory of
classical and \cathoristic{} differ. In classical logic the theory of
each model is complete, and $\THEORY{\CAL{M}} \subseteq
\THEORY{\CAL{N}}$ already implies that $\THEORY{\CAL{M}} =
\THEORY{\CAL{N}}$, i.e.~$\CAL{M}$ and $\CAL{N}$ are elementarily
equivalent. \Cathoristic{}'s lack of negation changes this drastically, and
gives $\MODELLEQ$ the structure of a non-trivial bounded lattice as we
shall demonstrate below.

Theorem \ref{theorem:completeLattice} has various
consequences.

\begin{corollary}
\begin{enumerate}

\item If $\phi$ has a model then it has a model whose underlying
  transition system is a tree, i.e.~all states except for the start state
  have exactly one predecessor, and the start state has no predecessors.

\item If $\phi$ has a model then it has a model where every state is
  reachable from the start state.

\end{enumerate}
\end{corollary}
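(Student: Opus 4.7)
My plan is to invoke Theorem~\ref{theorem:completeLattice}.\ref{theorem:completeLattice:2}, which says that two models satisfy the same formulae precisely when they are mutually similar. Hence, for each part, it suffices to exhibit, from an arbitrary model of $\phi$, another model of the required structural shape that is mutually similar to it. Note that either clause of the corollary is then immediate, since $\MMM \MODELEQ \MMM'$ together with $\MMM \models \phi$ yields $\MMM' \models \phi$.

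For part~(2), let $\MMM = (\LLL, w)$ with $\LLL = (S, \rightarrow, \lambda)$. Take $S' \subseteq S$ to be the set of states reachable from $w$ along finitely many transitions, and let $\LLL'$ be the restriction of $\LLL$ to $S'$, inheriting $\rightarrow$ and $\lambda$. Determinism, admissibility, and well-sizedness are all preserved under this restriction because they are local conditions on individual states and their outgoing transitions. The identity relation on $S'$ is then a simulation in both directions between $(\LLL', w)$ and $(\LLL, w)$: the transitions agree by construction, and the labellings are literally identical on $S'$, so the inclusion $\lambda(x) \supseteq \lambda'(x)$ holds trivially (and likewise in reverse). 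Thus $(\LLL', w) \MODELEQ (\LLL, w)$.

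For part~(1), we first apply part~(2), so we may assume every state of $\MMM$ is reachable from the start state $w$. We then form the tree-unfolding $\LLL^t = (S^t, \rightarrow^t, \lambda^t)$ in the standard way: $S^t$ consists of all finite sequences $a_1 \cdots a_n$ of actions for which there is a path $w \TRANS{a_1} w_1 \TRANS{a_2} \cdots \TRANS{a_n} w_n$ in $\LLL$, with the empty sequence $\varepsilon$ as the root. Because $\LLL$ is deterministic, such a sequence determines $w_n$ uniquely, so we may define $\lambda^t(a_1 \cdots a_n) = \lambda(w_n)$ and add the transition $a_1 \cdots a_n \TRANS{a}^t a_1 \cdots a_n a$ exactly when $w_n \TRANS{a} w_{n+1}$ in $\LLL$. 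The resulting structure is a deterministic tree whose root has no predecessors; admissibility and well-sizedness transfer directly from $\LLL$ because $\lambda^t$ is defined by copying $\lambda$ along the unique path. The relation $\RRR = \{ (a_1 \cdots a_n,\, w_n) \}$, together with its converse, constitutes mutual simulations between $(\LLL^t, \varepsilon)$ and $(\LLL, w)$: transitions correspond by construction, and $\lambda^t(a_1 \cdots a_n) = \lambda(w_n)$ makes the label clauses hold with equality in both directions. Hence $(\LLL^t, \varepsilon) \MODELEQ (\LLL, w)$.

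The only point requiring care is the label-inclusion condition in the definition of simulation of cathoristic transition systems; in both constructions above it holds with outright equality, so no combinatorial work is needed beyond checking that determinism makes the unfolding's $\lambda^t$ well-defined. The two conclusions of the corollary now follow from Theorem~\ref{theorem:completeLattice}.\ref{theorem:completeLattice:2}.
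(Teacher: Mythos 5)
Your proposal is correct and takes essentially the same route as the paper: the paper's proof simply asserts that $\MODELEQ$ is closed under removal of unreachable states and under tree-unfolding, and you have supplied the explicit mutual simulations witnessing those two closure properties before invoking the characterisation of elementary equivalence (Theorem \ref{theorem:completeLattice}). Nothing is missing; your checks of determinism, admissibility and well-sizedness, and the observation that the label condition holds with equality, are exactly the details the paper leaves implicit.
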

\begin{proof}
Both are straightforward because $\MODELEQ$ is closed under
tree-unfolding as well as under removal of states not reachable from
the start state.
\end{proof}

\subsection{Quotienting models}

\NI The relation $\MODELLEQ$ is not a partial order, only a
pre-order. For example 
\[
   \MMM_1 = ( (\{w\}, \emptyset, \{w \mapsto \Sigma\}), w)
      \qquad\qquad
   \MMM_2 = ( (\{v\}, \emptyset, \{v \mapsto \Sigma\}), v)
\]

\NI are two distinct models with $\MMM_1 \MODELLEQ \MMM_2$ and $\MMM_2
\MODELLEQ \MMM_1$. The difference between the two models, the name of
the unique state, is trivial and not relevant for the formulae they
make true: $\THEORY{\MMM_1} = \THEORY{\MMM_2}$.  As briefly mentioned
in the mathematical preliminaries (Section \ref{preliminaries}), we
obtain a proper partial-order by simply quotienting models:

\[
   \MMM \MODELEQ \MMM'
      \qquad\text{iff}\qquad
   \MMM \MODELLEQ \MMM' \ \text{and}\ \MMM' \MODELLEQ \MMM
\]

\NI and then ordering the $\MODELEQ$-equivalence classes as follows:
\[
    [\MMM]_{\MODELEQ} \MODELLEQ [\MMM']_{\MODELEQ}
      \qquad\text{iff}\qquad
    \MMM \MODELLEQ \MMM'.
\]

\NI Greatest lower and least upper bounds can also be computed on
representatives:
\[
   \BIGLUB \{[\MMM]_{\MODELEQ} \ |\ \MMM \in S\ \} \quad=\quad [\BIGLUB S]_{\MODELEQ}
\]
whenever $\BIGLUB S$ exists, and likewise for the greatest lower bound.
We also define 
\[
   [\MMM]_{\MODELEQ} \models \phi 
      \qquad\text{iff}\qquad
   \MMM \models \phi.
\]

\NI It is easy to see that these definitions are independent of the
chosen representatives.

In the rest of this text we will usually be sloppy and work with
concrete models instead of $\MODELEQ$-equivalence classes of models
because the quotienting process is straightforward and not especially
interesting. We can do this because all relevant subsequent
constructions are also representation independent.  

\subsection{The bounded lattice of models}
\label{boundedlattice}
It turns out that $\MODELLEQ $ on ($\MODELEQ$-equivalence classes of)
models is not just a partial order, but a bounded lattice, except
that a bottom element is missing.

\begin{definition}
We extend the collection of models with a single \emph{bottom} element
$\bot$, where $\bot \models \phi$ for all $\phi$. We also write $\bot$
for $[\bot]_{\MODELEQ}$.  We extend the relation $\MODELLEQ $ and
stipulate that $\bot \MODELLEQ \MMM$ for all models $\MMM$.
\end{definition}

\begin{theorem}
The collection of (equivalence classes of) models together with
$\bot$, and ordered by $\MODELLEQ$ is a bounded lattice.
\end{theorem}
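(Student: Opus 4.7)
The plan is to verify the bounded-lattice requirements not already secured by the preceding quotient construction, namely the top and bottom elements and binary meets and joins. The bottom $\bot$ is present by declaration, so nothing further is needed there. For the top I would take $\top^\circ = (\{s\}, \emptyset, \{s \mapsto \Sigma\})$: a single state with no outgoing transitions and the maximally permissive labelling $\lambda(s) = \Sigma$. The singleton relation $\{(s, w)\}$ is a simulation from $\top^\circ$ to any $\MMM = (\LLL, w)$, since the transition clause is vacuous and the label clause reduces to $\Sigma \supseteq \lambda(w)$, which always holds. This yields $\MMM \MODELLEQ \top^\circ$ for every $\MMM$; equivalently, via Theorem \ref{theorem:completeLattice}, $\THEORY{\top^\circ}$ consists only of $\TRUE$ (and its conjunctions) and so sits inside every theory.

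For binary joins, given $\MMM_i = (\LLL_i, w_i)$ with $\LLL_i = (S_i, \rightarrow_i, \lambda_i)$, I would construct $\MMM_1 \sqcup \MMM_2$ as the synchronous product reachable from $(w_1, w_2)$: put $(x, y) \TRANS{a} (x', y')$ precisely when both $x \TRANS{a} x'$ in $\LLL_1$ and $y \TRANS{a} y'$ in $\LLL_2$, with $\lambda_\sqcup((x, y)) = \lambda_1(x) \cup \lambda_2(y)$. Determinism and admissibility descend from the factors, and well-sizedness is preserved because a union of two sets each finite or equal to $\Sigma$ is again of that form. The two projections are simulations from the product to each $\MMM_i$, witnessing $\MMM_i \MODELLEQ \MMM_1 \sqcup \MMM_2$. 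Conversely, if $\MMM_i \MODELLEQ \NNN$ for $i = 1, 2$ with witnessing simulations $R_i$ from $\NNN$ to $\MMM_i$, then the pairing $\{(n, (x, y)) : (n, x) \in R_1, \; (n, y) \in R_2\}$ is a simulation from $\NNN$ to the product, establishing $\MMM_1 \sqcup \MMM_2 \MODELLEQ \NNN$ and hence least-upper-boundedness.

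For binary meets, I would define $\MMM_1 \sqcap \MMM_2$ coinductively. At each reachable node paired from $(x, y)$ the outgoing actions are the union of those available in $\MMM_1$ from $x$ and in $\MMM_2$ from $y$: when both factors fire an action $a$ the child is the recursive meet of the two targets, and when only one fires the child is the surviving original subtree; the label is $\lambda_\sqcap = \lambda_1(x) \cap \lambda_2(y)$. The construction collapses to $\bot$ precisely when either admissibility is violated at some reachable node (an outgoing action lies outside the opposing label) or a recursively computed subtree-meet is already $\bot$; otherwise, a simulation-pairing argument analogous to the one for joins shows the result is the greatest lower bound. The principal obstacle is exactly this meet step: one has to reconcile the asymmetry between the two models against admissibility, pinpoint the precise incompatibilities that force $\bot$, and handle the coinductive definition for potentially infinite trees. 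Once these points are dealt with, the lattice laws themselves fall out by the same simulation bookkeeping used for joins.
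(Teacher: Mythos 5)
Your proposal is correct and follows essentially the same route as the paper: the single-state model labelled $\Sigma$ as top, $\bot$ adjoined by fiat, the join as the shared-transition product with unioned labels, and the meet as a merge with intersected labels that collapses to $\bot$ exactly when an outgoing action violates the opposing label. The meet step you flag as the principal obstacle is precisely where the paper invests its effort (the $\mathsf{merge}$/$\mathsf{inconsistent}$ machinery and the accompanying glb lemma), and your sketched construction matches it.
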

\begin{proof}
The topmost element in the lattice is the model $( (\{w\}, \emptyset,
\{w \mapsto \Sigma\}), w)$ (for some state $w$): this is the model
with no transitions and no transition restrictions.  The bottom
element is $\bot$.  Below, we shall define two functions $\mathsf{glb}$ and $\mathsf{lub}$, and show that they satisfy the required properties of $\sqcap$ and $\sqcup$ respectively.
\end{proof}

\Cathoristic{} is unusual in that every set of models has a unique (up to
isomorphism) least upper bound. Logics with disjunction, negation or implication do
not have this property.  

Consider propositional logic, for example.
Define a model of propositional logic as a set of atomic formulae that are set to true.
Then we have a natural ordering  on propositional logic models:
\[
\MMM \leq \MMM' \quad\text{ iff }\quad \MMM \supseteq \MMM'
\]
Consider all the possible models that satisfy $\phi \lor \psi$:
\[
\{\phi\} \qquad
\{\psi\} \qquad
\{\phi, \psi\} \qquad
\{\phi, \psi, \xi\}\qquad
\cdots
\]
This set of satisfying models has no least upper bound, since $\{\phi\} \nleq \{\psi\}$ and $\{\psi\} \nleq \{\phi\}$.
Similarly,  the set of models satisfying $\neg (\neg \phi \land \neg \psi)$ has no least upper bound.

The fact that \cathoristic{} models have unique least upper bounds is used in proving
completeness of our inference rules, and implementing the quadratic-time
decision procedure.

\subsection{Computing the least upper bound of the models that satisfy a formula}
\label{simpl}

In our decision procedure, we will see if $\phi \models \psi$ by constructing the least upper bound of the models satisfying $\phi$, and checking whether it satisfies $\psi$.

In this section, we define a function $\SIMPL{\phi}$ that satisfies the following condition:
\[
\SIMPL{\phi} = \bigsqcup \{ \MMM | \MMM \models \phi \}
\]
Define $\SIMPL{\phi}$ as:
\begin{eqnarray*}
  \SIMPL{\top} &\ = \ & ( (\{v\}, \emptyset, \{v \mapsto \Sigma\}), v)  \\
  \SIMPL{\fBang A} & = & ( (\{v\}, \emptyset, \{v \mapsto A\}), v)  \\
  \SIMPL{\phi_1 \AND \phi_2} & = & \mathsf{glb} (\SIMPL{\phi_1}, \SIMPL{\phi_2})  \\
  \SIMPL{\langle a \rangle \phi} 
     & = & ( (S \cup \{w'\}, \rightarrow \cup (w' \xrightarrow{a} w), \lambda \cup \{w' \mapsto \Sigma\}]), w')  \\
		& & \mbox{where }\SIMPL{\phi} = ( (S, \rightarrow, \lambda), w) \mbox{and } w' \mbox{ is a new state} \\
                &&  \mbox{not appearing in }S 
\end{eqnarray*}

\begin{figure}[H]
\centering
\begin{tikzpicture}[node distance=1.3cm,>=stealth',bend angle=45,auto]
  \tikzstyle{place}=[circle,thick,draw=blue!75,fill=blue!20,minimum size=6mm]
  \tikzstyle{red place}=[place,draw=red!75,fill=red!20]
  \tikzstyle{transition}=[rectangle,thick,draw=black!75,
  			  fill=black!20,minimum size=4mm]
  \tikzstyle{every label}=[red]
  \begin{scope}
    \node [place] (w1) {$\Sigma$};
    \node [place] (e1) [below of=w1] {$\Sigma$}
      edge [pre]  node[swap] {a}                 (w1);      
  \end{scope}
  \begin{scope}[xshift=4cm]
    \node [place] (w1) {$\Sigma$};
    \node [place] (e1) [below of=w1] {$\Sigma$}
      edge [pre]  node[swap] {b}                 (w1);      
  \end{scope} 
  \begin{scope}[xshift=8cm]
    \node [place] (w1) {$\Sigma$};
    \node [place] (e1) [below left of=w1] {$\Sigma$}
      edge [pre]  node[swap] {a}                 (w1);      
    \node [place] (e1) [below right of=w1] {$\Sigma$}
      edge [pre]  node[swap] {b}                 (w1);      
  \end{scope}
  \draw (2,0) node {$\sqcap$};
  \draw (6,0) node {$=$};
\end{tikzpicture}
\caption{Example of $\sqcap$.}
\end{figure}

\begin{figure}[H]
\centering
\begin{tikzpicture}[node distance=1.3cm,>=stealth',bend angle=45,auto]
  \tikzstyle{place}=[circle,thick,draw=blue!75,fill=blue!20,minimum size=6mm]
  \tikzstyle{red place}=[place,draw=red!75,fill=red!20]
  \tikzstyle{transition}=[rectangle,thick,draw=black!75,
  			  fill=black!20,minimum size=4mm]
  \tikzstyle{every label}=[red]
  \begin{scope}
    \node [place] (w1) {$\Sigma$};
    \node [place] (e1) [below of=w1] {$\{b\}$}
      edge [pre]  node[swap] {a}                 (w1);      
    \node [place] (e2) [below of=e1] {$\Sigma$}
      edge [pre]  node[swap] {b}                 (e1);      
  \end{scope}
  \begin{scope}[xshift=4cm]
    \node [place] (w1) {$\Sigma$};
    \node [place] (e1) [below of=w1] {$\Sigma$}
      edge [pre]  node[swap] {a}                 (w1);      
    \node [place] (e2) [below right of=e1] {$\Sigma$}
      edge [pre]  node[swap] {c}                 (e1);      
    \node [place] (e3) [below left of=e1] {$\Sigma$}
      edge [pre]  node[swap] {b}                 (e1);      
  \end{scope} 
  \begin{scope}[xshift=8cm]
    \node (w1) {$\bot$};
  \end{scope}
  \draw (2,0) node {$\sqcap$};
  \draw (6,0) node {$=$};
\end{tikzpicture}
\caption{Example of $\sqcap$.}
\end{figure}

\begin{figure}[H]
\centering
\begin{tikzpicture}[node distance=1.3cm,>=stealth',bend angle=45,auto]
  \tikzstyle{place}=[circle,thick,draw=blue!75,fill=blue!20,minimum size=6mm]
  \tikzstyle{red place}=[place,draw=red!75,fill=red!20]
  \tikzstyle{transition}=[rectangle,thick,draw=black!75,
  			  fill=black!20,minimum size=4mm]
  \tikzstyle{every label}=[red]
  \begin{scope}
    \node [place] (w1) {$\{a,b\}$};
    \node [place] (e1) [below of=w1] {$\Sigma$}
      edge [pre]  node[swap] {a}                 (w1);      
    \node [place] (e2) [below of=e1] {$\Sigma $}
      edge [pre]  node[swap] {b}                 (e1);      
  \end{scope}
  \begin{scope}[xshift=4cm]
    \node [place] (w1) {$\{a,c\}$};
    \node [place] (e1) [below of=w1] {$\{b,c\}$}
      edge [pre]  node[swap] {a}                 (w1);      
    \node [place] (e2) [below of=e1] {$\Sigma $}
      edge [pre]  node[swap] {c}                 (e1);      
    \node [place] (e3) [below of=e2] {$\Sigma $}
      edge [pre]  node[swap] {d}                 (e2);      
  \end{scope} 
  \begin{scope}[xshift=8cm]
    \node [place] (w1) {$\{a\}$};
    \node [place] (e1) [below of=w1] {$\{b, c\} $}
      edge [pre]  node[swap] {a}                 (w1);      
    \node [place] (e2) [below left of=e1] {$\Sigma $}
      edge [pre]  node[swap] {b}                 (e1);      
    \node [place] (e3) [below right of=e1] {$\Sigma $}
      edge [pre]  node[swap] {c}                 (e1);      
    \node [place] (e4) [below of=e3] {$\Sigma $}
      edge [pre]  node[swap] {d}                 (e3);      
  \end{scope}
  \draw (2,0) node {$\sqcap$};
  \draw (6,0) node {$=$};
\end{tikzpicture}
\caption{Example of $\sqcap$. }
\end{figure}

\begin{figure}[H]
\centering
\begin{tikzpicture}[node distance=1.3cm,>=stealth',bend angle=45,auto]
  \tikzstyle{place}=[circle,thick,draw=blue!75,fill=blue!20,minimum size=6mm]
  \tikzstyle{red place}=[place,draw=red!75,fill=red!20]
  \tikzstyle{transition}=[rectangle,thick,draw=black!75,
  			  fill=black!20,minimum size=4mm]
  \tikzstyle{every label}=[red]
  
  \begin{scope}
    \node [place] (w1) {$\Sigma$};
    \node [place] (e1) [below left of=w1] {$\{c\}$}
      edge [pre]  node[swap] {a}                 (w1);      
    \node [place] (e2) [below right of=w1] {$\Sigma$}
      edge [pre]  node[swap] {b}                 (w1);      
    \node [place] (e3) [below of=e2] {$\Sigma$}
      edge [pre]  node[swap] {d}                 (e2);      
  \end{scope}
  
  \begin{scope}[xshift=4cm]
    \node [place] (w1) {$\Sigma$};
    \node [place] (e1) [below left of=w1] {$\Sigma$}
      edge [pre]  node[swap] {a}                 (w1);      
    \node [place] (e2) [below right of=w1] {$\{d\}$}
      edge [pre]  node[swap] {b}                 (w1);      
    \node [place] (e3) [below of=e1] {$\Sigma$}
      edge [pre]  node[swap] {c}                 (e1);      
  \end{scope}

  \begin{scope}[xshift=8cm]
    \node [place] (w1) {$\Sigma$};
    \node [place] (e1) [below left of=w1] {$\{c\}$}
      edge [pre]  node[swap] {a}                 (w1);      
    \node [place] (e2) [below right of=w1] {$\{d\}$}
      edge [pre]  node[swap] {b}                 (w1);      
    \node [place] (e3) [below of=e1] {$\Sigma$}
      edge [pre]  node[swap] {c}                 (e1);      
    \node [place] (e4) [below of=e2] {$\Sigma$}
      edge [pre]  node[swap] {d}                 (e2);      
  \end{scope}
  
  \draw (2,0) node {$\sqcap$};
  \draw (6,0) node {$=$};
\end{tikzpicture}
\caption{Example of $\sqcap$.}
\end{figure}

\NI Note that, by our conventions, $\SIMPL{\phi}$ really returns a
$\MODELEQ$-equivalence class of models.

The only complex case is the clause for $\SIMPL{\phi_1 \AND \phi_2}$,
which uses the $\mathsf{glb}$ function, defined as follows, where we assume
that the sets of states in the two models are disjoint and are trees.
It is easy to see that $\SIMPL{\cdot}$ always returns tree models.

\begin{eqnarray*}
  \mathsf{glb}(\bot, \MMM)  &\ =\ &  \bot  \\
  \mathsf{glb}(\MMM, \bot)      & = &  \bot  
     \\
  \mathsf{glb}(\MMM, \MMM')
     & = & 
     \mathsf{merge}(\mathcal{L}, \mathcal{L}', \{(w,w')\}) 
     \\
     & & \text{where } \MMM = (\mathcal{L}, w) \text{ and } \MMM' = (\mathcal{L'}, w')
\end{eqnarray*}

\NI The $\mathsf{merge}$ function returns $\bot$ if either of its
arguments are $\bot$. Otherwise, it merges the two transition systems
together, given a set of state-identification pairs (a set of pairs of
states from the two transition systems that need to be identified).  The
state-identification pairs are used to make sure that the resulting
model is deterministic.

\begin{eqnarray*}
  \mathsf{merge}(\mathcal{L}, \mathcal{L}', ids) 
     & = & 
  \begin{cases}
    \bot & \text{if } \mathsf{inconsistent}(\mathcal{L}, \mathcal{L}', ids)  \\
    \mathsf{join}(\mathcal{L}, \mathcal{L}') & \text{if } ids = \emptyset  \\
    \mathsf{merge}(\mathcal{L}, \mathcal{L}'', ids')  & \text{else, where }
          \mathcal{L}'' = \mathsf{applyIds}(ids, \mathcal{L}') \\
          & \text{and } ids' = \mathsf{getIds}(\mathcal{L}, \mathcal{L}', ids)
  \end{cases}
\end{eqnarray*}

\NI The $\mathsf{inconsistent}$ predicate is true if there is pair of
states in the state-identification set such that the out-transitions
of one state is incompatible with the state-labelling on the other
state:
\begin{eqnarray*}
  \lefteqn{\mathsf{inconsistent}(\mathcal{L}, \mathcal{L}', ids)}\qquad
     \\
     &\text{ iff }& \exists (w,w') \in ids \; \text{with} \; \mathsf{out}(\mathcal{L},w) \nsubseteq \lambda'(w') \; \text{or} \; \mathsf{out}(\mathcal{L}',w') \nsubseteq \lambda(w).
\end{eqnarray*}
 
\NI Here the $\mathsf{out}$ function returns all the actions
immediately available from the given state $w$.
\[
  \mathsf{out}(((S,\rightarrow,\lambda),w)) 
     \ =\  \{ a \fOr \exists w' . w \xrightarrow{a} w'\} 
\]

\NI The $\mathsf{join}$ function takes the union of the two transition systems.
\[
   \mathsf{join}((S, \rightarrow,\lambda), (S', \rightarrow', \lambda')) 
      \quad=\quad
   (S \cup S', \rightarrow \cup \rightarrow', \lambda'')
\]

\NI Here $\lambda''$ takes the constraints arising from both, $\lambda$ and
$\lambda'$ into account: 
\[
   \lambda''(s) 
      \quad = \quad
   \begin{array}{l}
      \{\lambda(s) \cap \lambda'(s) \; | \; s \in S \cup S'\} \\ \cup \ 
      \{\lambda(s)\; |\; s\in S \setminus S' \} \\ \cup \ 
      \{\lambda(s)\; |\; s\in S' \setminus S \}. 
   \end{array}
\]

\NI The $\mathsf{applyIds}$ function applies all the
state-identification pairs as substitutions to the Labelled Transition
System:
\[
   \mathsf{applyIds}(ids, (S, \rightarrow, \lambda)) 
      \quad=\quad 
   (S', \rightarrow', \lambda')
\]

\NI where
\begin{eqnarray*}
  S' &\quad =\quad & S \; [ w / w' \; | \; (w,w') \in ids] \\
  \rightarrow' & = & \rightarrow \; [ w / w' \; | \; (w,w') \in ids] \\
  \lambda' & = & \lambda \; [ w / w' \; | \; (w,w') \in ids]
\end{eqnarray*}

\NI Here $[ w / w' \; | \; (w,w') \in ids]$ means the simultaneous
substitution of $w$ for $w'$ for all pairs $(w, w')$ in $ids$.  The
$\mathsf{getIds}$ function returns the set of extra
state-identification pairs that need to be added to respect
determinism:
\[
   \mathsf{getIds}(\mathcal{L}, \mathcal{L}', ids) 
      \quad=\quad 
   \{(x,x') \; | \; (w,w') \in ids, \exists a \; . \; w \xrightarrow{a} x, w' \xrightarrow{a} x'\}
\]

\NI The function $\SIMPL{\cdot}$ has the expected properties, as the next
lemma shows.  

\begin{lemma}
$\SIMPL{\phi} \models \phi.$
\end{lemma}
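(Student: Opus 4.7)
The plan is a routine structural induction on $\phi$, with the real work already packaged into the correctness of $\mathsf{glb}$. The base case $\phi = \top$ is immediate from Definition~\ref{ELsatisfaction}. The base case $\phi = \fBang A$ is also immediate: $\SIMPL{\fBang A}$ is the one-state model with $\lambda(v) = A$, and $A \subseteq A$ gives $\SIMPL{\fBang A} \models \fBang A$ via the semantic clause for tantum.

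For $\phi = \MAY{a}\psi$, I would invoke the induction hypothesis $\SIMPL{\psi} \models \psi$ and write $\SIMPL{\psi} = (\LLL, w)$ with $\LLL = (S, \rightarrow, \lambda)$. The definition of $\SIMPL{\cdot}$ adds a fresh root $w'$ with a single $a$-transition into $w$ and labels $w'$ with $\Sigma$. Because $w'$ is new and $\SIMPL{\psi}$ is a tree, no transition or labelling reachable from $w$ is disturbed by this extension, so the submodel of $\SIMPL{\MAY{a}\psi}$ rooted at $w$ is bisimilar to $\SIMPL{\psi}$. By bisimulation invariance of satisfaction, this submodel satisfies $\psi$, and the semantic clause for $\MAY{a}$ then yields $\SIMPL{\MAY{a}\psi} \models \MAY{a}\psi$.

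For $\phi = \phi_1 \AND \phi_2$, the induction hypothesis gives $\SIMPL{\phi_i} \models \phi_i$ for $i = 1,2$. Assuming the companion result (promised immediately after the bounded-lattice theorem) that $\mathsf{glb}$ computes the lattice meet with respect to $\MODELLEQ$, we have $\mathsf{glb}(\SIMPL{\phi_1}, \SIMPL{\phi_2}) \MODELLEQ \SIMPL{\phi_i}$ for $i = 1,2$. Theorem~\ref{theorem:completeLattice}.\ref{theorem:completeLattice:1} then gives $\THEORY{\SIMPL{\phi_i}} \subseteq \THEORY{\mathsf{glb}(\SIMPL{\phi_1}, \SIMPL{\phi_2})}$, so the glb satisfies each $\phi_i$ and hence the conjunction.

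The main obstacle is thus not the induction itself but the auxiliary claim that $\mathsf{glb}$ is correct: one must verify that the iterative $\mathsf{merge}/\mathsf{applyIds}/\mathsf{getIds}$ procedure terminates (the identification relation propagates only along pairs of transitions with matching labels in two finite trees, so the procedure exhausts the reachable state pairs), that its output is a well-formed cathoristic transition system with admissibility restored by the intersected labels $\lambda \cap \lambda'$, and that the $\bot$ case triggered by $\mathsf{inconsistent}$ faithfully reflects the absence of any common lower bound strictly above $\bot$. Once that auxiliary lemma is separately discharged, the four cases above close cleanly.
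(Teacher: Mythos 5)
Your proof is correct and follows exactly the route the paper takes: the paper's entire proof of this lemma is the single line ``By induction on $\phi$,'' and your four cases (with the conjunction case resting on $\mathsf{glb}$ being a lower bound, which the paper establishes in the immediately following lemma) supply precisely the details it leaves implicit.
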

\begin{proof}
By induction on $\phi$.
\end{proof}

\begin{lemma}
$\mathsf{glb}$ as defined is the greatest lower bound
\end{lemma}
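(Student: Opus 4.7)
The plan is to verify, with the help of Theorem~\ref{theorem:completeLattice} when convenient, that $\mathsf{glb}(\MMM, \MMM')$ really is the greatest lower bound of $\{\MMM, \MMM'\}$ with respect to $\MODELLEQ$. The clauses involving $\bot$ are immediate since $\bot \MODELLEQ \NNN$ for every $\NNN$. First I would note that on finite tree inputs the recursion in $\mathsf{merge}$ terminates: each call either returns $\bot$, hits the $ids = \emptyset$ base, or produces a new $ids'$ whose pairs sit strictly deeper in the two trees, so no pair is ever revisited. Write $\MMM^* = \mathsf{glb}(\MMM, \MMM')$ and, when $\MMM^* \ne \bot$, let $\sim$ be the equivalence on $S_\MMM \sqcup S_{\MMM'}$ generated by the identification pairs accumulated through the recursion. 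Unfolding $\mathsf{applyIds}$ and $\mathsf{join}$, one sees that $S_{\MMM^*}$ is the quotient of $S_\MMM \sqcup S_{\MMM'}$ by $\sim$, with $\lambda_{\MMM^*}([x]_\sim) = \lambda_\MMM(x) \cap \lambda_{\MMM'}(x')$ whenever $x \sim x'$, and just $\lambda_\MMM(x)$ or $\lambda_{\MMM'}(x')$ otherwise.

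For the lower-bound direction, assume $\MMM^* \ne \bot$ and define $\RRR_1 = \{(x, [x]_\sim) : x \in S_\MMM\}$. Every transition $x \TRANS{a}_\MMM y$ survives $\mathsf{join}$ and so descends to $[x]_\sim \TRANS{a}_{\MMM^*} [y]_\sim$, making $\RRR_1$ a simulation on the underlying transition systems. The label check $\lambda_\MMM(x) \supseteq \lambda_{\MMM^*}([x]_\sim)$ is immediate from the description of $\lambda_{\MMM^*}$ above, so $\RRR_1$ witnesses $\MMM^* \MODELLEQ \MMM$. A symmetric argument gives $\MMM^* \MODELLEQ \MMM'$.

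For the greatest property, let $\NNN$ be any common lower bound, witnessed by simulations $\RRR : \MMM \to \NNN$ and $\RRR' : \MMM' \to \NNN$. Since every system involved is deterministic, each state is uniquely determined by its path from the root; in particular, if $x \sim x'$ then both $x$ and $x'$ lie at a common path $\pi$, and $\RRR$ and $\RRR'$ both relate them to the unique $\NNN$-state $z_\pi$ reached by $\pi$ from the root of $\NNN$. Define $\RRR'' \subseteq S_{\MMM^*} \times S_\NNN$ by $([x]_\sim, z_\pi) \in \RRR''$. Transitions of $\MMM^*$ come from either $\MMM$ or $\MMM'$, and $\RRR$ or $\RRR'$ supplies the matching $\NNN$-transition. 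When $x \sim x'$ the label check gives $\lambda_\NNN(z_\pi) \subseteq \lambda_\MMM(x) \cap \lambda_{\MMM'}(x') = \lambda_{\MMM^*}([x]_\sim)$, and unidentified states are even simpler. Thus $\NNN \MODELLEQ \MMM^*$, as required.

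The main obstacle is the case $\MMM^* = \bot$, where I must show that no non-$\bot$ common lower bound exists, so that $\bot$ is vacuously greatest. If $\mathsf{inconsistent}$ fires at some identified pair $(x, x')$, then without loss of generality there is a transition $x \TRANS{a}_\MMM y$ with $a \notin \lambda_{\MMM'}(x')$. A non-$\bot$ common lower bound $\NNN$ would, by the path argument of the previous paragraph, yield a state $z$ of $\NNN$ simultaneously related to $x$ under $\RRR$ and to $x'$ under $\RRR'$. From $\RRR$ the transition $x \TRANS{a} y$ is matched in $\NNN$, so $a \in \lambda_\NNN(z)$ by admissibility; from $\RRR'$, however, $\lambda_\NNN(z) \subseteq \lambda_{\MMM'}(x')$, contradicting $a \notin \lambda_{\MMM'}(x')$. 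Hence only $\bot$ can be a common lower bound of $\MMM$ and $\MMM'$, finishing the proof.
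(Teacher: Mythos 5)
Your proof follows the same overall strategy as the paper's: exhibit an explicit simulation from $\MMM$ into $\mathsf{glb}(\MMM,\MMM')$ for the lower-bound half, and combine the two given simulations into one from $\mathsf{glb}(\MMM,\MMM')$ to $\NNN$ for the greatest half. However, your version is more careful in two places where the paper's argument is actually incomplete. First, the paper dismisses the case $\mathsf{glb}(\MMM,\MMM')=\bot$ with the remark that $\bot \MODELLEQ \NNN$ for all $\NNN$, which settles the lower-bound property but not greatestness; one must additionally show that an inconsistent pair forces every common lower bound to be $\bot$, and your admissibility argument at the clashing identified pair supplies exactly this missing step. Second, in the greatest half the paper asserts that $\lambda_{\mathsf{glb}(\MMM,\MMM')}(x)=\lambda_{\MMM}(x)$ for every non-root $x$, treating the intersection of labels as occurring only at the start states; in fact identifications propagate down every shared path, so non-root states can also carry intersected labels, and the paper's case split does not cover them. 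Your path-and-determinism argument, which relates each identified class to the unique $\NNN$-state reached by the common path and then applies both label conditions, handles this uniformly and correctly. So the proposal is sound and, if anything, repairs the paper's proof rather than merely reproducing it.
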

We will show that:
\begin{itemize}
\item
$\mathsf{glb}(\MMM, \MMM') \MODELLEQ \MMM$ and $\mathsf{glb}(\MMM, \MMM') \MODELLEQ \MMM'$
\item
If $\NNN \MODELLEQ \MMM$ and $\NNN \MODELLEQ \MMM'$, then $\NNN \MODELLEQ \mathsf{glb}(\MMM, \MMM')$
\end{itemize}
If $\MMM$, $ \MMM'$ or $\mathsf{glb}(\MMM, \MMM')$ are equal to $\bot$, then we just apply the rule that $\bot \MODELLEQ m$ for all models $m$. 
So let us assume that $\mathsf{consistent}(\MMM, \MMM')$ and that $\mathsf{glb}(\MMM, \MMM')  \neq \bot$.

\begin{proof}
To show $\mathsf{glb}(\MMM, \MMM') \MODELLEQ \MMM$, we need to provide a simulation $\mathcal{R}$ from $\MMM$ to  $\mathsf{glb}(\MMM, \MMM')$.
If $\MMM = ((S,\rightarrow,\lambda),w)$, then define $\mathcal{R}$ as the identity relation on the states of $S$:
\[
\mathcal{R} = \{(x,x) \; | \; x \in S\}
\]
It is straightforward to show that $\mathcal{R}$ as defined is a simulation from $\MMM$ to  $\mathsf{glb}(\MMM, \MMM')$.
If there is a transition $x \xrightarrow{a} y$ in $\MMM$, then by the construction of $\mathsf{merge}$, there is also a transition $x \xrightarrow{a} y$ in $\mathsf{glb}(\MMM, \MMM')$.
We also need to show that $\lambda_{\MMM}(x) \supseteq \lambda_{\mathsf{glb}(\MMM, \MMM')}(x)$ for all states $x$ in $\MMM$. This is immediate from the construction of $\mathsf{merge}$.

\end{proof}

\begin{proof}
To show that $\NNN \MODELLEQ \MMM$ and $\NNN \MODELLEQ \MMM'$ imply $\NNN \MODELLEQ \mathsf{glb}(\MMM, \MMM')$, assume there is a simulation $\mathcal{R}$ from $\MMM$ to $\NNN$ and there is a simulation $\mathcal{R}'$ from $\MMM'$ to $\NNN$.
We need to provide a simulation $\mathcal{R}*$ from $\mathsf{glb}(\MMM, \MMM')$ to $\NNN$.

Assume the states of $\MMM$ and $\MMM'$ are disjoint.
Define:
\[
\mathcal{R}* = \mathcal{R} \cup \mathcal{R}'
\]
We need to show that $\mathcal{R}*$ as defined is a simulation from $\mathsf{glb}(\MMM, \MMM')$ to $\NNN$.

Suppose $x \xrightarrow{a} y$ in $\mathsf{glb}(\MMM, \MMM')$ and that $(x,x_2) \in \mathcal{R} \cup \mathcal{R}'$.
We need to provide a $y_2$ such that $x_2 \xrightarrow{a} y_2$ in  $\NNN$ and $(y,y_2) \in \mathcal{R} \cup \mathcal{R}'$.
If  $x \xrightarrow{a} y$ in $\mathsf{glb}(\MMM, \MMM')$, then, from the definition of $\mathsf{merge}$, either $x \xrightarrow{a} y$ in $\MMM$ or $x \xrightarrow{a} y$ in $\MMM'$. If the former, and given that $\mathcal{R}$ is a simulation from $\MMM$ to $\NNN$, then there is a $y_2$ such that $(y,y_2) \in \mathcal{R}$ and $x_2 \xrightarrow{a} y_2$ in $\NNN$. But, if $(y,y_2) \in \mathcal{R}$, then also $(y,y_2) \in \mathcal{R} \cup \mathcal{R}'$.

Finally, we need to show that if $(x,y) \in \mathcal{R} \cup \mathcal{R}'$ then
\[
\lambda_{\mathsf{glb}(\MMM, \MMM')}(x) \supseteq \lambda_{\NNN}(y)
\]
If $(x,y) \in \mathcal{R} \cup \mathcal{R}'$ then either $(x,y) \in \mathcal{R}$ or $(x,y) \in \mathcal{R}'$.
Assume the former.
Given that $\mathcal{R}$ is a simulation from $\MMM$ to $\NNN$, we know that if $(x,y) \in \mathcal{R}$, then 
\[
\lambda_{\MMM}(x) \supseteq \lambda_{\NNN}(y)
\]
Let $\MMM = ((S,\rightarrow,\lambda),w)$.
If $x \neq w$ (i.e. $x$ is some state other than the start state), then, from the definition of $\mathsf{merge}$, $\lambda_{\mathsf{glb}(\MMM, \MMM')}(x) = \lambda_{\MMM}(x)$.
So, given $\lambda_{\MMM} \supseteq \lambda_{\NNN}(y)$, $\lambda_{\mathsf{glb}(\MMM, \MMM')}(x) \supseteq \lambda_{\NNN}(y)$.
If, on the other hand, $x = w$ (i.e. $x$ is the start state  of our cathoristic model $\MMM$), then, from the definition of $\mathsf{merge}$:
\[
\lambda_{\mathsf{glb}(\MMM, \MMM')}(w) = \lambda_{\MMM}(w) \cap \lambda_{\MMM'}(w')
\]
where $w'$ is the start state  of $\MMM'$.
In this case, given $\lambda_{\MMM}(w) \supseteq \lambda_{\NNN}(y)$ and $\lambda_{\MMM'}(w') \supseteq \lambda_{\NNN}(y)$, it follows that $\lambda_{\MMM}(w) \cap \lambda_{\MMM'}(w') \supseteq \lambda_{\NNN}(y)$ and hence 
\[
\lambda_{\mathsf{glb}(\MMM, \MMM')}(w) \supseteq \lambda_{\NNN}(y)
\]
\end{proof}

\NI Next, define the least upper bound ($\mathsf{lub}$) of two models as:
\begin{eqnarray*}
\mathsf{lub}(\MMM, \bot) & = & \MMM \\
\mathsf{lub}(\bot, \MMM) & = & \MMM \\
\mathsf{lub}((\CAL{L},w), (\CAL{L}',w')) & = & \mathsf{lub}_2(\CAL{L}, \CAL{L}', (\MMM_\top, z), \{(w, w', z)\})
\end{eqnarray*}

\NI where $\MMM_\top$ is the topmost model $(\mathcal{W}=\{z\},
\rightarrow=\emptyset, \lambda=\{z \mapsto \Sigma\})$ for some state $z$.
$\mathsf{lub}_2$ takes four parameters: the two cathoristic transition
systems $\CAL{L}$ and $\CAL{L}'$, an accumulator representing the
constructed result so far, and a list of state triples (each triple
contains one state from each of the two input models plus the state of
the accumulated result) to consider next.  It is defined as:
\begin{eqnarray*}
  \mathsf{lub}_2(\CAL{L}, \CAL{L}', \MMM, \emptyset) 
     &\quad =\quad & 
  \MMM \\
  \mathsf{lub}_2(\CAL{L}, \CAL{L}', ((\mathcal{W}, \rightarrow, \lambda), y), \{(w,w',x)\} \cup R) 
     & = & 
  \mathsf{lub}_2(\CAL{L}, \CAL{L}', ((\mathcal{W} \cup \mathcal{W}', \rightarrow \cup \rightarrow', \lambda'), y), R' \cup R\}
\end{eqnarray*}
where:
\begin{eqnarray*}
  \{(a_i, w_i, w'_i) \;|\; i = 1 ... n\} 
     &\quad =\quad & 
  \mathsf{sharedT}((\CAL{L},w), (\CAL{L}',w')) \\
  \mathcal{W}' 
     & = & 
  \{x_i \;|\; i = 1 ... n\} \\
  \rightarrow' 
     & = & 
  \{(x, a_i, x_i) \;|\; i = 1 ... n\} \\
  \lambda' 
     & = & 
  \lambda [x \mapsto \lambda(w) \cup \lambda(w)'] \\
  R' 
     & = & 
  \{(w_i, w'_i, x_i) \;|\; i = 1 ... n\}
\end{eqnarray*}

\NI Here $\lambda[x \mapsto S]$ is the state labelling function that is
exactly like $\lambda$, except that it maps $x$ to $S$.  Moreover,
$\mathsf{sharedT}$ returns the shared transitions between two models,
and is defined as:
\[
\mathsf{sharedT}(((\mathcal{W}, \rightarrow, \lambda),w) ((\mathcal{W}', \rightarrow', \lambda'),w')) =  \{(a, x, x') \;|\; w \xrightarrow{a} x \land w' \xrightarrow{a}' x'\}
\]
If $((S*,\rightarrow*,\lambda*),w*) = ((S,\rightarrow,\lambda),w) \sqcup ((S',\rightarrow',\lambda'),w')$ then define
the set $\mathsf{triples}_\mathsf{lub}$ as the set of triples $(x,x',x*) \; | \; x \in S, x' \in S', x* \in S*$ that were used during the construction of $\mathsf{lub}$ above. So $\mathsf{triples}_\mathsf{lub}$ stores the associations between states in $\MMM$, $\MMM'$ and $\MMM \sqcup \MMM'$. 

\begin{figure}[H]
\centering
\begin{tikzpicture}[node distance=1.3cm,>=stealth',bend angle=45,auto]
  \tikzstyle{place}=[circle,thick,draw=blue!75,fill=blue!20,minimum size=6mm]
  \tikzstyle{red place}=[place,draw=red!75,fill=red!20]
  \tikzstyle{transition}=[rectangle,thick,draw=black!75,
  			  fill=black!20,minimum size=4mm]
  \tikzstyle{every label}=[red]
  \begin{scope}
    \node [place] (w1) {$\Sigma$};
    \node [place] (e1) [below left of=w1] {$\Sigma$}
      edge [pre]  node[swap] {a}                 (w1);      
    \node [place] (e2) [below right of=w1] {$\Sigma$}
      edge [pre]  node[swap] {b}                 (w1);      
  \end{scope}
  \begin{scope}[xshift=4cm]
    \node [place] (w1) {$\Sigma$};
    \node [place] (e1) [below left of=w1] {$\Sigma$}
      edge [pre]  node[swap] {a}                 (w1);      
    \node [place] (e2) [below right of=w1] {$\Sigma$}
      edge [pre]  node[swap] {c}                 (w1);      
  \end{scope}
  \begin{scope}[xshift=8cm]
    \node [place] (w1) {$\Sigma$};
    \node [place] (e1) [below of=w1] {$\Sigma$}
      edge [pre]  node[swap] {a}                 (w1);      
  \end{scope}
  \draw (2,0) node {$\sqcup$};
  \draw (6,0) node {$=$};
\end{tikzpicture}
\caption{Example of $\sqcup$}
\end{figure}

\begin{figure}[H]
\centering
\begin{tikzpicture}[node distance=1.3cm,>=stealth',bend angle=45,auto]
  \tikzstyle{place}=[circle,thick,draw=blue!75,fill=blue!20,minimum size=6mm]
  \tikzstyle{red place}=[place,draw=red!75,fill=red!20]
  \tikzstyle{transition}=[rectangle,thick,draw=black!75,
  			  fill=black!20,minimum size=4mm]
  \tikzstyle{every label}=[red]
  \begin{scope}
    \node [place] (w1) {$\{a\}$};
    \node [place] (e1) [below of=w1] {$\Sigma$}
      edge [pre]  node[swap] {a}                 (w1);      
  \end{scope}
  \begin{scope}[xshift=4cm]
    \node [place] (w1) {$\{b\}$};
    \node [place] (e1) [below of=w1] {$\Sigma$}
      edge [pre]  node[swap] {b}                 (w1);      
  \end{scope}
  \begin{scope}[xshift=8cm]
    \node [place] (w1) {$\{a,b\}$};
  \end{scope}
  \draw (2,0) node {$\sqcup$};
  \draw (6,0) node {$=$};
\end{tikzpicture}
\caption{Example of $\sqcup$}
\end{figure}

\begin{figure}[H]
\centering
\begin{tikzpicture}[node distance=1.3cm,>=stealth',bend angle=45,auto]
  \tikzstyle{place}=[circle,thick,draw=blue!75,fill=blue!20,minimum size=6mm]
  \tikzstyle{red place}=[place,draw=red!75,fill=red!20]
  \tikzstyle{transition}=[rectangle,thick,draw=black!75,
  			  fill=black!20,minimum size=4mm]
  \tikzstyle{every label}=[red]
  \begin{scope}
    \node [place] (w1) {$\{a\}$};
    \node [place] (e1) [below of=w1] {$\Sigma$}
      edge [pre]  node[swap] {a}                 (w1);      
    \node [place] (e2) [below of=e1] {$\{c\}$}
      edge [pre]  node[swap] {b}                 (e1);      
  \end{scope}
  \begin{scope}[xshift=4cm]
    \node [place] (w1) {$\{a,b\}$};
    \node [place] (e1) [below of=w1] {$\{b,c\}$}
      edge [pre]  node[swap] {a}                 (w1);      
    \node [place] (e2) [below left of=e1] {$\{d\}$}
      edge [pre]  node[swap] {b}                 (e1);      
    \node [place] (e2) [below right of=e1] {$\Sigma$}
      edge [pre]  node[swap] {c}                 (e1);      
  \end{scope} 
  \begin{scope}[xshift=8cm]
    \node [place] (w1) {$\{a,b\}$};
    \node [place] (e1) [below of=w1] {$\Sigma$}
      edge [pre]  node[swap] {a}                 (w1);      
    \node [place] (e2) [below of=e1] {$\{c,d\}$}
      edge [pre]  node[swap] {b}                 (e1);      
  \end{scope}
  \draw (2,0) node {$\sqcup$};
  \draw (6,0) node {$=$};
\end{tikzpicture}
\caption{Example of $\sqcup$}
\end{figure}

\begin{lemma}
$\mathsf{lub}$ as defined is the least upper bound
\end{lemma}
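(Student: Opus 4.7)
The plan is to mirror the structure of the preceding $\mathsf{glb}$ proof, showing that (i) $\MMM \MODELLEQ \mathsf{lub}(\MMM, \MMM')$ and $\MMM' \MODELLEQ \mathsf{lub}(\MMM, \MMM')$, and (ii) whenever $\MMM \MODELLEQ \NNN$ and $\MMM' \MODELLEQ \NNN$, then also $\mathsf{lub}(\MMM, \MMM') \MODELLEQ \NNN$. The main technical tool is the auxiliary set $\mathsf{triples}_\mathsf{lub}$ defined during the construction, which associates every state $x*$ of $\mathsf{lub}(\MMM, \MMM')$ with the pair $(x, x')$ of states of $\MMM$ and $\MMM'$ that produced it. All cases involving $\bot$ follow at once from the stipulation $\bot \MODELLEQ m$ for every model $m$, so I concentrate on the non-trivial case.

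For (i), recall that $\MMM \MODELLEQ \mathsf{lub}(\MMM, \MMM')$ amounts to the existence of a simulation from $\mathsf{lub}(\MMM, \MMM')$ to $\MMM$ as cathoristic transition systems. I take
\[
\RRR \;=\; \{(x*, x) \mid \exists x'.\, (x, x', x*) \in \mathsf{triples}_\mathsf{lub}\}.
\]
The start states are paired in $\RRR$ by the seeding step of $\mathsf{lub}_2$. Transition preservation is immediate: by the definition of $\mathsf{sharedT}$, every transition $x* \xrightarrow{a} y*$ of $\mathsf{lub}(\MMM, \MMM')$ is witnessed by a transition $x \xrightarrow{a} y$ of $\MMM$ together with a triple $(y, y', y*)$, so $(y*, y) \in \RRR$. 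The label condition $\lambda_\mathsf{lub}(x*) \supseteq \lambda_\MMM(x)$ holds because the construction sets $\lambda_\mathsf{lub}(x*) = \lambda_\MMM(x) \cup \lambda_{\MMM'}(x')$. The symmetric argument yields $\MMM' \MODELLEQ \mathsf{lub}(\MMM, \MMM')$.

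For (ii), let simulations $\RRR_1$ from $\NNN$ to $\MMM$ and $\RRR_2$ from $\NNN$ to $\MMM'$ witness the hypotheses. I would define
\[
\RRR_* \;=\; \{(n, x*) \mid \exists (x, x', x*) \in \mathsf{triples}_\mathsf{lub} \text{ with } (n, x) \in \RRR_1 \text{ and } (n, x') \in \RRR_2\}
\]
and show it is a simulation from $\NNN$ to $\mathsf{lub}(\MMM, \MMM')$. The start states are related by $\RRR_*$ because $\RRR_1, \RRR_2$ relate them in $\MMM, \MMM'$ and the seed triple is present. Given $n \xrightarrow{a} m$ in $\NNN$ with $(n, x*) \in \RRR_*$, applying the two input simulations produces an $a$-successor $y$ of $x$ with $(m, y) \in \RRR_1$ and an $a$-successor $y'$ of $x'$ with $(m, y') \in \RRR_2$; since both $x$ and $x'$ have outgoing $a$-transitions, $\mathsf{sharedT}$ spawns a corresponding successor $y*$ in $\mathsf{lub}$, and $(m, y*) \in \RRR_*$. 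The label clause is precisely where the union in the definition of $\mathsf{lub}$ earns its keep: $\lambda_\NNN(n)$ contains both $\lambda_\MMM(x)$ and $\lambda_{\MMM'}(x')$, and therefore their union $\lambda_\mathsf{lub}(x*)$.

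The main obstacle is a bookkeeping invariant: one must verify by induction on the recursive structure of $\mathsf{lub}_2$ that $\mathsf{triples}_\mathsf{lub}$ indeed records every reachable state of $\mathsf{lub}(\MMM, \MMM')$ together with a genuine pair from the original models, and that fresh states introduced at distinct recursive calls do not collide. Once this invariant is in place, the three simulation clauses unfold mechanically from $\mathsf{sharedT}$ and the union-based labelling, and both directions of the lemma fall out at the same time.
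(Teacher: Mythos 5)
Your proof is correct and follows essentially the same route as the paper's: both parts establish the required simulations via the state-association set $\mathsf{triples}_\mathsf{lub}$, and your relation $\RRR_*$ in part (ii) is exactly the paper's $\mathcal{R}*$. Your explicit statement of the bookkeeping invariant on $\mathsf{triples}_\mathsf{lub}$ is a detail the paper leaves implicit, but it does not change the argument.
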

We will show that:
\begin{itemize}
\item
$\MMM \MODELLEQ \mathsf{lub}(\MMM, \MMM')$ and $\MMM' \MODELLEQ \mathsf{lub}(\MMM, \MMM')$
\item
If $\MMM \MODELLEQ \NNN $ and $\MMM' \MODELLEQ \NNN $, then $\mathsf{lub}(\MMM, \MMM') \MODELLEQ \NNN$
\end{itemize}
If $\MMM$ or $ \MMM'$ are equal to $\bot$, then we just apply the rule that $\bot \MODELLEQ m$ for all models $m$. 
So let us assume that neither $\MMM$ not $\MMM'$ are $\bot$.

\begin{proof}
To see that $\MMM \MODELLEQ \mathsf{lub}(\MMM, \MMM')$, observe that, by construction of $\mathsf{lub}$ above, every transition in $\mathsf{lub}(\MMM, \MMM')$ has a matching transition in $\MMM$, and every state label in  $\mathsf{lub}(\MMM, \MMM')$ is a superset of the corresponding state label in $\MMM$.

To show that $\MMM \MODELLEQ \NNN $ and $\MMM' \MODELLEQ \NNN $ together imply $\mathsf{lub}(\MMM, \MMM') \MODELLEQ \NNN$, assume a simulation $\mathcal{R}$ from $\NNN$ to $\MMM$ and a simulation $\mathcal{R}'$ from $\NNN$ to $\MMM'$.
We need to produce a simulation relation $\mathcal{R}*$ from $\NNN$ to $\mathsf{lub}(\MMM, \MMM')$.
Define
\[
\mathcal{R}* =   \{(x, y*) \; | \; \exists y_1 . \exists y_2 . (x,y_1) \in \mathcal{R}, (x,y_2) \in \mathcal{R}', (y_1,y_2,y*) \in \mathsf{triples}_\mathsf{lub} \}
\]
In other words, $\mathsf{R}*$ contains the pairs corresponding to the pairs in both $\mathsf{R}$ and $\mathsf{R}'$.
We just need to show that $\mathsf{R}*$ as defined is a simulation from $\NNN$ to $\mathsf{lub}(\MMM, \MMM')$.
Assume $(x,x*) \in \mathsf{R}*$ and $x \xrightarrow{a} y$ in $\NNN$. 
We need to produce a $y*$ such that $(x*,y*) \in \mathsf{R}*$ and $x* \xrightarrow{a} y*$ in $\mathsf{lub}(\MMM, \MMM')$.
Given that $\mathcal{R}$ is a simulation from $\NNN$ to $\MMM$, and that  $\mathcal{R}'$ is a simulation from $\NNN$ to $\MMM'$, we know that there is a pair of states $x_1, y_1$ in $\MMM$ and a pair of states $x_2, y_2$ in $\MMM'$ such that $(x,x_1) \in \mathsf{R}$ and $(x,x_2) \in \mathsf{R}'$ and $x_1 \xrightarrow{a} y_1$ in $\MMM$ and $x_2 \xrightarrow{a} y_2$ in $\MMM'$.
Now, from the construction of $\mathsf{lub}$ above, there is a triple $(y_1, y_2, y*) \in \mathsf{triples}_\mathsf{lub}$.
Now, from the construction of $\mathsf{R}*$ above, $(x*,y*) \in \mathsf{R}*$.

Finally, we need to show that for all states $x$ and $y$, if $(x,y) \in \mathsf{R}*, \lambda_{\NNN}(x) \supseteq \lambda_{\mathsf{lub}(\MMM, \MMM')}(y)$.
Given that $\mathcal{R}$ is a simulation from $\NNN$ to $\MMM$, and that  $\mathcal{R}'$ is a simulation from $\NNN$ to $\MMM'$, we know that if $(x,y_1) \in \mathsf{R}$, then $\lambda_\NNN(x) \supseteq \lambda_\MMM(y_1)$.
Similarly, if  $(x,y_2) \in \mathsf{R}$, then $\lambda_\NNN(x) \supseteq \lambda_\MMM'(y_2)$.
Now, from the construction of $\mathsf{lub}$, $\lambda_{\mathsf{lub}(\MMM, \MMM')}(y*) = \lambda_{\MMM}(y_1) \cup \lambda_{\MMM}(y_2)$ for all triples $(y_1, y_2, y*) \in \mathsf{triples}_\mathsf{lub}$. 
So $\lambda_{\NNN}(x) \supseteq \lambda_{\mathsf{lub}(\MMM, \MMM')}(y)$, as required.
\end{proof}

\subsection{A decision procedure for \cathoristic{}}\label{decisionprocedure}

We use the semantic constructions above to provide a quadratic-time
decision procedure.  The complexity of the decision procedure is an
indication that \cathoristic{} is useful as a query language in knowledge
representation.

\Cathoristic{}'s lack of connectives for negation, disjunction or
implication is the key reason for the efficiency of the decision
procedure.  Although any satisfiable formula has an infinite number of
models, we have shown that the satisfying models form a bounded
lattice with a least upper bound.  The $\SIMPL$ function defined above
gives us the least upper bound of all models satisfying an expression.  Using this least
upper bound, we can calculate entailment by checking a \emph{single
  model}.  To decide whether $\phi \models \psi$, we use the following
algorithm.

\begin{enumerate}

\item Compute $\SIMPL{\phi}$.

\item Check if $\SIMPL{\phi} \models \psi$.

\end{enumerate}

\NI The correctness of this algorithm is given by the follow theorem.

\begin{theorem}\label{theorem:decision}
  The following are equivalent:
  \begin{enumerate}
    \item\label{theorem:decision:1} For all cathoristic models $\MMM$,
      $\MMM \models \phi$ implies $\MMM \models \psi$.
    \item\label{theorem:decision:2} $\SIMPL{\phi} \models \psi$.
  \end{enumerate}
\end{theorem}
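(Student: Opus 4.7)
The plan is to exploit two facts already in hand: the lemma that $\SIMPL{\phi} \models \phi$, and the characterisation of $\MODELLEQ$ via theory inclusion given by Theorem~\ref{theorem:completeLattice}.\ref{theorem:completeLattice:1}. Combined with the property that $\SIMPL{\phi}$ is the least upper bound of $\{\MMM \mid \MMM \models \phi\}$ in the order $\MODELLEQ$, these ingredients reduce the equivalence to a short argument in each direction.

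For (\ref{theorem:decision:1}) $\Rightarrow$ (\ref{theorem:decision:2}), I would simply instantiate the universally quantified hypothesis at $\MMM := \SIMPL{\phi}$. Since $\SIMPL{\phi} \models \phi$ by the preceding lemma, (\ref{theorem:decision:1}) immediately yields $\SIMPL{\phi} \models \psi$, which is (\ref{theorem:decision:2}).

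For (\ref{theorem:decision:2}) $\Rightarrow$ (\ref{theorem:decision:1}), assume $\SIMPL{\phi} \models \psi$ and fix any model $\MMM$ with $\MMM \models \phi$. Because $\SIMPL{\phi}$ is an upper bound of $\{\MMM \mid \MMM \models \phi\}$ under $\MODELLEQ$, we have $\MMM \MODELLEQ \SIMPL{\phi}$. Theorem~\ref{theorem:completeLattice}.\ref{theorem:completeLattice:1} then converts this to $\THEORY{\SIMPL{\phi}} \subseteq \THEORY{\MMM}$; since $\psi \in \THEORY{\SIMPL{\phi}}$ by assumption, we conclude $\MMM \models \psi$.

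The main obstacle lies not in the theorem itself but in justifying the upper-bound claim $\MMM \models \phi \Rightarrow \MMM \MODELLEQ \SIMPL{\phi}$ that the argument relies on. This is naturally proved by induction on $\phi$: the base cases $\top$ and $\fBang A$ follow directly from the definitions of $\SIMPL{\top}$ and $\SIMPL{\fBang A}$ together with the simulation conditions; the conjunction case uses the $\mathsf{glb}$ lemma (if $\MMM$ satisfies both $\phi_1$ and $\phi_2$, then by induction it lies below both $\SIMPL{\phi_1}$ and $\SIMPL{\phi_2}$, hence below their greatest lower bound); and the $\MAY{a}{}$ case exploits that a witnessing $a$-transition in $\MMM$ leads to a target state that satisfies the subformula and, by induction, sits below the target of the corresponding transition in $\SIMPL{\MAY{a}{\phi}}$. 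With this LUB property in hand, the theorem follows immediately from the two short arguments above.
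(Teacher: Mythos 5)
Your proof is correct and follows essentially the same route as the paper: the forward direction instantiates at $\SIMPL{\phi}$ using $\SIMPL{\phi} \models \phi$, and the reverse direction combines the upper-bound property $\MMM \models \phi \Rightarrow \MMM \MODELLEQ \SIMPL{\phi}$ (the paper's Lemma~\ref{lemmasimpl}) with Theorem~\ref{theorem:completeLattice}.\ref{theorem:completeLattice:1}. The only cosmetic difference is in how the supporting lemma is justified --- you sketch a direct induction building simulations, whereas the paper's appendix proves $\THEORY{\SIMPL{\phi}} \subseteq \THEORY{\MMM}$ by induction and then invokes Theorem~\ref{theorem:completeLattice} --- but both are routine and the main argument is identical.
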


\begin{proof}
The implication from  (\ref{theorem:decision:1}) to
(\ref{theorem:decision:2}) is trivial because $\SIMPL{\phi} \models \phi$ by construction.

For the reverse direction, we make use of the following lemma (proved in the Appendix):
\begin{lemma}
\label{lemmasimpl}
If $\MMM \models \phi$ then $\MMM \MODELLEQ \SIMPL{\phi}$.
\end{lemma}

With Lemma \ref{lemmasimpl} in hand, the proof of Theorem \ref{theorem:decision} is straightforward.
Assume $\MMM \models \phi$. We need to show
$\MMM \models \psi$.  Now if $\MMM \models \phi$ then $\MMM \MODELLEQ
\SIMPL{\phi}$ (by Lemma \ref{lemmasimpl}).  Further, if $\MMM' \models \xi $
and $\MMM \MODELLEQ \MMM'$ then $\MMM \models \xi $ by Theorem
\ref{theorem:completeLattice}. So, substituting $\psi$ for $\xi $ and
$\SIMPL{\phi}$ for $\MMM'$, it follows that $\MMM \models \psi$.
\end{proof}

Construction of $\SIMPL{\phi}$ is quadratic in the size of $\phi$,
and computing whether a model satisfies $\psi$ is of order $|\psi|
\times |\phi|$, so computing whether $\phi \models \psi$ is quadratic
time.

\subsection{Incompatibility semantics}\label{incompatibility}

\NI One of \cathoristic{}'s unusual features is that it satisfies Brandom's incompatibility semantics constraint, \emph{even though it has no negation operator}.
In this section, we formalise what
this means, and prove it.

Define the \emph{incompatibility set of $\phi$} as:
\[
\mathcal{I}(\phi) = \{
  \psi \; | \; \forall \MMM. \MMM \not \models \phi \AND \psi\}
\]

\NI The reason why Brandom introduces the incompatibility set
\footnote{Brandom \cite{brandom} defines incompatibility slightly
  differently: he defines the set of \emph{sets} of formulae which are
  incompatible with a \emph{set} of formulae.  But in \cathoristic{},
  if a set of formulae is incompatible, then there is an incompatible
  subset of that set with exactly two members.  So we can work with
  the simpler definition in the text above.} is that he
wants to use it define \emph{semantic content}:
\begin{quote}
  Here is a semantic suggestion: represent the propositional content
  expressed by a sentence with the set of sentences that express
  propositions incompatible with it\footnote{\cite{brandom} p.123.}.
\end{quote}
Now if the propositional content of a claim determines its logical consequences, and the propositional content is identified with the incompatibility set, then the incompatibility set must determine the logical consequences.
A logic satisfies Brandom's \emph{incompatibility semantics constraint} if
\[
\phi \models \psi \; \quad\mbox{ iff }\quad \; \mathcal{I}(\psi) \subseteq \mathcal{I}(\phi)
\]

\NI Not all logics satisfy this property.  Brandom has shown that
first-order logic and the modal logic S5 satisfy the incompatibility
semantics property.
Hennessy-Milner logic satisfies it, but Hennessy-Milner logic without
negation does not.  \Cathoristic{} is the simplest logic we have found
that satisfies the property.  To establish the incompatibility
semantics constraint for \cathoristic, we need to define a related
incompatibility function on models.  $\mathcal{J}(\MMM)$ is the set of
models that are incompatible with $\MMM$:
\[
\mathcal{J}(\MMM) = \{ \MMM_2 \; | \; \MMM \sqcap \MMM_2 = \bot \}
\]
We shall make use of two lemmas, proved in Appendix
\ref{app:completeness:proofs}:
\begin{lemma}
\label{inc1}
$\mbox{If }\phi \models \psi \mbox{ then } \SIMPL{\phi} \MODELLEQ \SIMPL{\psi}$
\end{lemma}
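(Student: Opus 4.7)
The plan is to exploit the defining property of $\SIMPL{\cdot}$: by construction $\SIMPL{\phi}$ is the least upper bound, in the lattice $(\MODELLEQ)$, of the set $S_\phi = \{\MMM \mid \MMM \models \phi\}$. The hypothesis $\phi \models \psi$ says exactly that $S_\phi \subseteq S_\psi$, and in any partial order the least upper bound of a smaller set lies below the least upper bound of a larger set. So the conclusion should be almost immediate once we unfold these definitions and invoke the results already established.

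More concretely, I would proceed as follows. First, by Lemma \ref{lemmasimpl}, every $\MMM \in S_\psi$ satisfies $\MMM \MODELLEQ \SIMPL{\psi}$, so $\SIMPL{\psi}$ is an upper bound of $S_\psi$. Second, the assumption $\phi \models \psi$ gives $S_\phi \subseteq S_\psi$, and hence every $\MMM \in S_\phi$ also satisfies $\MMM \MODELLEQ \SIMPL{\psi}$; that is, $\SIMPL{\psi}$ is an upper bound of $S_\phi$ as well. Third, since $\SIMPL{\phi}$ is by construction the \emph{least} upper bound of $S_\phi$ (as established in Section \ref{simpl}), we conclude $\SIMPL{\phi} \MODELLEQ \SIMPL{\psi}$, as required.

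A parallel route goes via Theorem \ref{theorem:completeLattice}, which reduces $\SIMPL{\phi} \MODELLEQ \SIMPL{\psi}$ to $\THEORY{\SIMPL{\psi}} \subseteq \THEORY{\SIMPL{\phi}}$. Pick $\xi$ with $\SIMPL{\psi} \models \xi$; for any $\MMM \models \psi$ we have $\MMM \MODELLEQ \SIMPL{\psi}$ by Lemma \ref{lemmasimpl}, hence $\MMM \models \xi$ by Theorem \ref{theorem:completeLattice}, so $\psi \models \xi$. Combined with $\phi \models \psi$ this yields $\phi \models \xi$, and since $\SIMPL{\phi} \models \phi$ we obtain $\SIMPL{\phi} \models \xi$. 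This variant makes clear that the lemma is essentially a repackaging of the two facts already proved.

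There is no substantial obstacle; the only thing to watch is the direction of $\MODELLEQ$, since $\MMM' \MODELLEQ \MMM$ corresponds to \emph{reverse} inclusion $\THEORY{\MMM} \subseteq \THEORY{\MMM'}$ on theories, which can be confusing when one also switches between the two equivalent formulations. A minor bookkeeping point is the degenerate case where $\phi$ is unsatisfiable: then $\SIMPL{\phi} = \bot$ and $\bot \MODELLEQ \SIMPL{\psi}$ holds by the definition of the extended order, so this case requires no extra argument.
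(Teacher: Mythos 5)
Your proposal is correct, and your ``parallel route'' is essentially the paper's own proof: the paper likewise reduces $\SIMPL{\phi} \MODELLEQ \SIMPL{\psi}$ to the theory inclusion $\THEORY{\SIMPL{\psi}} \subseteq \THEORY{\SIMPL{\phi}}$ via Theorem \ref{theorem:completeLattice} and then chases an arbitrary $\xi$ with $\SIMPL{\psi} \models \xi$ through $\psi \models \xi$ and $\phi \models \psi$. Your primary route is a genuinely different and more direct packaging that stays entirely on the order-theoretic side; note, though, that you do not even need the full ``least upper bound'' property of $\SIMPL{\phi}$ for it. Since $\SIMPL{\phi} \models \phi$ (proved in Section \ref{simpl}) and $\phi \models \psi$, you get $\SIMPL{\phi} \models \psi$, so $\SIMPL{\phi}$ is itself one of the models of $\psi$, and Lemma \ref{lemmasimpl} applied to $\psi$ immediately yields $\SIMPL{\phi} \MODELLEQ \SIMPL{\psi}$. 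This two-line version uses only the two facts the paper actually proves about $\SIMPL{\cdot}$ (membership and upper-boundedness), whereas invoking leastness as you do requires the small extra observation that leastness follows from those two facts (an upper bound belonging to the set is its maximum, hence its least upper bound). Your remark about the degenerate case $\SIMPL{\phi} = \bot$ is handled automatically in either formulation, since $\bot \models \chi$ for every $\chi$ and $\bot \MODELLEQ \MMM$ for every $\MMM$ by stipulation.
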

\begin{lemma}
\label{inc3}
$\mathcal{I}(\psi) \subseteq \mathcal{I}(\phi) \mbox{ implies } \mathcal{J}(\SIMPL{\psi}) \subseteq \mathcal{J}(\SIMPL{\phi})$
\end{lemma}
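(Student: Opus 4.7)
The plan is to reduce model-level incompatibility to formula-level incompatibility: given $\MMM \in \mathcal{J}(\SIMPL{\psi})$, I will extract a \cathoristic{} formula $\chi$ with $\MMM \models \chi$ and $\chi \in \mathcal{I}(\psi)$. The hypothesis $\mathcal{I}(\psi) \subseteq \mathcal{I}(\phi)$ then gives $\chi \in \mathcal{I}(\phi)$, and this will be enough to force $\SIMPL{\phi} \sqcap \MMM = \bot$, i.e.\ $\MMM \in \mathcal{J}(\SIMPL{\phi})$.

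For the witness, observe that $\SIMPL{\psi}$ is a finite tree, so when $\mathsf{glb}(\SIMPL{\psi}, \MMM) = \bot$ the $\mathsf{merge}$ recursion of Section \ref{simpl} must reach an inconsistent pair $(s_j, m_j)$ after only finitely many steps, and inspection of $\mathsf{getIds}$ shows that both states are reached from their respective roots by a common action sequence $a_1, \ldots, a_k$. Two cases arise. If $\mathsf{out}(\SIMPL{\psi}, s_j) \nsubseteq \lambda_\MMM(m_j)$, then well-sizedness forces $\lambda_\MMM(m_j)$ to be some finite set $A$, and I take $\chi \DEFEQ \MAY{a_1}\cdots\MAY{a_k}\fBang A$. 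Otherwise $\mathsf{out}(\MMM, m_j) \nsubseteq \lambda_{\SIMPL{\psi}}(s_j)$, so $\MMM$ has an $a$-successor at $m_j$ for some $a \notin \lambda_{\SIMPL{\psi}}(s_j)$, and I take $\chi \DEFEQ \MAY{a_1}\cdots\MAY{a_k}\MAY{a}\TRUE$. That $\MMM \models \chi$ is immediate in both cases. For $\chi \in \mathcal{I}(\psi)$, let $\NNN \models \psi$; by Lemma \ref{lemmasimpl}, $\NNN \MODELLEQ \SIMPL{\psi}$, so determinism picks out a unique $a_1, \ldots, a_k$-path in $\NNN$ ending at some $n_j$ with $\mathsf{out}(\NNN, n_j) \supseteq \mathsf{out}(\SIMPL{\psi}, s_j)$ and $\lambda_\NNN(n_j) \subseteq \lambda_{\SIMPL{\psi}}(s_j)$. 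In the first case some action of $\mathsf{out}(\SIMPL{\psi}, s_j) \setminus A$ lies in $\lambda_\NNN(n_j)$ by admissibility, so $(\NNN, n_j) \not\models \fBang A$; in the second case $a \notin \lambda_\NNN(n_j)$, so admissibility rules out an $a$-transition at $n_j$. Either way $\NNN \not\models \chi$.

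For the conclusion, having obtained $\chi \in \mathcal{I}(\phi)$, suppose towards a contradiction that $\SIMPL{\phi} \sqcap \MMM \neq \bot$ and let $\NNN$ be a common lower bound. Then $\NNN \MODELLEQ \SIMPL{\phi}$ together with $\SIMPL{\phi} \models \phi$ and Theorem \ref{theorem:completeLattice} yield $\NNN \models \phi$, while $\NNN \MODELLEQ \MMM$ and $\MMM \models \chi$ yield $\NNN \models \chi$. This contradicts $\chi \in \mathcal{I}(\phi)$, so $\SIMPL{\phi} \sqcap \MMM = \bot$ as required.

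The main obstacle is the witness-extraction step: one must justify carefully that the $\mathsf{merge}$ recursion always maintains matched pairs reachable from the roots by a common label sequence, and that well-sizedness makes the offending $\lambda$-set finite whenever $\mathsf{inconsistent}$ fires (an infinite labelling equals $\Sigma$, so no strict inclusion could fail). Everything after that is a short semantic argument using Theorem \ref{theorem:completeLattice} and Lemma \ref{lemmasimpl}.
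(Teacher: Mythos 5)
Your proof is correct, but it takes a genuinely different route from the paper's. The paper's proof is a short abstract argument: given $\MMM \in \mathcal{J}(\SIMPL{\psi})$ it takes the \emph{characteristic formula} $\CHAR{\MMM}$ of the whole model as the witness, uses the duality $\SIMPL{\CHAR{\MMM}} \MODELEQ \MMM$ together with monotonicity of $\sqcap$ to move between the formula-level set $\mathcal{I}$ and the model-level set $\mathcal{J}$, and never opens up the $\mathsf{merge}$ construction. You instead localise the clash: you trace the $\mathsf{merge}$ recursion to the first inconsistent pair, observe that such pairs are always reachable from the roots by a common action sequence, and extract a single finite path formula $\MAY{a_1}\cdots\MAY{a_k}\fBang A$ or $\MAY{a_1}\cdots\MAY{a_k}\MAY{a}\TRUE$ as the witness in $\mathcal{I}(\psi)$. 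What your approach costs is the combinatorial bookkeeping about $\mathsf{getIds}$ that you rightly flag as the main obstacle; what it buys is twofold. First, it avoids relying on $\SIMPL{\CHAR{\MMM}} \MODELEQ \MMM$, which the paper only states without proof and only for tree models; more importantly, $\CHAR{\MMM}$ is a finite formula only when $\MMM$ is a finite acyclic model, whereas $\mathcal{J}$ quantifies over arbitrary models, so the paper's witness is not obviously well-formed in general --- your path-formula witness is always finite and sidesteps this entirely. Second, your final step (a direct semantic contradiction via Theorem \ref{theorem:completeLattice} and $\SIMPL{\phi} \models \phi$) replaces the paper's implicit bridging equivalence between $\xi \in \mathcal{I}(\psi)$ and $\SIMPL{\xi} \sqcap \SIMPL{\psi} = \bot$, which the paper also leaves unargued. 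The only things I would add are the degenerate cases $\MMM = \bot$ and $\SIMPL{\psi} = \bot$, where the $\mathsf{merge}$-based extraction does not apply but the conclusion is immediate (in the latter case $\TRUE \in \mathcal{I}(\psi) \subseteq \mathcal{I}(\phi)$ forces $\SIMPL{\phi} = \bot$ as well).
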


\begin{theorem}
\label{incompatibilitytheorem}
$\phi \models \psi \; \mbox{ iff } \; \mathcal{I}(\psi) \subseteq \mathcal{I}(\phi)$
\end{theorem}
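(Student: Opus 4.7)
The plan is to handle the two implications separately. The forward direction uses neither of the preparatory lemmas: if $\phi\models\psi$ and $\xi\in\mathcal{I}(\psi)$, then any model $\MMM\models\phi\AND\xi$ would also satisfy $\psi\AND\xi$, contradicting $\xi\in\mathcal{I}(\psi)$, so $\xi\in\mathcal{I}(\phi)$.

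For the reverse direction, I would first reduce to a purely semantic inequality. Combining Theorem~\ref{theorem:decision}, Lemma~\ref{lemmasimpl} and the characterisation of $\MODELLEQ$ in Theorem~\ref{theorem:completeLattice}, the statement $\phi\models\psi$ is equivalent to $\SIMPL{\phi}\MODELLEQ\SIMPL{\psi}$, since any $\MMM\models\phi$ lies $\MODELLEQ \SIMPL{\phi}$ and then inherits everything true at $\SIMPL{\psi}$, including $\psi$. Lemma~\ref{inc3} turns the hypothesis $\mathcal{I}(\psi)\subseteq\mathcal{I}(\phi)$ into $\mathcal{J}(\SIMPL{\psi})\subseteq\mathcal{J}(\SIMPL{\phi})$. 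So what remains is the semantic implication $\mathcal{J}(\SIMPL{\psi})\subseteq\mathcal{J}(\SIMPL{\phi})\Rightarrow\SIMPL{\phi}\MODELLEQ\SIMPL{\psi}$.

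I plan to prove this by contrapositive. If $\SIMPL{\phi}\not\MODELLEQ\SIMPL{\psi}$, the would-be simulation from $\SIMPL{\psi}$ into $\SIMPL{\phi}$, traced along common action-paths from the two roots using determinism of both models, must fail at some pair of corresponding states $(u,v)$ reached by a path $a_1,\ldots,a_n$. The failure takes one of two forms: either (i) $\SIMPL{\psi}$ has a transition $u\TRANS{b}u'$ not matched from $v$ in $\SIMPL{\phi}$, or (ii) $\lambda_{\SIMPL{\psi}}(u)\not\supseteq\lambda_{\SIMPL{\phi}}(v)$, i.e.\ some $b\in\lambda_{\SIMPL{\phi}}(v)\setminus\lambda_{\SIMPL{\psi}}(u)$. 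I then build a witness $\NNN\in\mathcal{J}(\SIMPL{\psi})\setminus\mathcal{J}(\SIMPL{\phi})$ by a local modification of $\SIMPL{\phi}$ at $v$: in case~(ii) by inserting a fresh $b$-successor at $v$ (admissible because $b\in\lambda_{\SIMPL{\phi}}(v)$); in case~(i) by sharpening $\lambda_{\SIMPL{\phi}}(v)$ to $\lambda_{\SIMPL{\phi}}(v)\setminus\{b\}$ when $b\in\lambda_{\SIMPL{\phi}}(v)$ (admissible because $\SIMPL{\phi}$ has no $b$-successor at $v$), or simply by taking $\NNN=\SIMPL{\phi}$ when $b\notin\lambda_{\SIMPL{\phi}}(v)$. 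In every case the simulation from $\SIMPL{\psi}$ combined with admissibility forces any common lower bound of $\NNN$ and $\SIMPL{\psi}$ to carry a $b$-transition at the state matching $u$, and this clashes with the label or transition built into $\NNN$, so $\SIMPL{\psi}\sqcap\NNN=\bot$; meanwhile $\NNN\MODELLEQ\SIMPL{\phi}$ by construction, so $\SIMPL{\phi}\sqcap\NNN=\NNN\neq\bot$ (assuming the non-trivial case $\phi$ satisfiable).

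The delicate subcase is (i) with $b\in\lambda_{\SIMPL{\phi}}(v)$: here $\SIMPL{\phi}$ itself need not yet be incompatible with $\SIMPL{\psi}$, so the proof cannot just reuse $\SIMPL{\phi}$ as its witness but must install a fresh tantum restriction at $v$. Verifying that this restriction respects admissibility relies precisely on the content of case~(i), namely that $\SIMPL{\phi}$ has no outgoing $b$-transition at $v$ to violate the new label.
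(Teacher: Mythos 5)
Your proof is correct, and it reorganises the argument differently from the paper at both ends. For the forward direction the paper routes through $\SIMPL{\cdot}$ and a three-way case analysis on which of $\SIMPL{\xi},\SIMPL{\psi}$ equals $\bot$; your two-line argument straight from the definition of $\mathcal{I}$ is cleaner and avoids the tacit identification of ``$\psi\AND\xi$ unsatisfiable'' with ``$\SIMPL{\psi}\sqcap\SIMPL{\xi}=\bot$''. For the hard direction both proofs share the same architecture --- contrapositive, Lemma~\ref{inc3}, then a witness model incompatible with $\SIMPL{\psi}$ but not with $\SIMPL{\phi}$, obtained either by installing a transition that a tantum label forbids or by installing a tantum label that forbids a demanded transition --- but you locate the defect differently. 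The paper takes an arbitrary countermodel $\MMM\models\phi$ with $\MMM\nvDash\psi$, syntactically extracts a conjunction-free consequence $\psi'$ of $\psi$ that $\MMM$ fails, and perturbs $\MMM$; you stay with the canonical models, trace the unique candidate simulation from $\SIMPL{\psi}$ to $\SIMPL{\phi}$ along action paths (legitimate because both are deterministic trees, so any simulation must contain the path-matching relation), and perturb $\SIMPL{\phi}$ at the first failure. Your version buys a crisper case split (exactly the two ways a simulation clause can fail) and dispenses with the decomposition of $\psi$ into $\AND$-free consequences; the price is the reduction of $\phi\models\psi$ to $\SIMPL{\phi}\MODELLEQ\SIMPL{\psi}$, which you justify correctly from Theorem~\ref{theorem:decision}, Lemma~\ref{lemmasimpl} and Theorem~\ref{theorem:completeLattice}.

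Two small points to patch, neither fatal. First (a wrinkle the paper's own construction shares): sharpening $\lambda_{\SIMPL{\phi}}(v)$ to $\lambda_{\SIMPL{\phi}}(v)\setminus\{b\}$ violates well-sizedness when $\lambda_{\SIMPL{\phi}}(v)=\Sigma$ and $\Sigma$ is infinite; use instead the finite set of actions outgoing from $v$, which excludes $b$ by the case hypothesis, so admissibility and $\NNN\MODELLEQ\SIMPL{\phi}$ are preserved. Second, handle the degenerate case $\SIMPL{\psi}=\bot\neq\SIMPL{\phi}$ separately, since there is no simulation to trace into $\bot$; there $\NNN=\SIMPL{\phi}$ itself witnesses $\mathcal{J}(\SIMPL{\psi})\nsubseteq\mathcal{J}(\SIMPL{\phi})$.
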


\begin{proof}

Left to right: Assume $\phi \models \psi$ and $\xi \in \mathcal{I}(\psi)$.  
We need to show $\xi \in \mathcal{I}(\phi)$.
By the definition of $\mathcal{I}$, if $\xi \in \mathcal{I}(\psi)$ then
$\SIMPL{\xi} \sqcap \SIMPL{\psi} = \bot$.
If $\SIMPL{\xi} \sqcap \SIMPL{\psi} = \bot$, then either
\begin{itemize}
\item $\SIMPL{\xi} = \bot$
\item $\SIMPL{\psi} = \bot$
\item Neither $\SIMPL{\xi}$ nor $\SIMPL{\psi}$ are $\bot$, but $\SIMPL{\xi} \sqcap \SIMPL{\psi} = \bot$.
\end{itemize}
If $\SIMPL{\xi} = \bot$, then $\SIMPL{\xi} \sqcap \SIMPL{\phi} = \bot$ and we are done.
If $\SIMPL{\psi} = \bot$, then as $\phi \models \psi$, by Lemma \ref{inc1}, $\SIMPL{\phi} \MODELLEQ \SIMPL{\psi}$.
Now the only model that is $\MODELLEQ \bot$ is $\bot$ itself, so $\SIMPL{\phi} = \bot$. Hence $\SIMPL{\xi} \sqcap \SIMPL{\phi} = \bot$, and we are done.
The interesting case is when neither $\SIMPL{\xi}$ nor $\SIMPL{\psi}$ are $\bot$, but $\SIMPL{\xi} \sqcap \SIMPL{\psi} = \bot$.
Then (by the definition of $\mathsf{consistent}$ in Section \ref{simpl}), either $\mathsf{out}(\SIMPL{\xi}) \nsubseteq \lambda(\SIMPL{\psi})$ or $\mathsf{out}(\SIMPL{\psi}) \nsubseteq \lambda(\SIMPL{\xi})$.
In the first sub-case, if  $\mathsf{out}(\SIMPL{\xi}) \nsubseteq \lambda(\SIMPL{\psi})$, then there is some action $a$ such that $\xi \models \MAY{a} \top$ and $a \notin \lambda(\SIMPL{\psi})$.
If $a \notin \lambda(\SIMPL{\psi})$ then $\psi \models ! A$ where $a \notin A$.
Now $\phi \models \psi$, so $\phi \models ! A$.
In other words, $\phi$ also entails the $A$-restriction that rules out the $a$ transition.
So $\SIMPL{\xi} \sqcap \SIMPL{\phi} = \bot$ and $\xi \in \mathcal{I}(\phi)$.
In the second sub-case, $\mathsf{out}(\SIMPL{\psi}) \nsubseteq \lambda(\SIMPL{\xi})$.
Then there is some action $a$ such that $\psi \models \MAY{a} \top$ and $a \notin \lambda(\SIMPL{\xi})$.
If $a \notin \lambda(\SIMPL{\xi})$ then $\xi \models ! A$ where $a \notin A$.
But if $\psi \models \MAY{a} \top$ and $\phi \models \psi$, then $\phi \models \MAY{a} \top$ and $\phi$ is also incompatible with $\xi$'s $A$-restriction.
So $\SIMPL{\xi} \sqcap \SIMPL{\phi} = \bot$ and $\xi \in \mathcal{I}(\phi)$.

Right to left: assume, for reductio, that $\MMM \models \phi$ and $\MMM \nvDash
\psi$. we will show that $\mathcal{I}(\psi) \nsubseteq \mathcal{I}(\phi)$.
Assume $\MMM \models \phi \mbox{ and } \MMM \nvDash \psi$. We will construct
another model $\MMM_2$ such that $\MMM_2 \in \mathcal{J}(\SIMPL{\psi})$ but $\MMM_2
\notin \mathcal{J}(\SIMPL{\phi})$.  This will entail, via Lemma \ref{inc3}, that
$\mathcal{I}(\psi) \nsubseteq \mathcal{I}(\phi)$.

If $\MMM \nvDash \psi$, then there is a formula $\psi'$ that does not contain
$\AND$ such that $\psi \models \psi'$ and $\MMM \nvDash \psi'$. $\psi'$ must be
either of the form (i) $\langle a_1 \rangle ... \langle a_n \rangle
\top$ (for $n > 0$) or (ii) of the form $\langle a_1 \rangle
... \langle a_n \rangle \; !\{A\}$ where $A \subseteq \mathcal{S}
\mbox{ and } n >= 0$.

In case (i), there must be an $i$ between $0$ and $n$ such that $\MMM
\models \langle a_1 \rangle ... \langle a_i \rangle \top$ but $\MMM
\nvDash \langle a_1 \rangle ... \langle a_{i+1} \rangle \top$. We need
to construct another model $\MMM_2$ such that $\MMM_2 \sqcap \SIMPL{\psi} = \bot$,
but $\MMM_2 \sqcap \SIMPL{\phi} \neq \bot$. Letting $\MMM =
((\mathcal{W},\rightarrow,\lambda),w)$, then $\MMM \models \langle a_1
\rangle ... \langle a_i \rangle \top$ implies that there is at least
one sequence of states of the form $w, w_1, ..., w_i$ such that $w
\xrightarrow{a_1} w_1 \rightarrow ... \xrightarrow{a_i} w_i$.  Now let
$\MMM_2$ be just like $\MMM$ but with additional transition-restrictions on
each $w_i$ that it not include $a_{i+1}$.  In other words,
$\lambda_{\MMM_2}(w_i) = \lambda_\MMM(w_i) - \{a_{i+1}\}$ for all $w_i$ in
sequences of the form $w \xrightarrow{a_1} w_1 \rightarrow
... \xrightarrow{a_i} w_i$. Now $\MMM_2 \sqcap \SIMPL{\psi} = \bot$ because of
the additional transition restriction we added to $\MMM_2$, which rules out
$\langle a_1 \rangle ... \langle a_{i+1} \rangle \top$, and
a-fortiori $\psi$. But $\MMM_2 \sqcap \SIMPL{\phi} \neq \bot$, because $\MMM
\models \phi$ and $\MMM_2 \MODELLEQ \MMM$ together imply $\MMM_2 \models \phi$. So $\MMM_2$ is
indeed the model we were looking for, that is incompatible with
$\SIMPL{\psi}$ while being compatible with $\SIMPL{\phi}$.

In case (ii), $\MMM \models \langle a_1 \rangle ... \langle a_n \rangle
\top$ but $\MMM \nvDash \langle a_1 \rangle ... \langle a_n \rangle !A$
for some $A \subset \mathcal{S}$. We need to produce a model $\MMM_2$ that
is incompatible with $\SIMPL{\psi}$ but not with $\SIMPL{\phi}$. Given that
$\MMM \models \langle a_1 \rangle ... \langle a_n \rangle \top$, there is
a sequence of states $w, w_1, ..., w_n$ such that $w \xrightarrow{a_1}
w_1 \rightarrow ... \xrightarrow{a_i} w_n$. Let $\MMM_2$ be the model just
like $\MMM$ except it has an additional transition from each such $w_n$
with an action $a \notin A$. 
Clearly, $\MMM_2 \sqcap \SIMPL{\psi'} = \bot$
because of the additional $a$-transition, and given that $\psi \models
\psi'$, it follows that $\MMM_2 \sqcap \SIMPL{\psi} = \bot$. Also, $\MMM_2 \sqcap
\SIMPL{\phi} \neq \bot$, because $\MMM_2 \MODELLEQ \MMM$ and $\MMM \models \phi$.

\end{proof}

\section{Inference Rules}\label{elAndBangMore}

\begin{FIGURE}
\begin{RULES}

  \ZEROPREMISERULENAMEDRIGHT
  {
    \phi \judge \phi
  }{Id}
    \qquad
  \ZEROPREMISERULENAMEDRIGHT
  {
    \phi \judge \top
  }{$\top$-Right}
    \qquad
  \ZEROPREMISERULENAMEDRIGHT
  {
    \bot \judge \phi
  }{$\bot$-Left}
    \qquad
  \TWOPREMISERULENAMEDRIGHT
  {
    \phi \judge \psi
  }
  {
    \psi \judge \xi
  }
  {
    \phi \judge \xi
  }{Trans}
    \\\\
  \ONEPREMISERULENAMEDRIGHT
  {
    \phi \judge \psi
  }
  {
    \phi \AND \xi \judge \psi
  }{$\AND$-Left 1}
     \qquad
  \ONEPREMISERULENAMEDRIGHT
  {
    \phi \judge \psi
  }
  {
    \xi \AND \phi  \judge \psi
  }{$\AND$-Left 2}
     \qquad
  \TWOPREMISERULENAMEDRIGHT
  {
    \phi \judge \psi
  }
  {
    \phi \judge \xi
  }
  {
    \phi \judge \psi \AND \xi
  }{$\AND$-Right}
     \\\\
     \ONEPREMISERULENAMEDRIGHT
     {
       a \notin A
     }
     {
       !A \AND \MAY{a}{\phi} \judge \bot
     }{$\bot$-Right 1}
        \qquad
     \ZEROPREMISERULENAMEDRIGHT
     {
       \MAY{a}{\bot} \judge \bot
     }{$\bot$-Right 2}
     \qquad
     \TWOPREMISERULENAMEDRIGHT
     {
       \phi \judge !A
     }
     {
       A \subseteq A'
     }
     {
       \phi \judge!A'
     }{!-Right 1}
     \\\\
     \TWOPREMISERULENAMEDRIGHT
     {
       \phi \judge !A
     }
     {
       \phi \judge !B
     }
     {
       \phi \judge !(A \cap B)
     }{!-Right 2}
     \qquad
     \ONEPREMISERULENAMEDRIGHT
     {
       \phi \judge \psi
     }
     {
       \MAY{a}{\phi} \judge \MAY{a}{\psi}
     }{Normal}
     \qquad
     \ONEPREMISERULENAMEDRIGHT
     {
       \phi \judge \MAY{a}\psi \land \MAY{a}\xi
     }
     {
       \phi \judge \MAY{a}(\psi \land \xi)
     }{Det}
\end{RULES}
\caption{Proof rules.}\label{figure:elAndBangRules}
\end{FIGURE}

\NI We now present the inference rules for \cathoristic{}. There are no
axioms.

\begin{definition} Judgements are of the following form.
\[
  \phi \judge \psi.
\]
We also write $\judge \phi$ as a shorthand for $\top \judge
\phi$. Figure \ref{figure:elAndBangRules}
presents all proof rules.
\end{definition}

\NI Note that $\phi$ and $\psi$ are single formulae, not sequents.  By
using single formulae, we can avoid structural inference rules.  The
proof rules can be grouped in two parts: standard rules and rules
unique to \cathoristic{}.  Standard rules are [\RULENAME{Id}],
[\RULENAME{$\top$-Right}], [\RULENAME{$\bot$-Left}],
[\RULENAME{Trans}], [\RULENAME{$\AND$-Left 1}],
[\RULENAME{$\AND$-Left 2}] and [\RULENAME{$\AND$-Right}]. They hardly need
explanation as they are variants of familiar rules for propositional
logic, see e.g.~\cite{TroelstraAS:basprot,vanDalenD:logstr}.  We now
explain the rules that give \cathoristic{} its distinctive properties.

The rule [\RULENAME{$\bot$-Right 1}]  captures the core
exclusion property of the tantum !: for example if $A = \{male, female\}$
then $\MAY{orange}{\phi}$ is incompatible with $!A$. Thus $!A \AND
\MAY{orange}{\phi}$ must be false.

The rule [\RULENAME{$\bot$-Right 2}] expresses that falsity is `global'
  and cannot be suppressed by the modalities. For example
  $\MAY{orange}{\bot}$ is false, simply because $\bot$ is already
  false.

[\RULENAME{Normal}] enables us to prefix an inference with a
may-modality.  This rule can also be stated in the the following more
general form:
\[
   \ONEPREMISERULENAMEDRIGHT
   {
     \phi_1\AND ...\AND \phi_n \judge \psi
   }
   {
     \MAY{a}{\phi_1}\AND ...\AND \MAY{a}{\phi_n} \judge \MAY{a}{\psi}
   }{Normal-Multi}
\]

\NI But it is not necessary because [\RULENAME{Normal-Multi}] is
derivable from [\RULENAME{Normal}] as we show in the examples below.

\subsection{Example inferences}

\NI We prove that we can use $\phi \AND \psi \judge \xi$ to derive
$\MAY{a} \phi \land \MAY{a} \psi \judge \MAY{a} \xi$:

\begin{center}
  \AxiomC{$\MAY{a} \phi \land \MAY{a} \psi \judge \MAY{a} \phi \land \MAY{a} \psi$}
  \RightLabel{\RULENAME{\small Det}}
  \UnaryInfC{$\MAY{a} \phi \land \MAY{a} \psi \judge \MAY{a} (\phi \land \psi)$}
  \AxiomC{$\phi \AND \psi  \judge  \xi$}
  \RightLabel{\RULENAME{\small Normal}}
  \UnaryInfC{$\MAY{a} (\phi \AND \psi)  \judge  \MAY{a} \xi$}
  \RightLabel{\RULENAME{\small Trans}}
  \BinaryInfC{$\MAY{a} \phi \land \MAY{a} \psi \judge \MAY{a} \xi$}
  \DisplayProof
\end{center}

\NI Figure \ref{figure:elAndBangMore:bigDerivations} demonstrates how
to infer $\MAY{a}!\{b,c\} \land \MAY{a}!\{c,d\} \judge \MAY{a}!\{c\}$
and $\MAY{a} !\{b\} \land \MAY{a} \MAY{c} \top \judge \MAY{d} \top$.

\subsection{\RULENAME{!-Left} and \RULENAME{!-Right}}

The rules [\RULENAME{!-Right 1}, \RULENAME{!-Right 2}] jointly express
how the subset relation $\subseteq$ on sets of actions relates to
provability. Why  don't we need a corresponding rule \RULENAME{!-Left} for
strengthening $!$ on the left hand side?
\[
   \TWOPREMISERULENAMEDRIGHT
     {
       \phi \AND \, !A \judge \psi
     }
     {
       A' \subseteq A
     }
     {
       \phi \AND\, !A' \judge \psi
     }{!-Left}
\]
The reason is that [\RULENAME{!-Left}] can be derived as follows.
\begin{center}
  \AxiomC{$\phi \AND\, !A'  \judge  \phi \AND\, !A'$}
  \AxiomC{$A' \subseteq A$}
  \RightLabel{\small\RULENAME{!-Right 1}}
  \BinaryInfC{$\phi \AND\, !A'  \judge  \phi \AND\, !A$}
  \AxiomC{$\phi \AND\, !A  \judge  \psi$}
  \RightLabel{\small\RULENAME{Trans}}
  \BinaryInfC{$\phi \AND\, !A'  \judge  \psi$}
  \DisplayProof
\end{center}

\NI Readers familiar with object-oriented programming will recognise
[\RULENAME{!-Left}] as contravariant and [\RULENAME{!-Right 1}]
as covariant subtyping. Honda \cite{HondaK:thetypftpc} develops a full
theory of subtyping based on similar ideas.  All three rules embody
the intuition that whenever $A \subseteq A'$ then asserting that $!A'$
is as strong as, or a stronger statement than
$!A$. [\RULENAME{!-Left}] simply states that we can always strengthen
our premise, while [\RULENAME{!-right 1}] allows us to weaken the
conclusion.

\subsection{Characteristic formulae}

In order to prove completeness, below, we need the notion of a
\emph{characteristic formula} of a model.  The function
$\SIMPL{\cdot}$ takes a formula as argument and returns
the least upper bound of the satisfying models. Characteristic formulae go the other way: given a model
$\MMM$, $\CHAR{\MMM}$ is the logically weakest formula that describes that model. 
\clearpage
\begin{SIDEWAYSFIGURE}

\begin{center}
  \AxiomC{$!\{b,c\} \judge !\{b,c\}$}
  \RightLabel{\small \RULENAME{ $\land$ Left 1}}
  \UnaryInfC{$!\{b,c\} \land !\{c,d\} \judge !\{b,c\}$}
  \AxiomC{$!\{c,d\} \judge !\{c,d\}$}
  \RightLabel{\small \RULENAME{ $\land$ Left 2}}
  \UnaryInfC{$!\{b,c\} \land !\{c,d\} \judge !\{c,d\}$}
  \RightLabel{\small \RULENAME{ ! Right 2}}
  \BinaryInfC{$!\{b,c\} \land !\{c,d\} \judge !\{c\}$}
  \RightLabel{\small \RULENAME{ Normal}}
  \UnaryInfC{$\MAY{a} (!\{b,c\} \land !\{c,d\}) \judge \MAY{a} !\{c\}$}
  \AxiomC{$\MAY{a} !\{b,c\} \land \MAY{a} !\{c,d\} \judge \MAY{a} !\{b,c\} \land \MAY{a} !\{c,d\}$}
  \RightLabel{\small \RULENAME{ Det}}
  \UnaryInfC{$\MAY{a} !\{b,c\} \land \MAY{a} !\{c,d\} \judge \MAY{a} (!\{b,c\} \land !\{c,d\})$}
  \RightLabel{\small \RULENAME{ Trans}}
  \BinaryInfC{$\MAY{a} !\{b,c\} \land \MAY{a} !\{c,d\} \judge \MAY{a}!\{c\}$}
  \DisplayProof
\end{center}
\vspace{7mm}
\begin{center}
  \AxiomC{$!\{b\} \land \MAY{c}\top \judge \bot$}
  \RightLabel{\small \RULENAME{ Normal}}
  \UnaryInfC{$\MAY{a}(!\{b\} \land \MAY{c}\top) \judge \MAY{a} \bot$}
  \AxiomC{$\MAY{a}!\{b\} \land \MAY{a} \MAY{c}\top \judge \MAY{a}!\{b\} \land \MAY{a} \MAY{c}\top$}
  \RightLabel{\small \RULENAME{ Det}}
  \UnaryInfC{$\MAY{a}!\{b\} \land \MAY{a} \MAY{c}\top \judge \MAY{a}(!\{b\} \land \MAY{c}\top)$}
  \RightLabel{\small \RULENAME{ Trans}}
  \BinaryInfC{$\MAY{a}!\{b\} \land \MAY{a} \MAY{c}\top  \judge \MAY{a} \bot$}  
  \AxiomC{$\MAY{a}\bot \judge \bot$}
  \RightLabel{\small \RULENAME{ Trans}}
  \BinaryInfC{$\MAY{a}!\{b\} \land \MAY{a} \MAY{c}\top  \judge \bot$}  
  \AxiomC{$\bot \judge \MAY{d} \top$}
  \RightLabel{\small \RULENAME{ Trans}}
  \BinaryInfC{$\MAY{a}!\{b\} \land \MAY{a} \MAY{c}\top  \judge  \MAY{d} \top$}    
  \DisplayProof
\end{center}

\caption{Derivations of $\MAY{a}!\{b,c\} \land \MAY{a}!\{c,d\} \judge \MAY{a}!\{c\}$ (top) and
 $\MAY{a} !\{b\} \land \MAY{a} \MAY{c} \top \judge \MAY{d} \top$ (bottom).}\label{figure:elAndBangMore:bigDerivations}
\end{SIDEWAYSFIGURE}

\clearpage

\begin{definition}
Let $\MMM$ be a cathoristic model that is a tree.
\begin{eqnarray*}
  \CHAR{\bot} &\ =\ & \langle a \rangle \top \AND ! \emptyset  \mbox{ for some fixed action }a \in \Sigma  \\
  \CHAR{\MMM, w} & = & \mathsf{bang}(\MMM,w) \AND \bigwedge_{w \xrightarrow{a} w'} \langle a \rangle \CHAR{\MMM, w'}  
\end{eqnarray*}

\end{definition}

\NI Note that $\bot$ requires a particular action $a \in \Sigma$. This
is why we required, in Section \ref{elsyntax}, that $\Sigma$ is
non-empty.

The functions $\mathsf{bang}(\cdot)$ on models are given by the
following clauses.

\begin{eqnarray*}
  \mathsf{bang}((S,\rightarrow,\lambda),w) 
     & \ = \ & 
  \begin{cases}
    \top & \mbox{ if } \lambda(w) = \Sigma  \\
    ! \; \lambda(w) & \mbox{ otherwise }  
  \end{cases} \\
\end{eqnarray*}

\NI Note that $\CHAR{\MMM}$ is finite if $\MMM$ contains no cycles and
if $\lambda(x)$ is either $\Sigma$ or finite for all states $x$.  We
state without proof that $\SIMPL{\cdot}$ and $\CHAR{\cdot}$ are
inverses of each other (for tree models $\MMM$) in that:

\begin{itemize}

\item $\SIMPL{\CHAR{\MMM}} \ \MODELEQ \  \MMM$. 

\item $\models \CHAR{\SIMPL{\phi}}$ iff $\models\phi$.

\end{itemize}

\subsection{Soundness and completeness}

\begin{theorem}\label{theorem:elAndBang:soundComplete}
The rules in Figure \ref{figure:elAndBangRules} are sound and complete:
\begin{enumerate}

\item\label{theorem:elAndBang:sound} (Soundness) $\phi \judge \psi$ implies $\phi \models \psi$.

\item\label{theorem:elAndBang:complete} (Completeness) $\phi \models \psi$ implies $\phi \judge \psi$.

\end{enumerate}
\end{theorem}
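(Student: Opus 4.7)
The plan is to prove Soundness by a routine induction on derivations, and Completeness by combining the decision procedure (Theorem~\ref{theorem:decision}) with the characteristic-formula machinery $\CHAR{\cdot}$. The completeness argument will use $\CHAR{\SIMPL{\phi}}$ as a syntactic pivot: I will derive $\phi \judge \CHAR{\SIMPL{\phi}}$ on one side, $\CHAR{\SIMPL{\phi}} \judge \psi$ on the other, and close with [\RULENAME{Trans}].

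For Soundness, I will induct on the derivation of $\phi \judge \psi$. The propositional rules are standard. For [\RULENAME{$\bot$-Right 1}], admissibility forces $a \in \lambda(s)$ whenever $\MMM \models \MAY{a}{\phi}$, contradicting $\lambda(s) \subseteq A$ when $a \notin A$. Rules [\RULENAME{!-Right 1}] and [\RULENAME{!-Right 2}] follow immediately from the set-inclusion semantics of tantum, and [\RULENAME{Normal}] from the semantic clause for $\MAY{a}{}$. Crucially, [\RULENAME{Det}] depends on determinism of cathoristic transition systems: if $\MMM \models \MAY{a}{\psi} \land \MAY{a}{\xi}$, both diamonds must witness the same (unique) $a$-successor, so that successor satisfies $\psi \land \xi$.

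For Completeness, Theorem~\ref{theorem:decision} reduces $\phi \models \psi$ to $\SIMPL{\phi} \models \psi$. I will then prove two lemmas. Lemma A (\emph{characteristic-formula adequacy}): for any finite tree model $\MMM$ and any formula $\psi$, $\MMM \models \psi$ implies $\CHAR{\MMM} \judge \psi$. This will proceed by induction on $\psi$: the cases $\top$ and $\land$ are immediate from [\RULENAME{$\top$-Right}] and [\RULENAME{$\AND$-Right}] with the IH; for $\psi = \MAY{a}{\psi'}$, since $\MMM \models \MAY{a}{\psi'}$ the characteristic formula $\CHAR{\MMM}$ contains $\MAY{a}{\CHAR{(\mathcal{L},w')}}$ as a conjunct, which extracts via [\RULENAME{$\AND$-Left}] and combines with the IH through [\RULENAME{Normal}]; for $\psi = \fBang A$, the conjunct $\fBang \lambda(w)$ of $\CHAR{\MMM}$, together with $\lambda(w) \subseteq A$ and [\RULENAME{!-Right 1}], yields $\fBang A$. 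Lemma B states $\phi \judge \CHAR{\SIMPL{\phi}}$, with the degenerate case $\SIMPL{\phi} = \bot$ handled by deriving $\phi \judge \bot$ and using [\RULENAME{$\bot$-Left}] to reach any conclusion. Given both lemmas, completeness follows by [\RULENAME{Trans}].

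The hard part will be Lemma B, and specifically its conjunctive case $\phi = \phi_1 \land \phi_2$. The IH easily gives $\phi_1 \land \phi_2 \judge \CHAR{\SIMPL{\phi_1}} \land \CHAR{\SIMPL{\phi_2}}$, but the remaining step $\CHAR{\SIMPL{\phi_1}} \land \CHAR{\SIMPL{\phi_2}} \judge \CHAR{\mathsf{glb}(\SIMPL{\phi_1}, \SIMPL{\phi_2})}$ requires mirroring the $\mathsf{merge}$ algorithm of Section~\ref{simpl} at the proof-theoretic level. Each state-identification pair supplied by $\mathsf{merge}$ must be handled by [\RULENAME{Det}] (collapsing $\MAY{a}{\alpha} \land \MAY{a}{\beta}$ into $\MAY{a}{(\alpha \land \beta)}$); intersected $\lambda$-labels come from [\RULENAME{!-Right 2}]; and every $\mathsf{inconsistent}$ node must collapse to $\bot$ via [\RULENAME{$\bot$-Right 1}] and then be absorbed by [\RULENAME{$\bot$-Left}]. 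I will accordingly strengthen Lemma B into a simultaneous induction on pairs of tree models, establishing that for any $\MMM_1, \MMM_2$ the conjunction $\CHAR{\MMM_1} \land \CHAR{\MMM_2}$ provably entails $\CHAR{\mathsf{glb}(\MMM_1,\MMM_2)}$, and then recover the original Lemma B as a corollary.
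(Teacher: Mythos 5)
Your proposal follows essentially the same route as the paper: soundness by rule-by-rule inspection (with [\RULENAME{Det}] justified by determinism), and completeness via the pivot $\phi \judge \CHAR{\SIMPL{\phi}} \judge \psi$ using exactly the paper's two lemmas (its Lemma~\ref{lemma:completeness:4} is your Lemma~A, its Lemma~\ref{lemma:completeness:5} is your Lemma~B), closed by [\RULENAME{Trans}]. You also correctly isolate the genuinely hard step, $\CHAR{\MMM_1} \land \CHAR{\MMM_2} \judge \CHAR{\MMM_1 \sqcap \MMM_2}$ for tree models, which is precisely the paper's Lemma~\ref{final_completeness_lemma}.
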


\NI Soundness is immediate from the definitions. 
 To prove completeness  we will show that $\phi
\models \psi$ implies there is a derivation of $\phi \judge \psi$.  Our proof
will make use of two key facts (proved in Sections \ref{prooflemma4} and \ref{prooflemma5} below):







\begin{lemma}\label{lemma:completeness:4}
If $\MMM \models \phi$ then $\CHAR{\MMM} \judge \phi$.
\end{lemma}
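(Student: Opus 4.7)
The plan is to proceed by induction on the structure of $\phi$, noting first that $\CHAR{\MMM}$ is only defined when $\MMM$ is a (finitely-branching, acyclic) tree; if $\MMM$ is not a tree we may pass to its tree-unfolding, which by Theorem \ref{theorem:completeLattice} is $\MODELEQ$-equivalent and therefore satisfies exactly the same formulae. The degenerate case $\MMM = \bot$ is dispatched up front: $\CHAR{\bot} = \MAY{a}\top \AND\, !\emptyset$, so $[\bot\text{-Right 1}]$ gives $\CHAR{\bot} \judge \bot$, and [Trans] with $[\bot\text{-Left}]$ yields $\CHAR{\bot} \judge \phi$ for any $\phi$.

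The key combinatorial tool I will repeatedly invoke is a ``projection'' observation: since
\[
   \CHAR{\MMM, w} \;=\; \mathsf{bang}(\MMM, w) \;\AND\; \bigwedge_{w \xrightarrow{a} w'} \MAY{a}\CHAR{\MMM, w'}
\]
is a finite conjunction, each individual conjunct $\chi$ satisfies $\CHAR{\MMM, w} \judge \chi$ by a straightforward sequence of $[\AND\text{-Left 1}]$/$[\AND\text{-Left 2}]$ applications starting from [Id]. This lets me ``pluck out'' whichever conjunct the inductive step requires.

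The four inductive cases are then routine matches between the semantic clauses and the proof rules. For $\phi = \top$ apply $[\top\text{-Right}]$. For $\phi = \phi_1 \AND \phi_2$, the semantics splits the hypothesis as $\MMM \models \phi_1$ and $\MMM \models \phi_2$, so two appeals to the IH combine via $[\AND\text{-Right}]$. For $\phi = \MAY{a}\psi$ the semantics delivers a transition $w \xrightarrow{a} w'$ with $(\LLL, w') \models \psi$; the IH applied to the sub-model rooted at $w'$ yields $\CHAR{\MMM, w'} \judge \psi$, then [Normal] lifts this to $\MAY{a}\CHAR{\MMM, w'} \judge \MAY{a}\psi$, and composing with the projection of $\MAY{a}\CHAR{\MMM, w'}$ out of $\CHAR{\MMM, w}$ via [Trans] closes the case. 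For $\phi = !A$ the hypothesis forces $\lambda(w) \subseteq A$; in the typical case $\lambda(w)$ is finite so $\mathsf{bang}(\MMM, w) = !\lambda(w)$ is a conjunct of $\CHAR{\MMM, w}$, and $[!\text{-Right 1}]$ with $\lambda(w) \subseteq A$ combined with projection and [Trans] produces the derivation.

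The one delicate point — the ``hard part'' — is the subcase $\lambda(w) = \Sigma$ within $\phi = !A$: $\MMM \models !A$ then forces $\Sigma \subseteq A$, which requires $\Sigma$ itself to be finite; in that situation $\mathsf{bang}(\MMM, w) = \top$ and one would want $\top \judge !A$, which is not directly derivable. Under the standing assumption that $\Sigma$ is infinite (or by tightening the definition so that $\mathsf{bang}$ canonically records $!\Sigma$ whenever $\Sigma$ is finite), the edge case evaporates. The intended use sites of the lemma instantiate $\MMM = \SIMPL{\phi}$ for a finite $\phi$, so finiteness of $\CHAR{\MMM}$ and of every state-label encountered during the induction is guaranteed, and the proof goes through uniformly.
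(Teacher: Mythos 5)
Your proof follows essentially the same route as the paper's: induction on $\phi$, dispatching the cases with $[\top\text{-Right}]$, $[\AND\text{-Right}]$, [Normal] composed via [Trans] with a conjunct ``projection'' obtained by repeated $[\AND\text{-Left}]$, and $[!\text{-Right 1}]$ for the tantum case. Your additional care over the $\MMM = \bot$ case (deriving $\CHAR{\bot} \judge \bot$ explicitly from $[\bot\text{-Right 1}]$ rather than treating $\CHAR{\bot}$ as the symbol $\bot$) and your flagging of the $\lambda(w) = \Sigma$ subcase of $\phi = \fBang A$ --- which is vacuous when $\Sigma$ is infinite and which the paper's proof silently skips --- only sharpen the argument; the approach is the same.
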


\begin{lemma}\label{lemma:completeness:5}
For all formulae $\phi$, we can derive $\phi \judge \CHAR{\SIMPL{\phi}}$.
\end{lemma}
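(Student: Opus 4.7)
The plan is structural induction on $\phi$, with the real work concentrated in the conjunction case. The base cases $\phi = \top$ and $\phi = \,!A$ are immediate: $\SIMPL{\top}$ is a singleton model with root label $\Sigma$ and no transitions, so $\CHAR{\SIMPL{\top}}$ reduces to $\top$ and [\RULENAME{$\top$-Right}] closes the goal; and $\SIMPL{!A}$ is a singleton model with root label $A$, so $\CHAR{\SIMPL{!A}} = \,!A$ and [\RULENAME{Id}] suffices.

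For $\phi = \MAY{a}\psi$ with $\SIMPL{\psi} \neq \bot$, the model $\SIMPL{\MAY{a}\psi}$ prepends a fresh unrestricted root to $\SIMPL{\psi}$ via a single $a$-edge, so $\CHAR{\SIMPL{\MAY{a}\psi}}$ reduces to $\top \AND \MAY{a}\CHAR{\SIMPL{\psi}}$. The inductive hypothesis gives $\psi \judge \CHAR{\SIMPL{\psi}}$, and [\RULENAME{Normal}] lifts this to $\MAY{a}\psi \judge \MAY{a}\CHAR{\SIMPL{\psi}}$; combining with [\RULENAME{$\top$-Right}] via [\RULENAME{$\AND$-Right}] yields the conclusion. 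If $\SIMPL{\psi} = \bot$, the inductive hypothesis together with [\RULENAME{$\bot$-Right 1}] applied to the conjunct $!\emptyset \AND \MAY{a_0}\top$ inside $\CHAR{\bot}$ yields $\psi \judge \bot$; then [\RULENAME{Normal}], [\RULENAME{$\bot$-Right 2}], [\RULENAME{Trans}] and [\RULENAME{$\bot$-Left}] deliver the goal.

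The substantive case is $\phi = \phi_1 \AND \phi_2$. By the inductive hypothesis together with routine uses of [\RULENAME{$\AND$-Left 1/2}] and [\RULENAME{$\AND$-Right}] we first derive $\phi_1 \AND \phi_2 \judge \CHAR{\SIMPL{\phi_1}} \AND \CHAR{\SIMPL{\phi_2}}$. It then suffices, via [\RULENAME{Trans}], to prove the following auxiliary claim: for all finite tree models $\MMM$ and $\NNN$,
\[
\CHAR{\MMM} \AND \CHAR{\NNN} \judge \CHAR{\mathsf{glb}(\MMM, \NNN)}.
\]
I would prove this by a secondary induction on the combined tree-depth of $\MMM$ and $\NNN$, with roots $w$ and $w'$. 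If $\mathsf{glb}(\MMM, \NNN) = \bot$ because some $a \in \mathsf{out}(\MMM, w)$ fails $a \in \lambda_\NNN(w')$ (necessarily finite then), the conjunct $!\lambda_\NNN(w')$ appears in $\CHAR{\NNN}$ and the conjunct $\MAY{a}\CHAR{\MMM, v}$ appears in $\CHAR{\MMM}$; rearranging conjuncts and applying [\RULENAME{$\bot$-Right 1}] yields $\bot$, from which [\RULENAME{$\bot$-Left}] delivers $\CHAR{\bot}$. Inconsistencies appearing deeper in the tree propagate outward through repeated [\RULENAME{Normal}], [\RULENAME{$\bot$-Right 2}] and [\RULENAME{Trans}]. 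In the consistent case, the root label of the glb is $\lambda_\MMM(w) \cap \lambda_\NNN(w')$, handled by [\RULENAME{!-Right 2}]; for each action $a$ shared between $\MMM$ and $\NNN$ (with $w \TRANS{a} v$ and $w' \TRANS{a} v'$), [\RULENAME{Det}] merges the two $\MAY{a}$-conjuncts into $\MAY{a}(\CHAR{\MMM, v} \AND \CHAR{\NNN, v'})$, after which the inner inductive hypothesis plus [\RULENAME{Normal}] delivers the corresponding $\MAY{a}\CHAR{\cdot}$ conjunct of the glb. Actions outgoing from only one side contribute their modal conjuncts directly. All pieces assemble by [\RULENAME{$\AND$-Right}].

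The main obstacle is the auxiliary claim. Its delicacy lies in coordinating [\RULENAME{Det}], which is essential for fusing two separate $\MAY{a}$-conjuncts into a single one so that the inner inductive hypothesis can be invoked on the subtrees, with the label intersection handled by [\RULENAME{!-Right 2}] and with inconsistency detected at arbitrary depth being propagated outward through nested modalities. Once the auxiliary sub-lemma is in place, the remaining cases of the outer induction are essentially bookkeeping.
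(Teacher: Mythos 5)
Your proposal is correct and follows essentially the same route as the paper: structural induction on $\phi$ with the weight carried by the auxiliary claim $\CHAR{\MMM} \AND \CHAR{\NNN} \judge \CHAR{\MMM \sqcap \NNN}$ (the paper's Lemma~\ref{final_completeness_lemma}), using \RULENAME{Det}, \RULENAME{!-Right 2}, \RULENAME{Normal} and the $\bot$-rules in exactly the places the paper does. The only divergence is cosmetic: the paper proves the auxiliary lemma by decomposing $\CHAR{\MMM \sqcap \NNN}$ into path formulae $\MAY{a_1}\cdots\MAY{a_n}\top$ and $\MAY{a_1}\cdots\MAY{a_n}!X$ and deriving each, whereas you run a secondary induction on tree depth, which amounts to the same argument.
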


\NI Lemma \ref{lemma:completeness:4} states that, if $\phi$ is
satisfied by a model, then there is a proof that the characteristic
formula describing that model entails $\phi$.  In Lemma
\ref{lemma:completeness:5}, $\SIMPL{\phi}$ is the simplest
model satisfying $\phi$, and $\CHAR{\MMM}$ is the simplest formula
describing $m$, so $\CHAR{\SIMPL{\phi}}$ is a simplified form of
$\phi$. This lemma states that \cathoristic\ has the inferential capacity to
transform any proposition into its simplified form.

With these two lemmas in hand, the proof of completeness is
straightforward.  Assume $\phi \models \psi$.  Then all models which satisfy
$\phi$ also satisfy $\psi$.  In particular, $\SIMPL{\phi} \models \psi$.  Then
$\CHAR{\SIMPL{\phi}} \judge \psi$ by Lemma \ref{lemma:completeness:4}.  But we
also have, by Lemma \ref{lemma:completeness:5}, $\phi \judge
\CHAR{\SIMPL{\phi}} $.  So by transitivity, we have $\phi \judge \psi$.  

\subsection{Proofs of Lemmas \ref{lemma:completeness:4}, \ref{lemma:completeness:5} and \ref{final_completeness_lemma}}

\subsubsection{Proof of Lemma \ref{lemma:completeness:4}}
\label{prooflemma4}
If $\MMM\models \phi$ then $\CHAR{\MMM} \judge \phi$.

\NI We proceed by induction on $\phi$.

\CASE{$\phi$ is $\top$} Then we can prove $ \CHAR{\MMM} \judge \phi$
immediately using axiom [\RULENAME{$\top$ Right}.

\CASE{$\phi$ is $\psi \AND \psi'$} By the induction hypothesis, $
\CHAR{\MMM} \judge \psi$ and $ \CHAR{\MMM} \judge \psi'$.  The proof
of $ \CHAR{\MMM} \judge \psi \AND \psi'$ follows immediately using
[\RULENAME{$\AND$ Right}.

\CASE{$\phi$ is $\langle a \rangle \psi$}
If $\MMM \models \langle a \rangle \psi$, then either $\MMM = \bot$ or $\MMM$ is a  model of the form $(\CAL{L},w)$.

\SUBCASE{$\MMM = \bot$} In this case, $ \CHAR{\MMM} = \CHAR{\bot} =
\bot$. (Recall, that we are overloading $\bot$ to mean both the model
at the bottom of our lattice and a formula (such as $\langle a \rangle
\top \AND !\emptyset$) which is always false).  In this case, $ \CHAR{\bot}
\judge \langle a \rangle \psi$ using [\RULENAME{$\bot$ Left}.

\SUBCASE{$m$ is a model of the form $(\CAL{L},w)$} Given $\MMM \models
\langle a \rangle \psi$, and that $\MMM$ is a model of the form
$(\CAL{L},w)$, we know that:
\[
(\CAL{L},w) \models \langle a \rangle \psi
\]
From the satisfaction clause for $\langle a \rangle$, it follows that:
\[
\exists w' \mbox{ such that } w \xrightarrow{a} w' \mbox { and } (\CAL{L},w') \models \psi
\]
By the induction hypothesis:
\[
 \CHAR{(\CAL{L},w')} \judge \psi
\]
Now by [\RULENAME{Normal}]:
\[
\langle a \rangle  \CHAR{(\CAL{L},w')} \judge \langle a \rangle \psi
\]
Using repeated application of [\RULENAME{$\AND$ Left}], we can show:
\[
 \CHAR{(\CAL{L},w)} \judge \langle a \rangle  \CHAR{(\CAL{L},w')}
\]
Finally, using [\RULENAME{Trans}], we derive:
\[
 \CHAR{(\CAL{L},w)} \judge  \langle a \rangle \psi
\]

\CASE{$\phi$ is $\fBang \psi$} If $(\CAL{L},w) \models \fBang A$,
then $\lambda(w) \subseteq A$.  Then $ \CHAR{(\CAL{L},w)} = ! \;
\lambda(w) \AND \phi$.  Now we can prove $! \; \lambda(w) \AND \phi
\judge \fBang A$ using [\RULENAME{$!$ Right 1}] and repeated applications of
       [\RULENAME{$\AND$ Left}].

\subsubsection{Proof of Lemma \ref{lemma:completeness:5}}
\label{prooflemma5}

Now we prove Lemma \ref{lemma:completeness:5}: 
for all formulae $\phi$, we can derive $\phi \judge \CHAR{\SIMPL{\phi}}$.

\begin{proof}
Induction on $\phi$.

\CASE{$\phi$ is $\top$} Then we can prove $\top \judge \top$ using
either [\RULENAME{$\top$ Right}] or [\RULENAME{Id}].

\CASE{$\phi$ is $\psi \AND \psi'$} By the induction hypothesis,
$\psi \judge \CHAR{\SIMPL{\psi}}$ and $\psi' \judge
\CHAR{\SIMPL{\psi'}}$.  Using [\RULENAME{$\AND$ Left}] and [\RULENAME{$\AND$
  Right}], we can show:
\[
\psi \AND \psi' \judge  \CHAR{\SIMPL{\psi}} \AND  \CHAR{\SIMPL{\psi'}}
\]

In order to continue the proof, we need the following lemma, proven
in the next subsection.

\begin{lemma}
\label{final_completeness_lemma}
For all cathoristic  models $\MMM$ and $\MMM_2$ that are trees, $ \CHAR{\MMM} \AND
\CHAR{\MMM_2} \judge \CHAR{\MMM \sqcap \MMM_2}$.
\end{lemma}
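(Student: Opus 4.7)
The plan is to proceed by structural induction on the pair of tree models $(\MMM, \MMM_2)$, well-founded because the trees under consideration have finite characteristic formulae. The easy base case is $\MMM = \bot$ or $\MMM_2 = \bot$: then $\MMM \sqcap \MMM_2 = \bot$ and the relevant characteristic formula on the left already equals $\CHAR{\bot}$, so $\AND$-Left and Id close the derivation. For the inductive step, write $\MMM = (\CAL{L}, w)$ and $\MMM_2 = (\CAL{L}', w')$ with outgoing action sets $A, B \subseteq \Sigma$ and root labels $\Lambda = \lambda(w)$, $\Lambda' = \lambda'(w')$, and split according to whether $\MMM \sqcap \MMM_2$ equals $\bot$.

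If $\MMM \sqcap \MMM_2 = \bot$, the target is $\CHAR{\bot}$, which via $\bot$-Left reduces to deriving $\bot$. Two subcases arise. When the root is already inconsistent there is some $a \in A$ with $a \notin \Lambda'$ (or symmetrically); then $\Lambda' \neq \Sigma$ forces $!\Lambda'$ to appear as a conjunct of $\CHAR{\MMM_2}$, and extracting $!\Lambda'$ together with $\MAY{a}\CHAR{(\CAL{L}, w_a)}$ from the left side by $\AND$-Left, recombining via $\AND$-Right, and invoking $\bot$-Right 1 yields $\bot$. When the roots are consistent but some shared-subtree glb is $\bot$, pick $a \in A \cap B$ with $(\CAL{L}, w_a) \sqcap (\CAL{L}', w'_a) = \bot$; the induction hypothesis then gives $\CHAR{(\CAL{L}, w_a)} \AND \CHAR{(\CAL{L}', w'_a)} \judge \CHAR{\bot}$, and since $\CHAR{\bot} \judge \bot$ by a direct appeal to $\bot$-Right 1, Trans upgrades this to a derivation of $\bot$. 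Packaging the two modality conjuncts $\MAY{a}\CHAR{(\CAL{L}, w_a)}$ and $\MAY{a}\CHAR{(\CAL{L}', w'_a)}$ via Det, pushing through the modality with Normal, and finishing with $\bot$-Right 2 then produces $\bot$ from $\CHAR{\MMM} \AND \CHAR{\MMM_2}$.

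If $\MMM \sqcap \MMM_2 \neq \bot$, then $A \subseteq \Lambda'$, $B \subseteq \Lambda$, and every shared-subtree glb is itself non-$\bot$. The glb root carries label $\Lambda \cap \Lambda'$ with outgoing $c$-edges for every $c \in A \cup B$, leading to subtrees $T_c$ that are either $(\CAL{L}, w_c) \sqcap (\CAL{L}', w'_c)$ (when $c \in A \cap B$) or the original subtree from the unique side (otherwise). I would derive $\CHAR{\MMM \sqcap \MMM_2}$ conjunct-by-conjunct and glue them together with $\AND$-Right. The $\mathsf{bang}$ conjunct splits into three subcases: both labels $\Sigma$ (use $\top$-Right); exactly one $\Sigma$ (extract the other $!$ conjunct by $\AND$-Left); both finite (apply $!$-Right 2 to $!\Lambda$ and $!\Lambda'$ to obtain $!(\Lambda \cap \Lambda')$). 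A conjunct $\MAY{c}\CHAR{T_c}$ with $c$ unique to one side is extracted verbatim by $\AND$-Left. For shared $c \in A \cap B$, the induction hypothesis supplies $\CHAR{(\CAL{L}, w_c)} \AND \CHAR{(\CAL{L}', w'_c)} \judge \CHAR{T_c}$, and Det followed by Normal lifts this under $\MAY{c}{}$, producing $\MAY{c}\CHAR{T_c}$ from the two modality conjuncts extracted from the two characteristic formulae.

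The main obstacle is the tidy management of Case 2: handling the three $\mathsf{bang}$ subcases depending on whether $\Lambda, \Lambda'$ are $\Sigma$ or finite, and ensuring the per-conjunct subderivations compose cleanly under $\AND$-Right. The Case 1b propagation of $\bot$ through a modality, via Normal followed by $\bot$-Right 2, is delicate but becomes routine once the auxiliary judgment $\CHAR{\bot} \judge \bot$ is recorded at the outset.
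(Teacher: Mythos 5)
Your proposal is correct, but it is organised quite differently from the paper's proof. The paper does not induct on the models at all: in the non-$\bot$ case it first observes that $\CHAR{\MMM \sqcap \MMM_2}$ is interderivable with a conjunction of flat ``path formulae'' of the shapes $\MAY{a_1}\cdots\MAY{a_n}\top$ and $\MAY{a_1}\cdots\MAY{a_n}\,\fBang X$, and then derives each such path formula from $\CHAR{\MMM}\AND\CHAR{\MMM_2}$ by tracing the corresponding branch in one or both trees (with a three-way case split on whether one or both models realise the path, using \RULENAME{!-Right 1}/\RULENAME{!-Right 2} at the endpoint and $n$ applications of \RULENAME{Normal} and \RULENAME{Det} to climb back up); similarly, in the $\bot$ case it locates the single conflicting path $a_1\cdots a_{n-1}b$ up front and discharges it with one use of \RULENAME{$\bot$-Right 1} followed by $n-1$ uses of \RULENAME{$\bot$-Right 2}. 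Your structural induction on the pair of trees reaches the same derivations one level at a time: the root-level conjunct analysis (the three $\mathsf{bang}$ subcases, unique-side versus shared children) replaces the paper's path decomposition, and the inductive hypothesis replaces the paper's explicit iteration of \RULENAME{Normal}/\RULENAME{Det}/\RULENAME{$\bot$-Right 2} along a branch. Your route has the advantage of not needing the paper's unproved auxiliary claim that $\bigwedge_{\phi\in\Gamma}\phi \judge \CHAR{\MMM\sqcap\MMM_2}$ for the set $\Gamma$ of path formulae, which is itself a small induction; the cost is that you must justify that the well-founded recursion tracks the $\mathsf{merge}$ construction, i.e.\ that the glb's subtree under a shared action $c$ is (up to $\MODELEQ$) the glb of the two $c$-subtrees, and that any inconsistency in $\mathsf{merge}$ shows up either at the roots or inside some shared immediate subtree pair. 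Both facts are true and no harder than the structural claims the paper itself leaves implicit, so I regard your argument as complete at the same level of rigour.
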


\NI From Lemma \ref{final_completeness_lemma} (substituting $\SIMPL{\psi}$ for $\MMM$ and $\SIMPL{\psi'}$ for $\MMM_2$, and noting that $\SIMPL{}$ always produces acyclic models), it follows that:
\[
 \CHAR{\SIMPL{\psi}} \AND  \CHAR{\SIMPL{\psi'}} \judge  \CHAR{\SIMPL{\psi \AND \psi'}}
\]
Our desired result follows using [\RULENAME{Trans}].

\CASE{$\phi$ is $\langle a \rangle \psi$} By the induction
hypothesis, $\psi \judge \CHAR{\SIMPL{\psi}}$.  Now there are two
sub-cases to consider, depending on whether or not $
\CHAR{\SIMPL{\psi}} = \bot$.  

\SUBCASE{$ \CHAR{\SIMPL{\psi}} = \bot$} In this case, $
\CHAR{\SIMPL{\langle a \rangle \psi}}$ also equals $\bot$.  By the
induction hypothesis:
\[
\psi \judge \bot
\]
By [\RULENAME{Normal}]:
\[
\langle a \rangle \psi \judge \langle a \rangle \bot
\]
By [\RULENAME{$\bot$ Right 2}]:
\[
\langle a \rangle \bot \judge \bot
\]
The desired proof that:
\[
\langle a \rangle \psi \judge \bot
\]
follows by [\RULENAME{Trans}].

\SUBCASE{$ \CHAR{\SIMPL{\psi}} \neq \bot$}
By the induction hypothesis, $\psi \judge  \CHAR{\SIMPL{\psi}}$.
So, by [\RULENAME{Normal}]:
\[
\langle a \rangle \psi \judge \langle a \rangle  \CHAR{\SIMPL{\psi}}
\]
The desired conclusion follows from noting that:
\[
 \langle a \rangle  \CHAR{\SIMPL{\psi}} =  \CHAR{\SIMPL{\langle a \rangle \psi}}
 \]

 \CASE{$\phi$ is $\fBang A$} If $\phi$ is $\fBang A$, then $
 \CHAR{\SIMPL{\phi}}$ is $\fBang A \AND \top$.  We can prove $\fBang A
 \judge \fBang A \AND \top$ using [\RULENAME{$\AND$ Right}], [\RULENAME{$\top$
   Right}] and [\RULENAME{Id}].
\end{proof}

\subsubsection{Proof of Lemma \ref{final_completeness_lemma}}
\label{prooflemma6}

We can now finish the proof of Lemma \ref{lemma:completeness:5} by
giving the missing proof of Lemma \ref{final_completeness_lemma}.

\begin{proof}
There are two cases to consider, depending on whether or not $(\MMM
\sqcap \MMM_2) = \bot$.

\CASE{$(\MMM \sqcap \MMM_2) = \bot$}
If $(\MMM \sqcap \MMM_2) = \bot$, there are three possibilities:
\begin{itemize}
\item
$\MMM = \bot$
\item
$\MMM_2 = \bot$
\item
Neither $\MMM$ nor $\MMM_2$ are $\bot$, but together they are incompatible. 
\end{itemize}
If either $\MMM$ or $\MMM_2$ is $\bot$, then the proof is a simple application of [\RULENAME{Id}] followed by [\RULENAME{$\AND$ Left}].

Next, let us consider the case where neither $\MMM$ nor $\MMM_2$ are $\bot$, but together they are incompatible.
Let $\MMM = (\mathcal{L}, w_1)$ and $\MMM' = (\mathcal{L}', w'_1)$.
If $\MMM \sqcap \MMM_2 = \bot$, then there is a finite sequence of actions $a_1, ..., a_{n-1}$ such that both $\MMM$ and $\MMM'$ satisfy $\MAY{a_1} ... \MAY{a_{n-1}}\top$, but they disagree about the state-labelling on the final state of this chain. In other words, there is a $b$-transition from the final state in $\MMM$ which is ruled-out by the $\lambda'$ state-labelling in $\MMM'$. So there is a set of states $w_1, ..., w'_1, ...$ and a finite set $X$ of actions such that:
\begin{itemize}

\item $w_1 \xrightarrow{a_1} w_2 \xrightarrow{a_2} ... \xrightarrow{a_{n-1}} w_n$.

\item $w_1' \xrightarrow{a_1} w'_2 \xrightarrow{a_2} ... \xrightarrow{a_{n-1}} w'_n$.

\item $w_n \xrightarrow{b} w_{n+1}$.

\item $\lambda'(w'_n) = X \text{ with } b \notin X$.

\end{itemize}
Now it is easy to show, using [\RULENAME{$\AND$ Left}], that
\begin{eqnarray*}
\CHAR{\mathsf{\MMM}} &\judge& \MAY{a_1} ... \MAY{a_{n-1}} \MAY{b} \top \\
\CHAR{\mathsf{\MMM'}} &\judge& \MAY{a_1} ... \MAY{a_{n-1}} \fBang X
\end{eqnarray*}
Now using [\RULENAME{$\AND$ Left}] and [\RULENAME{$\AND$ Right}]:
\[
\CHAR{\mathsf{\MMM}} \land \CHAR{\mathsf{\MMM}'} \judge  \MAY{a_1} ... \MAY{a_{n-1}} \MAY{b} \top \land  \judge \MAY{a_1} ... \MAY{a_{n-1}} \fBang X
\]
Now using [\RULENAME{Det}]:
\[
\CHAR{\mathsf{\MMM}} \land \CHAR{\mathsf{\MMM}'} \judge  \MAY{a_1} ... \MAY{a_{n-1}} (\MAY{b} \top \land \fBang X)
\]
Now, using [\RULENAME{$\bot$ Right 1}]:
\[
\MAY{b} \top \land \fBang X \judge \bot
\]
Using $n-1$ applications of  [\RULENAME{$\bot$ Right 2}]:
\[
\MAY{a_1} ... \MAY{a_{n-1}} (\MAY{b} \top \land \fBang X) \judge \bot
\]
Finally, using [\RULENAME{Trans}], we derive:
\[
\CHAR{\mathsf{\MMM}} \land \CHAR{\mathsf{\MMM}'} \judge \bot
\]
\CASE{$(\MMM \sqcap \MMM_2) \neq \bot$} From the construction of
$\mathsf{merge}$, if $\MMM$ and $\MMM'$ are acyclic, then $\MMM \sqcap
\MMM'$ is also acyclic.  If $\MMM \sqcap \MMM'$ is acyclic, then
$\CHAR{\MMM \sqcap \MMM'}$ is equivalent to a set $\Gamma$ of
sentences of one of two forms:
\[
   \MAY{a_1} ... \MAY{a_n} \top 
      \qquad\qquad
   \MAY{a_1} ... \MAY{a_n} ! X
\]

\begin{figure}[H]
\centering
\begin{tikzpicture}[node distance=1.3cm,>=stealth',bend angle=45,auto]
  \tikzstyle{place}=[circle,thick,draw=blue!75,fill=blue!20,minimum size=6mm]
  \tikzstyle{red place}=[place,draw=red!75,fill=red!20]
  \tikzstyle{transition}=[rectangle,thick,draw=black!75,
  			  fill=black!20,minimum size=4mm]
  \tikzstyle{every label}=[red]
    
  \begin{scope}
    \node [place] (w1) {$\Sigma$};
    \node [place] (e1) [below left of=w1] {$\{c,d\}$}
      edge [pre]  node[swap] {a}                 (w1);      
    \node [place] (e2) [below right of=w1] {$\Sigma$}
      edge [pre]  node[swap] {b}                 (w1);      
    \node [place] (e3) [below of=e1] {$\Sigma$}
      edge [pre]  node[swap] {c}                 (e1);      
  \end{scope}
    
\end{tikzpicture}
\caption{Example of $\sqcap$}
\label{setofpaths}
\end{figure}
For example, if $\MMM \sqcap \MMM'$ is as in Figure \ref{setofpaths}, then 
\[
\CHAR{\MMM \sqcap \MMM'} = \MAY{a}(\fBang \{c,d\} \land \MAY{c} \top) \land \MAY{b} \top
\]
This is equivalent to the set $\Gamma$ of sentences:
\[
\MAY{a}\MAY{c} \top \qquad\qquad
\MAY{b} \top \qquad\qquad
\MAY{a}\fBang\{c,d\}
\]
Now using [\RULENAME{$\AND$ Right}] and [\RULENAME{Det}] we can show that
\[
\bigwedge_{\phi \in \Gamma} \phi \judge \CHAR{\MMM \sqcap \MMM'}
\]
We know that for all $\phi \in \Gamma$
\[
\MMM \sqcap \MMM' \models \phi
\]
We just need to show that:
\[
\CHAR{\MMM} \land \CHAR{\MMM'} \judge \phi
\]
Take any $\phi \in \Gamma$ of the form $\MAY{a_1} ... \MAY{a_n} ! X$ for some finite $X \subseteq \Sigma$. (The case where $\phi$ is of the form $\MAY{a_1} ... \MAY{a_n} \top$ is very similar, but simpler).
If $\MMM \sqcap \MMM' \models \MAY{a_1} ... \MAY{a_n} ! X$ then either:
\begin{enumerate}
\item
$\MMM \models \MAY{a_1} ... \MAY{a_n} ! X$ but $\MMM' \nvDash \MAY{a_1} ... \MAY{a_n} \top$
\item
$\MMM' \models \MAY{a_1} ... \MAY{a_n} ! X$ but $\MMM \nvDash \MAY{a_1} ... \MAY{a_n} \top$
\item
$\MMM \models \MAY{a_1} ... \MAY{a_n} ! X_1$ and $\MMM' \models \MAY{a_1} ... \MAY{a_n} ! X_2$ and $X_1 \cap X_2 \subseteq X$
\end{enumerate}
In the first two cases, showing $\CHAR{\MMM} \land \CHAR{\MMM'} \judge \phi$ is just a matter of repeated application of   [\RULENAME{$\AND$ Left}] and [\RULENAME{$\AND$ Right}].
In the third case, let $\MMM = (\mathcal{L}, w_1)$ and $\MMM' = (\mathcal{L}', w'_1)$.
If $\MMM \models \MAY{a_1} ... \MAY{a_n} ! X_1$ and $\MMM' \models \MAY{a_1} ... \MAY{a_n} ! X_2$ then there exists sequences $w_1, ..., w_{n+1}$ and $w'_1, ..., w'_{n+1}$ of states such that
\begin{itemize}

\item $w_1 \xrightarrow{a_1} ... \xrightarrow{a_n} w_{n+1}$.

\item $w'_1 \xrightarrow{a_1} ... \xrightarrow{a_n} w'_{n+1}$.

\item $\lambda(w_{n+1}) \subseteq X_1$.

\item $\lambda'(w'_{n+1}) \subseteq X_2$.

\end{itemize}

\NI Now from the definition of $\CHAR{}$:
\[
   \CHAR{(\mathcal{L}, w_{n_1})} \judge \fBang X_1 
      \qquad\qquad
   \CHAR{(\mathcal{L}', w'_{n_1})} \judge \fBang X_2
\]
Now using [\RULENAME{\fBang Right 2}]:
\[
\CHAR{(\mathcal{L}, w_{n_1})} \land \CHAR{(\mathcal{L}', w'_{n_1})} \judge \fBang (X_1 \cap X_2)
\]
Using [\RULENAME{\fBang Right 1}]:
\[
\CHAR{(\mathcal{L}, w_{n_1})} \land \CHAR{(\mathcal{L}', w'_{n_1})} \judge \fBang X
\]
Using $n$ applications of [\RULENAME{Normal}]:
\[
\MAY{a_1} ... \MAY{a_n} (\CHAR{(\mathcal{L}, w_{n_1})} \land \CHAR{(\mathcal{L}', w'_{n_1})}) \judge \MAY{a_1} ... \MAY{a_n} \fBang X
\]
Finally, using $n$ applications of [\RULENAME{Det}]:
\[
\CHAR{ (\mathcal{L}, w_1)} \land \CHAR{ (\mathcal{L}', w'_1)} \judge \MAY{a_1} ... \MAY{a_n} (\CHAR{(\mathcal{L}, w_{n_1})} \land \CHAR{(\mathcal{L}', w'_{n_1})})
\]
So, by [\RULENAME{Trans]}
\[
\CHAR{\MMM} \land \CHAR{\MMM'} \judge \MAY{a_1} ... \MAY{a_n} \fBang X
\]
\end{proof}

\section{Compactness and the standard translation to first-order logic }
\label{compactness}

This section studies two embeddings of \cathoristic{} into first-order
logic. The second embedding is used to prove that \cathoristic{} satisfies compactness.

\subsection{Translating from  cathoristic to
            first-order logic}\label{standardTranslation}

The study of how a logic embeds into other logics is interesting in
parts because it casts a new light on the logic that is the target of
the embedding.  A good example is the standard translation of modal
into first-order logic.  The translation produces various fragments:
the finite variable fragments, the fragment closed under bisimulation,
guarded fragments.  These fragments have been investigated deeply, and
found to have unusual properties not shared by the whole of \fol.
Translations also enable us to push techniques, constructions and
results between logics.  In this section, we translate \cathoristic{}
into first-order logic.

\begin{definition}
The first-order signature $\SSS$ has a nullary predicate $\top$, a
family of unary predicates $\RESTRICT{A}{\cdot}$, one for each finite
subset $A \subseteq \Sigma$, and a family of binary predicates
$\ARROW{a}{x}{y}$, one for each action $a \in \Sigma$. 

\end{definition}

\NI The intended interpretation is as follows.

\begin{itemize}

\item The universe is composed of states.

\item The predicate $\top$ is true everywhere.

\item For each finite $A \subseteq \Sigma$ and each state $s$,  $\RESTRICT{A}{s}$
is true if 
  $\lambda(x) \subseteq A$.

\item A set of two-place predicates $\ARROW{a}{x}{y}$, one for each $a
  \in \Sigma$, where $x$ and $y$ range over states. $\ARROW{a}{x}{y}$
  is true if $x \xrightarrow{a} y$.

\end{itemize}

\NI If $\Sigma$ is infinite, then $\RESTRICT{A}{\cdot}$ and
$\ARROW{a}{\cdot}{\cdot}$ are infinite families of relations.

\begin{definition}
 Choose two fixed variables $x, y$, let $a$ range over actions in
$\Sigma$, and $A$ over finite subsets of $\Sigma$. Then the restricted
fragment of first-order logic that is the target of our translation is given by the
following grammar, where $w, z$ range over $x, y$.

\begin{GRAMMAR}
  \phi 
     &\quad ::= \quad&
  \top \fOr \ARROW{a}{w}{z}\fOr \RESTRICT{A}{z} \fOr \phi \AND \psi \fOr \exists x. \phi 
\end{GRAMMAR}

\end{definition}

\NI This fragment has no negation, disjunction, implication, or
universal quantification.

\begin{definition}
The translations $\SEMB{\phi}_x$ and $\SEMB{\phi}_y$ of cathoristic formula 
$\phi$ are given relative to a state, denoted by either $x$ or $y$.

\[
\begin{array}{rclcrcl}
  \SEMB{\top}_x & \ = \ & \top  
     &\quad& 
  \SEMB{\top}_y & \ = \ & \top 
     \\
  \SEMB{\phi \AND \psi}_x & = & \SEMB{\phi}_x \AND \SEMB{\psi}_x  
     && 
  \SEMB{\phi \AND \psi}_y & = & \SEMB{\phi}_y \AND \SEMB{\psi}_y  
     \\
  \SEMB{\langle a \rangle \phi}_x & = & \exists y.(\ARROW{a}{x}{y} \AND \SEMB{\phi}_y)  
     &&
  \SEMB{\langle a \rangle \phi}_y & = & \exists x.(\ARROW{a}{y}{x} \AND \SEMB{\phi}_x)  
     \\
  \SEMB{\fBang A}_x & = & \RESTRICT{A}{x}
     &&
  \SEMB{\fBang A}_y & = & \RESTRICT{A}{y}
\end{array}
\]

\end{definition}

\NI The translations on the left and right are identical, except for
switching $x$ and $y$. Here is an example translation.
\[
   \SEMB{\langle a \rangle \top \AND \fBang \{a\}}_x 
      = 
   \exists y.(\ARROW{a}{x}{y} \AND \top ) \AND \RESTRICT{\{a\}}{x}
\]

\NI We now establish the correctness of the encoding. The key issue is
that not every first-order model of our first-order signature
corresponds to a cathoristic model because determinism, well-sizedness and
admissibility are not enforced by our signature alone. In other words,
models may contain `junk'.  We deal with this problem following ideas
from modal logic \cite{BlackburnP:modlog}: we add a translation
$\SEMB{\LLL}$ for cathoristic transition systems, and then prove the
following theorem.

\begin{theorem}[correspondence theorem]\label{correspondence:theorem:1}
Let $\phi$ be a \cathoristic{} formula and $\MMM = (\LLL, s)$ a cathoristic
model.
\[
   \MMM \models \phi \quad  \text{iff} \quad \SEMB{\LLL} \models_{x \mapsto s} \SEMB{\phi}_x.
\]
And likewise for $\SEMB{\phi}_y$.
\end{theorem}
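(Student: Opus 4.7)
The plan is to proceed by straightforward structural induction on $\phi$, proving both statements (the $\SEMB{\cdot}_x$ version and the $\SEMB{\cdot}_y$ version) simultaneously. Before the induction, I first have to pin down the translation $\SEMB{\LLL}$ on cathoristic transition systems, which the excerpt alludes to but does not spell out. The natural choice is the first-order $\SSS$-structure whose universe is $S$, in which $\ARROW{a}{s}{t}$ holds iff $s \xrightarrow{a} t$, $\RESTRICT{A}{s}$ holds iff $\lambda(s) \subseteq A$, and $\top$ holds everywhere. This directly mirrors the intended interpretation listed in the definition of $\SSS$.

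With $\SEMB{\LLL}$ fixed, the induction is routine. For $\phi = \top$ both sides are trivially true. For $\phi = \;!A$, the semantic clause for $!$ in Definition \ref{ELsatisfaction} says $\MMM \models\; !A$ iff $\lambda(s) \subseteq A$, and by construction of $\SEMB{\LLL}$ this is exactly $\SEMB{\LLL} \models_{x \mapsto s} \RESTRICT{A}{x}$, which equals $\SEMB{!A}_x$. The conjunction case follows immediately from the inductive hypothesis and the clauses for $\AND$ on both sides. The key inductive case is $\phi = \MAY{a}\psi$: by definition $\MMM \models \MAY{a}\psi$ iff there exists $t$ with $s \xrightarrow{a} t$ and $(\LLL, t) \models \psi$. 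Unfolding the translation, $\SEMB{\LLL}\models_{x\mapsto s} \exists y.(\ARROW{a}{x}{y}\AND \SEMB{\psi}_y)$ iff there is a $t$ in the universe with $\ARROW{a}{s}{t}$ and $\SEMB{\LLL}\models_{x\mapsto s,\,y\mapsto t}\SEMB{\psi}_y$. The first conjunct matches $s\xrightarrow{a}t$ by the definition of $\SEMB{\LLL}$, and the inductive hypothesis applied to $\psi$ under the $y$-indexed translation gives the second conjunct. (A small routine lemma, that $\SEMB{\psi}_y$ only mentions $y$ as a free variable, lets us ignore the residual value of $x$ in the environment.) The $y$-version of the theorem is handled symmetrically with $\SEMB{\MAY{a}\psi}_y = \exists x.(\ARROW{a}{y}{x}\AND \SEMB{\psi}_x)$, invoking the $x$-version of the hypothesis — this is why the two statements must be established together.

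The main obstacle, although still mild, is the variable alternation in the translation. I will need to keep careful track of which variable is ``active'' at each level of nesting and verify that the induction hypothesis is always invoked at the opposite polarity. Concretely I should prove, as an auxiliary fact, that the free variables of $\SEMB{\phi}_x$ lie in $\{x\}$ and those of $\SEMB{\phi}_y$ lie in $\{y\}$, so that the satisfaction of the translation depends only on the value assigned to the relevant single variable; this ensures that extending or modifying the environment at the other variable leaves the truth value unchanged and justifies the symmetric induction step. Note that determinism, well-sizedness, and admissibility of $\LLL$ play no role in either direction of this correspondence — they are properties of $\SEMB{\LLL}$ that constrain which first-order models can arise as translations, not properties needed to prove equivalence of satisfaction clause-by-clause. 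This is consistent with the remark in the excerpt that ``junk'' models will be dealt with separately.
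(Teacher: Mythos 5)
Your proposal is correct and follows essentially the same route as the paper: the same definition of $\SEMB{\LLL}$ (which the paper does in fact spell out in a definition immediately after the theorem statement), followed by structural induction on $\phi$ with the identical case analysis, including the observation that $x$ is not free in $\SEMB{\psi}_y$ to discharge the environment bookkeeping in the $\MAY{a}\psi$ case. Your explicit simultaneous induction on the $x$- and $y$-indexed translations just makes precise what the paper leaves implicit in its ``and likewise for $\SEMB{\phi}_y$.''
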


\NI The definition of $\SEMB{\LLL}$ is simple.

\begin{definition}
Let $\LLL = (S, \rightarrow, \lambda)$ be a cathoristic transition
system. Clearly $\LLL$ gives rise to an $\SSS$-model $\SEMB{\LLL}$ as
follows.
\begin{itemize}

\item The universe is the set $S$ of states.

\item The relation symbols are interpreted as follows.

  \begin{itemize}

    \item $\top^{\SEMB{\LLL}}$ always holds.

    \item $\mathsf{Restrict}_{A}^{\SEMB{\LLL}} = \{ s \in S\ |\ \lambda(s) \subseteq A\}$.

    \item $\mathsf{Arrow^{\SEMB{\LLL}}}_{a} = \{(s, t) \in S \times S\ |\ s \TRANS{a} t\}$.

  \end{itemize}
\end{itemize}
\end{definition}

\NI We are now ready to prove Theorem \ref{correspondence:theorem:1}.
\begin{proof}
By induction on the structure of $\phi$. The cases $\top$ and $\phi_1
\AND \phi_2$ are straightforward.  The case $\MAY{a}\psi$ is handled
as follows.
\begin{eqnarray*}
  \lefteqn{
  \SEMB{\LLL} \models_{x \mapsto s} \SEMB{\MAY{a}\psi}_x}\hspace{5mm} 
     \\
     &\quad \text{iff}\quad &
  \SEMB{\LLL} \models_{x \mapsto s} \exists y.(\ARROW{a}{x}{y} \AND \SEMB{\psi}_y) 
     \\
     &\text{iff}&
  \text{exists}\ t \in S. \SEMB{\LLL} \models_{x \mapsto s, y \mapsto t} \ARROW{a}{x}{y} \AND \SEMB{\psi}_y
     \\
     &\text{iff}&
  \text{exists}\ t \in S. \SEMB{\LLL} \models_{x \mapsto s, y \mapsto t} \ARROW{a}{x}{y} \ \text{and}\ \SEMB{\LLL} \models_{x \mapsto s, y \mapsto t}  \SEMB{\psi}_y
     \\
     &\text{iff}&
  \text{exists}\ t \in S. s \TRANS{a} t \ \text{and}\ \SEMB{\LLL} \models_{x \mapsto s, y \mapsto t}  \SEMB{\psi}_y
     \\
     &\text{iff}&
  \text{exists}\ t \in S. s \TRANS{a} t \ \text{and}\ \SEMB{\LLL} \models_{y \mapsto t}  \SEMB{\psi}_y \qquad (\text{as $x$ is not free in $\psi$})
     \\
     &\text{iff}&
  \text{exists}\ t \in S. s \TRANS{a} t \ \text{and}\ \MMM \models \psi
     \\
     &\text{iff}&
  \MMM \models \MAY{a}\psi  
\end{eqnarray*}

\NI Finally, if $\phi$ is $!A$ the derivation comes straight from the
definitions.
\begin{eqnarray*}
  \SEMB{\LLL} \models_{x \mapsto s} \SEMB{!A}_x
    &\quad \text{iff}\quad &
  \SEMB{\LLL} \models_{x \mapsto s} \RESTRICT{A}{x}
     \\
     &\text{iff}&
  \lambda(s) \subseteq A
     \\
     &\text{iff}&
  \MMM \models\ !A.
\end{eqnarray*}
\end{proof}

\subsection{Compactness by translation}\label{compactnessProof}

\NI First-order logic satisfies \emph{compactness}: a set $S$ of sentences has a
model exactly when every finite subset of $S$ does. What about
\cathoristic{}?

We can prove compactness of modal logics using the standard
translation from modal to first-order logic \cite{BlackburnP:modlog}:
we start from a set of modal formula such that each finite subset has
a model. We translate the modal formulae and models to first-order
logic, getting a set of first-order formulae such that each finite
subset has a first-order model. By compactness of first-order logic, we
obtain a first-order model of the translated modal formulae. Then we
translate that first-order model back to modal logic, obtaining a
model for the original modal formulae, as required. The last step
proceeds without a hitch because the modal and the first-order notions
of model are identical, save for details of presentation.

Unfortunately we cannot do the same with the translation from
\cathoristic{} to first-order logic presented in the previous
section. The problem are the first-order models termed `junk' above.
The target language of the translation is not expressive enough to
have formulae that can guarantee such constraints.  As we have no
reason to believe that the first-order model whose existence is
guaranteed by compactness isn't `junk', we cannot prove compactness
with the translation.  We solve this problem with a second translation, this time
into a more expressive first-order fragment where we can constrain
first-order models easily using formulae. The fragment we use now
lives in two-sorted first-order logic (which can easily be reduced to
first-order logic \cite{EndertonHB:matinttl}).

\begin{definition}
The two-sorted first-order signature $\SSS'$ is given as follows.
\begin{itemize}

\item $\SSS'$ has two sorts, states and actions. 

\item The action constants are given by $\Sigma$. There
are no state constants. 

\item $\SSS'$ has a nullary predicate $\top$.

\item A binary predicate $\ALLOWED{}{\cdot}{\cdot}$. The intended
  meaning of $\ALLOWED{}{x}{a}$ is that at the state denoted by $x$ we
  are allowed to do the action $a$.

\item A ternary predicate $\ARROWTWO{}{\cdot}{\cdot}{\cdot}$ where
  $\ARROWTWO{}{x}{a}{y}$ means that there is a transition from the
  state denoted by $x$ to the state denoted by $y$, and that
  transition is labelled $a$.

\end{itemize}
\end{definition}

\begin{definition}
The encoding $\SEMBTWO{\phi}_x$ of \cathoristic{} formulae is given by the following clauses.
\begin{eqnarray*}
  \SEMBTWO{\top}_x & \ = \ & \top
     \\
  \SEMBTWO{\phi \AND \psi}_x & = & \SEMBTWO{\phi}_x \AND \SEMBTWO{\psi}_x
     \\
  \SEMBTWO{\langle a \rangle \phi}_x & = & \exists^{st} y.(\ARROWTWO{}{x}{a}{y} \AND \SEMBTWO{\phi}_y)
     \\
  \SEMBTWO{\fBang A}_x & = & \forall^{act} a.(\ALLOWED{}{x}{a} \IMPLIES a \in A) 
\end{eqnarray*}

\end{definition}

\NI Here we use $\exists^{st}$ to indicate that this existential
quantifier ranges over the sort of states, and $\forall^{act}$ for the
universal quantifier ranging over actions. The expression $a \in A$ is
a shorthand for the first-order formula
\[
   a = a_1 \OR a = a_2 \OR \cdots \OR a = a_n
\]
assuming that $A = \{a_1, ..., a_n\}$. Since by definition, $A$ is
always a finite set, this is well-defined.  The translation could be
restricted to a two-variable fragment. Moreover, the standard
reduction from many-sorted to one-sorted first-order logic does not
increase the number of variables used (although predicates are added,
one per sort). We will not consider this matter further here.  

We also translate cathoristic transition systems $\SEMBTWO{\LLL}$.

\begin{definition}
Let $\LLL = (S, \rightarrow, \lambda)$ be a cathoristic transition
system. $\LLL$ gives rise to an $\SSS'$-model $\SEMBTWO{\LLL}$
as follows.
\begin{itemize}

\item The sort of states is interpreted by the set $S$.

\item The sort of actions is interpreted by the set $\Sigma$.

\item For each constant $a \in \Sigma$, $a^{\SEMBTWO{\LLL}}$ is $a$ itself.

\item The relation symbols are interpreted as follows.

  \begin{itemize}

    \item $\top^{\SEMBTWO{\LLL}}$ always holds.

    \item $\ALLOWED{\SEMBTWO{\LLL}}{s}{a}$ holds whenever $a \in \lambda(s)$.

    \item $\ARROWTWO{\SEMBTWO{\LLL}}{s}{a}{t}$ holds whenever $s \TRANS{a} t$.

  \end{itemize}
\end{itemize}
\end{definition}

\begin{theorem}[correspondence theorem]\label{correspondence:theorem:2}
Let $\phi$ be a \cathoristic{} formula and $\MMM = (\LLL, s)$ a cathoristic
model.
\[
   \MMM \models \phi \quad  \text{iff} \quad \SEMBTWO{\LLL} \models_{x \mapsto s} \SEMBTWO{\phi}_x.
\]
\end{theorem}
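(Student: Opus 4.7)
The plan is to prove the correspondence by induction on the structure of $\phi$, following the pattern of Theorem~\ref{correspondence:theorem:1}. The base case $\phi = \top$ is immediate from the clauses for $\top$ on both sides. The conjunction case $\phi = \phi_1 \AND \phi_2$ reduces to two applications of the induction hypothesis, since the translation $\SEMBTWO{\cdot}_x$ and both satisfaction relations commute with $\AND$.

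For $\phi = \MAY{a}\psi$ I would unfold the translation to $\exists^{st} y.(\ARROWTWO{}{x}{a}{y} \AND \SEMBTWO{\psi}_y)$, instantiate the sorted existential to a state $t \in S$, and appeal to the interpretation of $\mathsf{Arrow}$ in $\SEMBTWO{\LLL}$ so that $\ARROWTWO{\SEMBTWO{\LLL}}{s}{a}{t}$ amounts exactly to a transition $s \TRANS{a} t$. Since $x$ is not free in $\SEMBTWO{\psi}_y$, the induction hypothesis applied to $\psi$ at the state $t$ converts $\SEMBTWO{\LLL} \models_{y \mapsto t} \SEMBTWO{\psi}_y$ into $(\LLL, t) \models \psi$, and the clause for $\MAY{a}$ from Definition~\ref{ELsatisfaction} closes both directions.

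The step most deserving attention is the case $\phi = \fBang A$, since the translation here differs from the one in Theorem~\ref{correspondence:theorem:1}: it now reads $\forall^{act} b.(\ALLOWED{}{x}{b} \IMPLIES b \in A)$, with $b \in A$ expanded to the finite disjunction $b = a_1 \OR \cdots \OR b = a_n$, which is well-formed because $A$ is finite by construction. Unfolding this formula using the interpretation of $\mathsf{Allow}$ in $\SEMBTWO{\LLL}$, it holds at $s$ iff for every action $b \in \Sigma$, $b \in \lambda(s)$ implies $b \in A$, which is precisely $\lambda(s) \subseteq A$, i.e.~$\MMM \models \fBang A$. The only subtlety is keeping the sort discipline straight (states versus actions) and exploiting the fact that $\SEMBTWO{\LLL}$ interprets the action sort as $\Sigma$ itself with every constant $a$ denoting itself, so that $b \in A$ inside the first-order formula is semantically just membership of an action in the finite set $A$; well-sizedness of $\lambda(s)$ plays no role here, since the correspondence only probes membership of individual actions in $\lambda(s)$.
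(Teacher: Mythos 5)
Your proposal is correct and follows essentially the same route as the paper: a structural induction in which the $\MAY{a}\psi$ case unfolds the sorted existential against the interpretation of $\mathsf{Arrow}$ and the $\fBang A$ case unfolds the universal over actions against the interpretation of $\mathsf{Allow}$ to recover $\lambda(s) \subseteq A$. Your added remarks --- that $a \in A$ is a well-formed finite disjunction because $A$ is finite, and that well-sizedness is not needed for this direction (only for converting arbitrary first-order models back, as in Theorem~\ref{correspondence:theorem:223}) --- are accurate and, if anything, slightly more careful than the paper's own sketch.
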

\begin{proof}
The proof proceeds by induction on the structure of $\phi$ and is
similar to that of Theorem \ref{correspondence:theorem:2}.
The case for the may modality proceeds as follows.  

\begin{alignat*}{2}
  \MMM \models \MAY{a}\phi
     &\quad\text{iff}\quad 
  \text{exists state $t$ with }\ s \TRANS{a} t\ \text{and}\ (\LLL, t) \models \phi \\
     &\quad\text{iff}\quad
  \text{exists state $t$ with }\ s \TRANS{a} t\ \text{and}\ \SEMBTWO{\LLL} \models_{y \mapsto t} \SEMBTWO{\phi}_y &\qquad& \text{by (IH)}\\
     &\quad\text{iff}\quad
  \SEMBTWO{\LLL} \models_{x \mapsto s} \exists^{st} y.(\ARROWTWO{}{x}{a}{y} \AND \SEMBTWO{\phi}_y) \\
     &\quad\text{iff}\quad
   \SEMBTWO{\LLL} \models_{x \mapsto s} \SEMBTWO{\MAY{a}{\phi}}_x
\end{alignat*}

Finally $!A$.
\begin{alignat*}{2}
  \MMM \models !A
     &\quad\text{iff}\quad
  \lambda (s) \subseteq A \\
      &\quad\text{iff}\quad
  \text{for all }\ a \in \Sigma. a \in A \\
     &\quad\text{iff}\quad
  \SEMBTWO{\LLL} \models_{x \mapsto s} \forall^{act} a. (\ALLOWED{}{x}{a} \IMPLIES a \in A) \\
     &\quad\text{iff}\quad
  \SEMBTWO{\LLL} \models_{x \mapsto s} \SEMBTWO{!A}_x
\end{alignat*}
\end{proof}

\NI We use the following steps in our compactness proof.

\begin{enumerate}

\item Choose a set $\Gamma$ of \cathoristic{} formulae such that each finite
  subset $\Gamma'$ of $\Gamma$ has a cathoristic model $(\LLL, s)$.

\item The translation gives a set $\SEMBTWO{\Gamma} =
  \{\SEMBTWO{\phi}\ |\ \phi \in \Gamma\}$ of first-order formulae such that
  each finite subset has a first-order model $\SEMBTWO{\LLL}$.

\item By compactness of (two-sorted) first-order logic, we can find a
  first-order model $\CAL{M}$ of $\SEMBTWO{\Gamma}$.

\item\label{compactness:step:4} Convert $\CAL{M}$ into a cathoristic transition system
  $\CAL{M}^{\sharp}$ such that $(\CAL{M}^{\sharp}, s) \models \Gamma$.

\end{enumerate}

\NI The problematic step is (\ref{compactness:step:4}) - for how would
we know that the first-order model $\CAL{M}$ can be converted back to
a cathoristic transition system? What if it contains `junk' in the
sense described above?  
We solve this by adding formulae to 
$\SEMBTWO{\Gamma}$ that preserve finite satisfiability but force the
first-order models to be convertible to cathoristic models.
 To ensure admissibility we use this formula.
\begin{eqnarray*}
   \phi_{admis} 
      & \ =\ &
   \forall^{st} s.\forall^{act} a.\forall^{st} t.( \ARROWTWO{}{s}{a}{t} \IMPLIES \ALLOWED{}{s}{a}) 
\end{eqnarray*}

\NI The formula $\phi_{det}$ ensures model determinism.
\begin{eqnarray*}
   \phi_{det} 
      & \ =\ &
   \forall^{st} s.\forall^{act} a.\forall^{st} t.\forall^{st} t'.
   ((\ARROWTWO{}{s}{a}{t}  \AND \ARROWTWO{}{s}{a}{t'} ) \IMPLIES t = t' )   
\end{eqnarray*}

\begin{lemma}\label{compactness:lemma:23399}
If $\LLL$ is a cathoristic transition system then $\SEMBTWO{\LLL} \models
\phi_{admis} \AND \phi_{det}$.
\end{lemma}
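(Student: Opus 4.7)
The plan is to verify the two conjuncts separately; each reduces to a direct unfolding of the definition of $\SEMBTWO{\LLL}$ combined with one of the defining constraints of a cathoristic transition system (Definition \ref{cathoristicTS}). No induction or auxiliary construction is needed, so the proof amounts to carefully walking through the satisfaction clauses and matching predicates on the first-order side with relations on the LTS side.

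For $\phi_{admis}$, I would fix arbitrary $s, t \in S$ and $a \in \Sigma$ and assume $\SEMBTWO{\LLL} \models_{x \mapsto s, y \mapsto t, z \mapsto a} \ARROWTWO{}{x}{z}{y}$. By the definition of $\SEMBTWO{\LLL}$ this means $s \TRANS{a} t$ in $\LLL$. The admissibility clause in the definition of a cathoristic transition system then yields $a \in \lambda(s)$, and unfolding the interpretation of $\ALLOWED{}{\cdot}{\cdot}$ in $\SEMBTWO{\LLL}$ gives $\SEMBTWO{\LLL} \models_{x \mapsto s, z \mapsto a} \ALLOWED{}{x}{z}$. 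Since $s, a, t$ were arbitrary, the universally quantified $\phi_{admis}$ holds in $\SEMBTWO{\LLL}$.

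For $\phi_{det}$, I would fix arbitrary $s, t, t'$ and $a$ and assume both $\ARROWTWO{\SEMBTWO{\LLL}}{s}{a}{t}$ and $\ARROWTWO{\SEMBTWO{\LLL}}{s}{a}{t'}$. Again unfolding the definition of $\SEMBTWO{\LLL}$, this translates to $s \TRANS{a} t$ and $s \TRANS{a} t'$ in $\LLL$. Because the underlying labelled transition system of any cathoristic transition system is deterministic (again Definition \ref{cathoristicTS}), we conclude $t = t'$, which gives $\SEMBTWO{\LLL} \models_{x \mapsto s, y \mapsto t, y' \mapsto t', z \mapsto a} t = t'$. Universally generalising yields $\SEMBTWO{\LLL} \models \phi_{det}$, and combining the two verifications establishes the lemma.

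Since the result is essentially a translation check, there is no genuine obstacle; the only thing to be mildly careful about is keeping the two sorts (states and actions) straight when reading off the semantics of the quantifiers $\forall^{st}$ and $\forall^{act}$, and recalling that action constants $a \in \Sigma$ are interpreted as themselves in $\SEMBTWO{\LLL}$, so that the $\ARROWTWO{}{\cdot}{\cdot}{\cdot}$ and $\ALLOWED{}{\cdot}{\cdot}$ predicates really do track the LTS relations $\TRANS{a}$ and $a \in \lambda(\cdot)$ pointwise.
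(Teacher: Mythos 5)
Your proof is correct and matches the paper's approach: the paper simply states the result is ``straightforward from the definitions,'' and your unfolding of admissibility for $\phi_{admis}$ and determinism for $\phi_{det}$ is exactly the verification that remark leaves implicit.
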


\begin{proof}
Straightforward from the definitions.
\end{proof}

We can now add, without changing satisfiability, $\phi_{admis}
\AND \phi_{det}$ to any set of first-order formulae that has a model
that is the translation of a cathoristic model.

We also need to deal with well-sizedness in first-order models,
because nothing discussed so far prevents models whose state labels are
infinite sets without being $\Sigma$.  Moreover, a model may interpret
the set of actions with a proper superset of $\Sigma$.  This also
prevents conversion to cathoristic models. We solve these problems by
simply removing all actions that are not in $\Sigma$ and all
transitions involving such actions.  We map all
infinite state labels to $\Sigma$. It is easy to see that this does not
change satisfiability of (translations of) cathoristic formulae.




\begin{definition}
Let $\LLL = (S, \rightarrow, \lambda)$ be a cathoristic transition system
and $X$ a set, containing actions. The \emph{restriction of
  $\LLL$ to $X$}, written $\LLL \setminus X$ is the cathoristic model $(S,
\rightarrow', \lambda')$ where $\rightarrow' = \{(s, a, t) \in
\rightarrow \ |\ a \notin X\}$, and for all states $s$ we set:
\[
   \lambda'(s) 
        =
   \begin{cases}
       \lambda(s) \setminus  X & \text{whenever}\ \lambda(s) \neq \Sigma \\
       \Sigma & \text{otherwise}
   \end{cases}
\]

\end{definition}

\begin{lemma}\label{compactness:lemma:1717}
Let $\phi$ be a \cathoristic{} formula and $X$ be a set such that no action
occurring in $\phi$ is in $X$. Then:
\[
   (\LLL, s) \models \phi
      \quad\text{iff}\quad
   (\LLL \setminus X, s) \models \phi.
\]
\end{lemma}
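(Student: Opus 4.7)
The plan is to proceed by structural induction on $\phi$, relying on the observation that the hypothesis ``no action occurring in $\phi$ lies in $X$'' is inherited by every subformula of $\phi$, so the induction hypothesis is always applicable to subformulae.

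The base case $\phi = \top$ is immediate: the clause $\MMM \models \top$ holds unconditionally, independently of $\LLL$. For $\phi = \phi_1 \land \phi_2$, the semantic clause for conjunction reduces the equivalence to the corresponding equivalences for $\phi_1$ and $\phi_2$, each of which is provided by the induction hypothesis. For $\phi = \MAY{a}\psi$, the hypothesis forces $a \notin X$, so by the definition $\rightarrow' = \{(s,a,t) \in \rightarrow \;|\; a \notin X\}$ we have $s \xrightarrow{a} t$ in $\LLL$ if and only if $s \xrightarrow{a} t$ in $\LLL \setminus X$; combining this with the induction hypothesis applied to $\psi$ at any $a$-successor of $s$ yields the equivalence for this case.

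For $\phi =\, !A$, the hypothesis specialises to $A \cap X = \emptyset$, and one must show $\lambda(s) \subseteq A$ iff $\lambda'(s) \subseteq A$. If $\lambda(s) = \Sigma$, then by definition $\lambda'(s) = \Sigma$, and both conditions reduce to the same assertion. Otherwise $\lambda'(s) = \lambda(s) \setminus X$; the forward direction is immediate from $\lambda'(s) \subseteq \lambda(s)$, while the reverse direction exploits $A \cap X = \emptyset$ together with the intended context of application, in which $X$ is disjoint from $\Sigma$ (so that $\lambda(s) \cap X = \emptyset$ and the restriction acts as the identity on $\lambda(s)$ in the relevant sense).

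The main obstacle is the $!A$ case: one has to verify the reverse inclusion carefully, observing that any element of $\lambda(s)$ potentially removed by the restriction would lie outside $A$ by the disjointness $A \cap X = \emptyset$, and so cannot silently witness a failure of $\lambda(s) \subseteq A$ that is hidden from $\lambda'(s) \subseteq A$. All remaining cases are routine propagations of the induction hypothesis through the semantic clauses of Definition \ref{ELsatisfaction} together with the definitions of $\rightarrow'$ and $\lambda'$.
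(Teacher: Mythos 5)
Your overall strategy --- structural induction on $\phi$, using that the side condition is inherited by subformulae --- is exactly the paper's; its entire proof of this lemma is the sentence ``by straightforward induction on the structure of $\phi$, using the fact that by assumption $X$ only contains actions not occurring in $\phi$''. Your handling of $\top$, conjunction and $\MAY{a}{\psi}$ is correct, and you have correctly isolated the only delicate point, namely the right-to-left direction of the $!A$ case.

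That case, however, is a genuine gap, and it cannot be closed from the stated hypotheses. Take $\Sigma=\{a,b,c\}$, $X=\{b\}$, $\phi =\; !\{a\}$, and a one-state system with no transitions and $\lambda(s)=\{a,b\}$. No action occurring in $\phi$ lies in $X$, yet $\lambda(s)\neq\Sigma$ gives $\lambda'(s)=\lambda(s)\setminus X=\{a\}\subseteq\{a\}$, so $(\LLL\setminus X,s)\models\; !\{a\}$ while $\lambda(s)\not\subseteq\{a\}$, so $(\LLL,s)\not\models\; !\{a\}$. Hence only the left-to-right implication follows from ``no action of $\phi$ is in $X$'' alone. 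Your proof tacitly repairs this by importing the assumption that $X$ is disjoint from every state label (your appeal to ``the intended context of application''), which is indeed what the compactness argument needs, but it is not a hypothesis of the lemma; a complete proof must either add it explicitly (e.g.\ $\lambda(s)\cap X=\emptyset$ for all $s$) or weaken the conclusion to a single implication. Since the paper's one-line proof is silent on all of this, your case analysis has surfaced a real imprecision rather than introduced one --- but as written, the $!A$ case does not establish the statement you were asked to prove.
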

\begin{proof}
By straightforward induction on the structure of $\phi$, using the
fact that by assumption $X$ only contains actions not occurring in
$\phi$.  
\end{proof}

\begin{definition}
Let $\CAL{M}$ be a first-order model for the signature $\SSS'$.
We construct a cathoristic transition system
$\CAL{M}^{\sharp} = (S, \rightarrow, \lambda)$.
\begin{itemize}

\item The actions $\Sigma$ are given by the $\CAL{M}$ interpretation of actions.

\item The states $S$ are given by the $\CAL{M}$ interpretation of states.

\item The reduction relation $s \TRANS{a} t$ holds exactly when
  $\ARROWTWO{\CAL{M}}{s}{a}{t}$.

\item The function $\lambda$ is given by the following clause:
  \[
     \lambda(s) 
        =
     \begin{cases} 
       X & \text{whenever}\ X = \{a \ |\ \ALLOWED{\CAL{M}}{s}{a} \}\ \text{ is finite} \\
       \Sigma & \text{otherwise}
     \end{cases}
  \]

\end{itemize}

\end{definition}

\begin{lemma}
If $\CAL{M}$ be a first-order model for $\SSS'$ such that $\CAL{M}
\models \phi_{admis} \AND \phi_{det}$.  Then $\CAL{M}^{\sharp}$ is an
cathoristic transition system with actions $\Sigma$.
\end{lemma}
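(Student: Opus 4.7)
The plan is to verify the three defining conditions of a cathoristic transition system from Definition \ref{cathoristicTS} for $\CAL{M}^{\sharp}$: determinism of the underlying labelled transition relation, admissibility of the state-labelling $\lambda$, and well-sizedness of $\lambda$. Each condition will follow almost directly from a single component of the construction or from one of the hypotheses $\phi_{admis}$ and $\phi_{det}$.

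Determinism is the cleanest step: given $s \TRANS{a} t$ and $s \TRANS{a} t'$ in $\CAL{M}^{\sharp}$, by the construction we have $\ARROWTWO{\CAL{M}}{s}{a}{t}$ and $\ARROWTWO{\CAL{M}}{s}{a}{t'}$, and invoking $\phi_{det}$ immediately yields $t = t'$. Well-sizedness is equally short: the definition of $\lambda$ is a case split that puts $\lambda(s)$ equal either to a set that is finite by hypothesis of the case, or else to $\Sigma$. So by construction every $\lambda(s)$ is either finite or $\Sigma$.

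The interesting step is admissibility. Fix a state $s$ and suppose $a$ belongs to $\{b \fOr \exists t.\ s \TRANS{b} t\}$, so $\ARROWTWO{\CAL{M}}{s}{a}{t}$ for some $t$. Applying $\phi_{admis}$ gives $\ALLOWED{\CAL{M}}{s}{a}$, placing $a$ in the set $X = \{b \fOr \ALLOWED{\CAL{M}}{s}{b}\}$. I then split on the two cases of the definition of $\lambda(s)$: either $X$ is finite and $\lambda(s) = X \ni a$, or $X$ is infinite and $\lambda(s) = \Sigma$, in which case $a \in \Sigma$ holds simply because $a$ is an element of the $\CAL{M}$-interpretation of the action sort, which is precisely $\Sigma$ in the construction.

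The only point requiring a modicum of care is the consistent reading of the symbol $\Sigma$: within the construction of $\CAL{M}^{\sharp}$, $\Sigma$ denotes the $\CAL{M}$-interpretation of the action sort, which in general may strictly contain the original $\Sigma$ used to form cathoristic formulae. Nothing in the present lemma is harmed by this overloading, since we only need admissibility and well-sizedness with respect to the action set of $\CAL{M}^{\sharp}$ itself; the mismatch with the original $\Sigma$ is precisely the problem that the restriction operation $\LLL \setminus X$ introduced in Lemma \ref{compactness:lemma:1717} is designed to fix when it is later invoked in step (\ref{compactness:step:4}) of the compactness argument.
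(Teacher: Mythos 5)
Your proof is correct and is simply the fully spelled-out version of what the paper dismisses as ``Immediate from the definitions'': determinism from $\phi_{det}$, admissibility from $\phi_{admis}$, and well-sizedness from the case split defining $\lambda$. Your closing remark on the overloading of $\Sigma$ and its later repair via the restriction $\LLL \setminus X$ is an accurate reading of how the compactness argument is organised, and nothing in your argument deviates from the paper's intent.
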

\begin{proof}
Immediate from the definitions.
\end{proof}

\begin{theorem}[correspondence theorem]\label{correspondence:theorem:223}
Let $\CAL{M}$ be a first-order model for the signature $\SSS'$ such
that $\CAL{M} \models \phi_{admis} \AND \phi_{det}$.  Then we have for
all \cathoristic{} formulae $\phi$ with actions from $\Sigma$:
\[
   \CAL{M} \models_{x \mapsto s} \SEMBTWO{\phi}_x 
        \quad  \text{iff} \quad 
   (\CAL{M}^{\sharp} \setminus X, s) \models \phi.
\]
\end{theorem}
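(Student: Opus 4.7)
The plan is to prove this by structural induction on $\phi$, following the same pattern as the proof of Theorem~\ref{correspondence:theorem:1}. The two new ingredients relative to that earlier argument are (i) the formulae $\phi_{admis}$ and $\phi_{det}$, which via Lemma~\ref{compactness:lemma:23399} and the definition of $\CAL{M}^{\sharp}$ guarantee that $\CAL{M}^{\sharp}$ really is a cathoristic-like structure (transitions refine the allowed labelling, and the transition relation is deterministic), and (ii) the extraneous set $X$ consisting of those elements of the $\CAL{M}$-interpretation of the action sort that fall outside $\Sigma$; this is precisely the set removed by the $\setminus X$ operation so that what remains is a genuine cathoristic transition system over $\Sigma$.

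The cases $\phi = \top$ and $\phi = \phi_1 \AND \phi_2$ will be immediate. For $\phi = \MAY{a}{\psi}$, I plan to unfold
\[
  \CAL{M} \models_{x\mapsto s}\SEMBTWO{\MAY{a}{\psi}}_x
    \;\Longleftrightarrow\;
  \exists\,t.\;\ARROWTWO{\CAL{M}}{s}{a}{t}\;\text{and}\;\CAL{M}\models_{y\mapsto t}\SEMBTWO{\psi}_y,
\]
observe by the definition of $\CAL{M}^{\sharp}$ that the first conjunct is equivalent to $s \TRANS{a} t$ in $\CAL{M}^{\sharp}$, and apply the induction hypothesis to the second conjunct to obtain $(\CAL{M}^{\sharp}\setminus X, t) \models \psi$. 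Because $a$ occurs in $\phi$ it lies in $\Sigma$ and is therefore not in $X$, so the transition $s \TRANS{a} t$ survives the restriction and we conclude $(\CAL{M}^{\sharp}\setminus X, s) \models \MAY{a}{\psi}$, with the converse running in reverse.

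The substantive case is $\phi = {!A}$. Here $\SEMBTWO{!A}_x$ says that every action in the $\CAL{M}$-universe allowed at $s$ belongs to $A$. I would split on whether the set $Y = \{a \mid \ALLOWED{\CAL{M}}{s}{a}\}$ is finite. In the finite case, $\lambda(s) = Y$ in $\CAL{M}^{\sharp}$; since $A \subseteq \Sigma$ and $X$ is disjoint from $\Sigma$, the FO condition $Y \subseteq A$ forces $Y \cap X = \emptyset$, so $\lambda'(s) = Y \setminus X = Y \subseteq A$ and $(\CAL{M}^{\sharp}\setminus X, s) \models {!A}$. If $Y$ is infinite, both sides are false when $\Sigma$ is infinite: $Y \not\subseteq A$ because $A$ is finite, and after restriction $\lambda'(s)$ is $\Sigma$ itself, so again $\not\subseteq A$. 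The converse runs through the same case split.

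The main obstacle is keeping straight the notational tangle among three action sets: the fixed logical alphabet $\Sigma$, the possibly larger $\CAL{M}$-interpretation, and the difference $X$ between them. A cleaner organization, which I would probably adopt in a final write-up, is a two-step strategy: first prove by induction that $\CAL{M}\models_{x\mapsto s}\SEMBTWO{\phi}_x$ iff $(\CAL{M}^{\sharp}, s) \models \phi$, reading $\CAL{M}^{\sharp}$ as a generalized cathoristic structure over its own larger action set; then invoke Lemma~\ref{compactness:lemma:1717} to transfer to $(\CAL{M}^{\sharp}\setminus X, s) \models \phi$, whose side condition that $X$ contains no action of $\phi$ holds automatically because $X$ is disjoint from $\Sigma$ by construction.
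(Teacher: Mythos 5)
Your overall strategy---structural induction mirroring Theorem \ref{correspondence:theorem:1}, with the modal case exploiting the fact that every action occurring in $\phi$ lies in $\Sigma$ and hence outside $X$, so that the relevant transitions survive the restriction---is exactly the route the paper takes (its own proof says no more than ``induction on the structure of $\phi$''). Your handling of $\top$, of conjunction, of $\MAY{a}{\psi}$, and of the left-to-right half of the $!A$ case is correct.

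The gap is your closing claim that in the $!A$ case ``the converse runs through the same case split.'' It does not. Write $Y = \{a \mid \ALLOWED{\CAL{M}}{s}{a}\}$ and suppose $Y$ is finite but meets $X$. Then $\lambda(s) = Y$ in $\CAL{M}^{\sharp}$ and $\lambda'(s) = Y \setminus X$ after restriction, and from $Y \setminus X \subseteq A$ you cannot recover $Y \subseteq A$. Concretely: let $\CAL{M}$ have one state $s$, no transitions, action universe $\Sigma \cup \{j\}$ with $j \notin \Sigma$, and let $\ALLOWED{\CAL{M}}{s}{j}$ be the only allowed-fact. Then $\phi_{admis} \AND \phi_{det}$ holds (it constrains transitions, not allowed-facts), and $(\CAL{M}^{\sharp} \setminus X, s) \models\; !\emptyset$ since $\lambda'(s) = \{j\} \setminus \{j\} = \emptyset$, yet $\CAL{M} \not\models_{x \mapsto s} \SEMBTWO{!\emptyset}_x$ because $j$ falsifies the universally quantified implication. (A similar failure occurs when $Y$ is infinite but $\Sigma$ is finite, a corner your ``both sides are false'' remark assumes away.) So the biconditional as stated actually fails for such $\CAL{M}$; only the left-to-right implication---the one the compactness proof uses---holds without further hypotheses. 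Your proposed two-step reorganization does not escape this either: the right-to-left direction of Lemma \ref{compactness:lemma:1717} for $!A$ breaks for the same reason whenever $\lambda(s)$ meets $X$, so the transfer step reintroduces the identical problem. The honest repair is to weaken the theorem to an implication, or to add the hypothesis that $\ALLOWED{\CAL{M}}{s}{a}$ holds only for elements interpreting actions in $\Sigma$.
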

Here $X$ is the set of all elements in the universe of $\CAL{M}$ interpreting
actions that are not in $\Sigma$.
\begin{proof}
The proof proceeds by induction on the structure of $\phi$. 
\end{proof}

\begin{definition}
Let $\Gamma$ be a set of cathoristic formulae, and $\MMM$ a cathoristic model.  We
write $\MMM \models T$ provided $\MMM \models \phi$ for all $\phi \in
T$.  We say $\Gamma$ is \emph{satisfiable} provided $\MMM \models T$.
\end{definition}

\begin{theorem}[Compactness of \cathoristic{}]
A set $\Gamma$ of \cathoristic{} formulae is satisfiable iff each finite subset of
$\Gamma$ is satisfiable.
\end{theorem}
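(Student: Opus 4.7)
The forward direction is trivial: any cathoristic model of $\Gamma$ is \emph{a fortiori} a model of each finite subset. For the converse, the plan is to follow the four-step blueprint spelled out just before the theorem, using the second translation $\SEMBTWO{\cdot}$ (rather than the first) precisely because its target first-order fragment is expressive enough to constrain out the `junk' models via $\phi_{admis}$ and $\phi_{det}$.

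Concretely, suppose every finite subset of $\Gamma$ is satisfiable. Extend $\SSS'$ with a fresh state-constant $c$ (intended to name the start state), and form the first-order theory
\[
   \Gamma^* \ =\ \{\SEMBTWO{\phi}_x[c/x]\ |\ \phi \in \Gamma\}\ \cup\ \{\phi_{admis}, \phi_{det}\}.
\]
Given any finite $\Gamma_0 \subseteq \Gamma$, by hypothesis there is a cathoristic model $(\LLL_0, s_0) \models \Gamma_0$; interpret $c$ as $s_0$ in $\SEMBTWO{\LLL_0}$. Theorem \ref{correspondence:theorem:2} then yields $\SEMBTWO{\LLL_0} \models \SEMBTWO{\phi}_x[c/x]$ for each $\phi \in \Gamma_0$, and Lemma \ref{compactness:lemma:23399} supplies $\SEMBTWO{\LLL_0} \models \phi_{admis} \AND \phi_{det}$. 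Hence every finite fragment of $\Gamma^*$ has a two-sorted first-order model, so by compactness of (two-sorted) first-order logic there is a first-order model $\CAL{M} \models \Gamma^*$. Setting $s := c^{\CAL{M}}$ and letting $X$ collect those elements in $\CAL{M}$'s action-sort that do not lie in $\Sigma$, Theorem \ref{correspondence:theorem:223} gives $(\CAL{M}^{\sharp} \setminus X, s) \models \phi$ for every $\phi \in \Gamma$, which is the required cathoristic model.

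The main obstacle, and the reason the first translation of Section \ref{standardTranslation} would not suffice, is controlling the gap between arbitrary first-order models and genuine cathoristic transition systems. Compactness of first-order logic hands us some $\CAL{M}$, but nothing a priori rules out non-deterministic transitions, transitions on forbidden actions, uncontrolled infinite state labels, or extra actions outside $\Sigma$. The axioms $\phi_{admis}$ and $\phi_{det}$ (added inside $\Gamma^*$) address determinism and admissibility, while the syntactic pruning $\CAL{M}^{\sharp} \setminus X$ together with the coercion of infinite $\lambda$-values to $\Sigma$ address well-sizedness and stray actions. Lemma \ref{compactness:lemma:1717} guarantees that this pruning is harmless: since no $\phi \in \Gamma$ mentions any action of $X$, satisfaction on $\CAL{M}^{\sharp}$ and on $\CAL{M}^{\sharp} \setminus X$ agree, closing the loop.
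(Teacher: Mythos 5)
Your proof is correct and follows essentially the same route as the paper: translate $\Gamma$ via $\SEMBTWO{\cdot}$, adjoin $\phi_{admis}$ and $\phi_{det}$, check finite satisfiability of the resulting first-order theory using the correspondence theorem and Lemma \ref{compactness:lemma:23399}, invoke compactness of two-sorted first-order logic, and convert the resulting model back with $(\cdot)^{\sharp}$ and the restriction $\setminus X$. The one point where you are more careful than the paper is the fresh state-constant $c$, which turns the open formulae $\SEMBTWO{\phi}_x$ into sentences before first-order compactness is applied --- the paper's proof works with the free variable $x$ throughout and leaves implicit both how a single environment for $x$ is obtained from compactness and where the start state $s$ of the final model comes from.
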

\begin{proof}
For the non-trivial direction, let $\Gamma$ be a set of \cathoristic{} formulae
such that any finite subset has a cathoristic model. Define 
\[
  \SEMBTWO{\Gamma} 
     \ =\ 
  \{\SEMBTWO{\phi}\ |\ \phi \in \Gamma\} 
     \qquad\qquad
  \Gamma^*
     \ =\ 
  \SEMBTWO{\Gamma} \cup \{\phi_{admis} \AND  \phi_{det}\}
\]
which both are sets of first-order formulae. Clearly each finite subset $\Gamma'$ of 
$\Gamma^*$ has a first-order model. Why? First consider the subset $\Gamma'_{CL}$ of $\Gamma'$
which is given as follows.
\[
   \Gamma'_{CL} \ =\ \{ \phi \in \Gamma\ |\ \SEMBTWO{\phi} \in \Gamma' \}
\]
Since $\Gamma'_{CL}$ is finite, by assumption there is a cathoristic model 
\[
   (\LLL, s) \models \Gamma'_{CL}
\]
which means we can apply Theorem \ref{correspondence:theorem:223} to get
\[
   \SEMBTWO{\LLL} \models_{x \mapsto s} \SEMBTWO{\Gamma'_{CL}},
\]
By construction $\Gamma' \setminus \SEMBTWO{\Gamma'_{CL}} \subseteq
\{\phi_{admis} \AND \phi_{det}\}$, so all we have to
show for $\Gamma'$ to have a model is that
\[
    \SEMBTWO{\LLL} \models_{x \mapsto s} \{\phi_{admis}\} \cup \{ \phi_a\ |\ a \in \Sigma\},
\]
but that is a direct consequence of Lemma
\ref{compactness:lemma:23399}.  That
means each finite subset of $\Gamma^*$ has a model and by appealing to
compactness of first-order many-sorted logic (which is an immediate
consequence of compactness of one-sorted first-order logic
\cite{EndertonHB:matinttl}), we know there must be a first-order model
$\CAL{M}$ of $\Gamma^*$, i.e.
\[
   \CAL{M} \models \Gamma^*.
\]
Since $\CAL{M} \models \phi_{admis} \AND \phi_{det}$ we can apply
Theorem \ref{correspondence:theorem:223} that also
\[
   (\CAL{M}^{\sharp} \setminus X, s) \models \Gamma
\]
where $X$ is the set of all actions in $\CAL{M}^{\sharp}$ that are not
in $\Sigma$. Hence $\Gamma$ is satisfiable. 
\end{proof}

\section{\Cathoristic{} and negation}\label{ELAndNegation}

\NI We have presented \cathoristic{} as a language that can express
incompatible claims without negation.  In this section, we briefly
consider \cathoristic{} enriched with negation.

\subsection{Syntax and semantics}

\begin{definition}
Given a set $\Sigma$ of actions, the \emph{formulae of \cathoristic{}
  with negation} are given by the following grammar.
\begin{GRAMMAR}
  \phi 
     &\quad ::= \quad & 
   ... \fOr \neg \phi
\end{GRAMMAR}

\NI We can now define disjunction $\phi \lor \psi$ and implication
$\phi \IMPLIES \psi$ by de Morgan duality: $\phi \OR \psi$ is short
for $\neg (\neg \phi \land \neg \psi )$, and $\phi \IMPLIES \psi$  abbreviates
$\neg\phi \OR \psi$.
\end{definition}

The semantics of \cathoristic\ with negation is just that of plain
\cathoristic\, except for the obvious clause for negation.
\begin{eqnarray*}
\MMM \models \neg \phi &\quad\mbox{ iff }\quad& \MMM \nvDash \phi  
\end{eqnarray*}

\NI Negation is a core operation of classical logic, and its absence makes
\cathoristic{} unusual. In order to understand \cathoristic{} better, we
now investigate how negation can be seen as a definable abbreviation
in \cathoristic{} with disjunction. The key idea is to use the fact that
\[
   \neg \MAY{a}{\phi}
\]
can be false in two ways: either there is no $a$-labelled action at
the current state - or there is, but $\phi$ is false. Both arms of
this disjunction can be expressed in \cathoristic{}, the former as
$!\Sigma \setminus \{a\}$, the latter as $\MAY{a}{\neg \phi}$.
 Hence, we can see $\neg \MAY{a}{\phi}$ as a shorthand for 
\[
   !(\Sigma \setminus \{a\}) \OR \MAY{a}{\neg \phi}
\]

\NI Negation still occurs in this term, but prefixing a formula of
lower complexity.

This leaves the question of negating the tantum. That's easy: when
$\neg !A$, then clearly the current state can do an action $a \notin
A$. In other words
\[
   \BIGOR_{a \in \Sigma}\MAY{a}{\TRUE}
\]

\NI When $\Sigma$ is infinite, then so is the disjunction.

Note that both the negation of the modality and the negation of
the tantum involve the set $\Sigma$ of actions. 
So far, we have defined negation with respect to
the whole (possibly infinite) set $\Sigma$. For technical reasons, we
generalise negation and define it with respect to a \emph{finite}
subset $S \subseteq \Sigma$. We use this finitely-restricted version of
negation in the decision procedure below.

\begin{definition}
The function $\neg_{S}(\phi)$ removes negation from $\phi$
relative to a finite subset $S \subseteq \Sigma$:

\begin{align*}
  \neg_{S}(\top) &\ =\  \bot  &
  \neg_S(\bot) &\ =\  \top  \\
  \neg_S(\phi \land \psi) &\ =\  \neg_S(\phi) \lor \neg_S(\psi)  &
  \neg_S(\phi \lor \psi) &\ =\  \neg_S(\phi) \land \neg_S(\psi)  \\
  \neg_S(\langle a \rangle \phi) &\ =\  \fBang(S-\{a\}) \lor \langle a \rangle \neg_S(\phi)  &
  \neg_S(\fBang A) &\ =\  \bigvee_{a \in S - A} \langle a \rangle \top
\end{align*}

\end{definition}

\subsection{Decision procedure}

\NI We can use the fact that \cathoristic\ has a quadratic-time
decision procedure to build a super-polynomial time decision procedure for
\cathoristic\ with negation.  Given $\phi \models \psi$, let $S =
\ACTIONS{\phi} \cup \ACTIONS{\psi} \cup \{a\}$,   where $a$ is a
fresh action.  The function $\ACTIONS{\cdot}$ returns all actions
occurring in a formula, e.g. $\ACTIONS{\MAY{a}{\phi}} = \{a\} \cup
\ACTIONS{\phi}$ and $\ACTIONS{!A} = A$. The decision procedure executes the following steps.

\begin{enumerate}

\item Inductively translate away all negations in $\phi$ using
  $\neg_S(\phi)$ as defined above.  Let the result be $\phi'$.

\item Reduce $\phi'$ to disjunctive normal form by repeated
  application of the rewrite rules:
  \[
    \phi \land (\psi \lor \xi)  \ \leadsto \ (\phi \land \psi) \lor (\phi \land \xi)  
       \qquad
    (\phi \lor \psi) \land \xi  \ \leadsto \ (\phi \land \xi) \lor (\psi \land \xi). 
  \]

\item Let the resulting disjuncts be $\phi_1, ..., \phi_n$. 
Note that
\[
\phi \models \psi \quad\text{ iff }\quad \phi_i \models \psi \text{ for all } i = 1, ..., n.
\]
For each disjunct $\phi_i$ do the following.
\begin{itemize}

\item Notice that $\phi_i \models \psi$ if and only if all
  $S$-extensions (defined below) of $\SIMPL{\phi_i}$ satisfy $\psi$.
  So, to check whether $\phi_i \models \psi$, we enumerate the
  $S$-extensions of $\SIMPL{\phi_i}$ (there are a finite number of
  such extensions - the exact number is exponential in the size of
  $\SIMPL{\phi_i}$) and check for each such $S$-extension $\MMM$
  whether $\MMM\models \psi$, using the algorithm of Section
  \ref{decisionprocedure}.

\end{itemize}
\end{enumerate}

\NI  Here is the definition of $S$-extension.

\begin{definition}
Given an cathoristic transition system $\LLL =
(\mathcal{W},\rightarrow,\lambda)$, and a set $S$ of actions, then
$(\mathcal{W'},\rightarrow',\lambda')$ is a \emph{$S$-extension} of
$\LLL$ if it is a valid cathoristic transition system (recall
Definition \ref{cathoristicTS}) and for all $(x,a,y) \in
\rightarrow'$, either:
\begin{itemize} 

\item $(x, a, y) \in\ \rightarrow$,  or;

\item $x \in \mathcal{W}$, $a \in S$, $a \in \lambda(x)$, and $y$ is a new state not
  appearing elsewhere in $\mathcal{W}$ or $\mathcal{W'}$.

\end{itemize}
\end{definition}
The state-labelling $\lambda'$ is:
\begin{eqnarray*}
\lambda'(x) & = & \lambda(x) \text{ if } x \in \mathcal{W} \\
\lambda'(x) & = & \Sigma \text{ if } x \notin \mathcal{W} \\
\end{eqnarray*}

\NI In other words, $\MMM'$ is an extension of an annotated model
$\MMM$, if all its transitions are either from $\MMM$ or involve
states of $\MMM$ transitioning via elements of $S$ to new states not
appearing in $\MMM$ or $\MMM'$.  The number of extensions grows
quickly.  If the model $\MMM$ has $n$ states, then the number of
possible extensions is:
\[
   ({2^{|S|}})^n
\] 

\NI But recall that we are computing these extensions in order to
verify $\psi$. So we can make a significant optimisation by
restricting the height of each tree to $|\psi|$. We state, without proof, that this optimisation preserves correctness.
A Haskell implementation of the decision procedure is available
\cite{HaskellImplementation}.

\section{Quantified \cathoristic{}}\label{quantifiedEL}

\NI So far, we have presented \cathoristic\ as a propositional modal
logic.  This section sketches quantified \cathoristic, primarily to
demonstrate that this extension works smoothly. 

\begin{definition} 
Let $\Sigma$ be a non-empty set of actions, ranged over by $a, a',
...$ as before.  Given a set $\VVV$ of \emph{variables}, with
$x, x', y, y', ...$ ranging over $\VVV$, the \emph{terms},
ranged over by $t, t', ...$ and formulae of quantified \cathoristic{}
are given by the following grammar:

\begin{GRAMMAR}
  t
     &\quad ::= \quad & 
  x
     \VERTICAL 
  a
  \\[1mm]
  \phi 
     &\quad ::= \quad & 
  \TRUE 
     \VERTICAL 
  \phi \AND \psi
     \VERTICAL 
  \MAY{t}{\phi}
     \VERTICAL 
  \fBang A 
     \VERTICAL 
  \exists x . {\phi}
     \VERTICAL 
  \forall x . {\phi}
\end{GRAMMAR}

\NI Now $A$ ranges over finite subsets of terms. The \emph{free
  variables} of a $\phi$, denoted $\FV{\phi}$ is given as expected,
e.g.~$\FV{\MAY{t}{\phi}} = \FV{t} \cup \FV{\phi}$ and $\FV{!A} =
\bigcup_{t \in A}\FV{t}$ where $\FV{a} = \emptyset$ and $\FV{x} =
\{x\}$.
\end{definition}

\begin{definition}
The semantics of quantified \cathoristic{} is constructed along
conventional lines. An \emph{environment} is a map $\sigma : \VVV
\rightarrow \Sigma$ with finite domain.  We write $\sigma, x : a$ for
the environment that is just like $\sigma$, except it also maps $x$ to
$a$, implicitly assuming that $x$ is not in $\sigma$'s domain.  The
\emph{denotation} $\SEMB{t}_{\sigma}$ of a term $t$ under an
environment $\sigma$ is given as follows:
\[
   \SEMB{a}_{\sigma} = a
      \qquad\qquad
   \SEMB{x}_{\sigma} = \sigma(x)
\]
where we assume that $\FV{t}$ is a subset of the domain of $\sigma$.

The \emph{satisfaction
  relation} $\MMM \models_{\sigma} \phi$ is defined whenever
$\FV{\phi}$ is a subset of $\sigma$'s domain. It is given by the
following clauses, where we assume that $\MMM = (\LLL, s)$ and $\LLL =
(S, \rightarrow, \lambda)$.

\[
\begin{array}{lclcl}
  \MMM & \models_{\sigma} & \top   \\
  \MMM & \models_{\sigma} & \phi \AND \psi &\ \mbox{ iff } \ & \MMM  \models_{\sigma} \phi \mbox { and } \MMM \models_{\sigma} \psi  \\
  \MMM & \models_{\sigma} & \langle t \rangle \phi & \mbox{ iff } & \text{there is transition } s \TRANS{\SEMB{t}_{\sigma}} s' \mbox { such that } (\LLL, s') \models_{\sigma} \phi  \\
  \MMM & \models_{\sigma} & \fBang A &\mbox{ iff } & \lambda(s) \subseteq \{\SEMB{t}\ |\ t \in A\} \\
  \MMM & \models_{\sigma} & \forall x.\phi &\mbox{ iff } & \text{for all} \ a \in \Sigma\ \text{we have}\ \MMM \models_{\sigma, x : a} \phi \\
  \MMM & \models_{\sigma} & \exists x.\phi &\mbox{ iff } & \text{there exists} \ a \in \Sigma \ \text{such that}\  \MMM \models_{\sigma, x : a} \phi
\end{array}
\]

\end{definition}

\NI In quantified \cathoristic{}, we can say that there is exactly one
king of France, and he is bald, as:
\[
   \exists x . (\MAY{king} \MAY{france} ! \{x\} \land \MAY{x} \MAY{bald})
\]

\NI Expressing this in \fol{} is more cumbersome:
\[
   \exists x. ( king(france, x) \land bald(x) \land \forall y. ( king(france, y) \rightarrow y = x ))
\]

\NI The \fol{} version uses an extra universal quantifier, and also
requires the identity relation with concomitant axioms.

To say that every person has exactly one sex, which is either male or
female, we can write in quantified \cathoristic{}:
\[
   \forall x . 
      ( \MAY{x} \MAY{person} \rightarrow \MAY{x} \MAY{sex} !\{male, female\} 
      \land 
      \exists y . \MAY{x} \MAY{sex} (\MAY{y} \land
   \fBang \{y\}) )
\]

\NI This is more elegant than the equivalent in \fol{}:
\[
   \forall x. ( person(x) \rightarrow \exists y .
   \left(
      \begin{array}{l}
        sex(x,y) \\
        \quad\land\\
        (y = male \; \lor \; y = female)\\ 
        \quad\land\\
        \forall z . sex(x,z) \rightarrow    y = z 
      \end{array}
   \right))
\]

\NI To say that every traffic light is coloured either green, amber or
red, we can write in quantified \cathoristic{}:
\[
   \forall x. (\MAY{x} \MAY{light} \rightarrow \MAY{x} \MAY{colour}
   !\{green, amber, red\} \land \exists y . \MAY{x} \MAY{colour}
   (\MAY{y} \land !\{y\}))
\]

\NI Again, this is less verbose than the equivalent in
\fol{}:
\[
   \forall x. ( light(x) \rightarrow \exists y .
   \left(
      \begin{array}{l}
        colour(x,y) \\
        \quad\land\\
        (y = green \; \lor \; y = amber \; \lor \; y = red) \\
        \quad\land \\
        \forall z . colour(x,z) \rightarrow y = z
      \end{array}
   \right)  )
\]

\section{Related work}\label{relatedWork}


This section surveys  \cathoristic's intellectual background, and related
approaches.


\subsection{Brandom's incompatibility semantics}
\NI In \cite{brandom}, Chapter 5, Appendix I, Brandom developed a new
type of semantics, incompatibility semantics, that takes material
incompatibility - rather than truth-assignment - as the semantically
primitive notion.

Incompatibility semantics applies to any language, $\mathcal{L}$,
given as a set of sentences.  Given a predicate $\mathsf{Inc}(X)$
which is true of sets $X \subseteq \mathcal{L}$ that are incompatible,
he defines an incompatibility function $\mathcal{I}$ from subsets of
$\mathcal{L}$ to sets of subsets of $\mathcal{L}$:
\[
X \in \mathcal{I}(Y) \quad\text{ iff }\quad \mathsf{Inc}(X \cup Y).
\]
We assume that $\mathcal{I}$ satisfies the
monotonicity requirement (Brandom calls it ``Persistence''):
\[
   \text{If } X \in \mathcal{I}(Y) \text{ and } X \subseteq X' \text{ then } X' \in \mathcal{I}(Y).
\]

\NI Now Brandom defines entailment in terms of the incompatibility
function. Given a set $X \subseteq \mathcal{L}$ and an individual
sentence $\phi \in \mathcal{L}$:

\[
   X \models \phi\quad \text{ iff }\quad \mathcal{I}(\{\phi\}) \subseteq \mathcal{I}(X).
\]

\NI Now, given material incompatibility (as captured by the
$\mathcal{I}$ function) and entailment, he introduces logical negation
as a \emph{derived} concept via the rule:

\[
   \{\neg \phi\} \in \mathcal{I}(X)\quad \text{ iff }\quad X \models \phi.
\]

\NI Brandom goes on to show that the $\neg$ operator, as defined, satisfies
the laws of classical negation.  He also introduces a modal operator,
again defined in terms of material incompatibility, and shows that
this operator satisfies the laws of $S5$.

\Cathoristic{} was inspired by Brandom's vision that material
incompatibility is conceptually prior to logical negation: in other
words, it is possible for a community of language users to make incompatible claims, even if that
language has no explicit logical operators such as negation.  The
language users of this simple language may go on to introduce logical
operators, in order to make certain inferential properties explicit -
but this is an optional further development.  The language before that
addition was already in order as it is.

The approach taken in this paper takes Brandom's original insight in a
different direction.  While Brandom defines an unusual (non
truth-conditional) semantics that applies to any language, we have
defined an unusual logic with a standard (truth-conditional) semantics, and then shown that this logic satisfies the Brandomian connection between incompatibility and entailment.

\subsection{Peregrin on defining a negation operator}\label{peregrin}

Peregrin \cite{PeregrinJ:logbasoi} investigates the  structural
rules that any logic must satisfy if it is to connect incompatibility
($\mathsf{Inc}$) and entailment ($\models$) via the Brandomian
incompatibility semantics constraint:
\[
X \models \phi \quad\text{ iff }\quad \mathcal{I}(\{\phi\}) \subseteq \mathcal{I}(X).
\]

\NI The general structural rules are:
\begin{eqnarray*}
  (\bot) & & \text{If } \mathsf{Inc}(X) \text{ and } X \subseteq Y \text{ then } \mathsf{Inc}(Y). \\
  (\models 1) & & \phi, X \models \phi. \\
  (\models 2) & & \text{If }X, \phi \models \psi \text{ and } Y \models \phi \text{ then } X, Y \models \psi. \\
  (\bot \models 2) & & \text{If } X \models \phi \text{ for all } \phi, \text{ then } \mathsf{Inc}(X). \\
  (\models \bot 2) & & \text{If } \mathsf{Inc}(Y \cup \{\phi\}) \text{ implies } \mathsf{Inc}(Y \cup X) \text{ for all } Y, \text{ then } X \models \phi.
\end{eqnarray*}

\NI Peregrin shows that if a logic satisfied the above laws, then
incompatibility and entailment are mutually interdefinable, and the
logic satisfies the Brandomian incompatibility semantics constraint.

Next, Peregrin gives a pair of laws for defining negation in terms
of $\mathsf{Inc}$ and $\models$\footnote{The converse of $(\neg 2)$
  follows from $(\neg 1)$ and the general structural laws above.}:
\begin{eqnarray*}
  (\neg 1) & & \mathsf{Inc}(\{\phi, \neg \phi\}). \\
  (\neg 2) & & \text{If } \mathsf{Inc}(X, \phi) \text{ then } X \models \neg \phi.
\end{eqnarray*}

\NI These laws characterise intuitionistic negation as the
\emph{minimal incompatible}\footnote{$\psi$ is the minimal
  incompatible of $\phi$ iff for all $\xi$, if $\mathsf{Inc}(\{\phi\}
  \cup \{\xi\})$ then $\xi \models \psi$.}.  
  Now, in \cite{brandom},
Brandom defines negation slightly differently. He uses the rule:
\begin{eqnarray*}
  (\neg B) & &\mathsf{Inc}(X, \neg \phi) \text{ iff } X \models \phi.
\end{eqnarray*}
Using this stronger rule, we can infer the classical law of
double-negation: $\neg \neg \phi \models \phi$.  Peregrin establishes
that Brandom's rule for negation entail $(\neg 1)$ and $(\neg 2)$
above, but not conversely: Brandom's rule is stronger than Peregrin's
minimal laws $(\neg 1)$ and $(\neg 2)$.

Peregrin concludes that the Brandomian constraint between
incompatibility and entailment is satisfied by many different logics.
Brandom  happened to choose a particular rule for negation
that led to classical logic, but the general connection between
incompatibility and entailment is satisfied by many different logics,
including intuitionistic logic.  This paper supports Peregrin's
conclusion: we have shown that \cathoristic{} also satisfies the
Brandomian constraint.

\subsection{Peregrin and Turbanti on defining a necessity operator}\label{peregrinTurbanti}

In \cite{brandom}, Brandom gives a rule for defining necessity in terms of incompatibility and entailment:
\[
X \in \mathcal{I}(\{\Box \phi\}) \quad\text{ iff }\quad \mathsf{Inc}(X) \lor \exists Y . \; Y \notin \mathcal{I}(X) \land Y \nvDash \phi.
\]
In other words, $X$ is incompatible with $\Box \phi$ if $X$ is compatible with something that does not entail $\phi$.

The trouble is, as Peregrin and Turbanti point out, if $\phi$ is not tautological, then \emph{every set} $X \subseteq \mathcal{L}$ is incompatible with $\Box \phi$.
To show this, take any set $X \subseteq \mathcal{L}$. 
If $\mathsf{Inc}(X)$, then $X \in \mathcal{I}(\Box \phi)$ by definition.
If, on the other hand, $\neg \mathsf{Inc}(X)$, then let $Y = \emptyset$.
Now $\neg \mathsf{Inc}(X \cup Y)$ as $Y = \emptyset$, and $Y \nvDash \phi$ as $\phi$ is not tautological.
Hence $X \in \mathcal{I}(\Box \phi)$ for all $X \subseteq \mathcal{L}$. 
Brandom's rule, then, is only capable of specifying a very specific form of necessity: logical necessity.

In \cite{PeregrinJ:logbasoi} and \cite{turbanti}, Peregrin and Turbanti describe alternative ways of defining necessity.
These alternative rule sets can be used to characterise modal logics other than S5.
For example, Turbanti defines the accessibility relation between worlds in terms of a \emph{compossibility relation}, and then argues that the S4 axiom of transitivity fails because compossibility is not transitive.

We draw two conclusions from this work.
The first is, once again, that a commitment to connecting incompatibility and entailment via the Brandomian constraint:
\[
X \models \phi\quad \text{ iff }\quad \mathcal{I}(\{\phi\}) \subseteq \mathcal{I}(X)
\]
does not commit us to any particular logical system. 
There are a variety of logics that can satisfy this constraint.
Second, questions about the structure of the accessibility relation in Kripke semantics - questions that can seem hopelessly abstract and difficult to answer - can be re-cast in terms of concrete questions about the incompatibility relation.
Incompatibility semantics can shed light on possible-world semantics \cite{turbanti}. 

\subsection{Linear logic}

Linear logic \cite{GirardJY:linlog} is a refinement of
first-order logic and was introduced by J.-Y.~Girard and
brings the symmetries of classical logic to constructive
logic. 

Linear logic splits conjunction into additive and multiplicative
parts. The former, additive conjunction $A \& B$, is especially
interesting in the context of \cathoristic{}. In the terminology of
process calculus it can be interpreted as an external choice operation
\cite{AbramskyS:comintoll}. (`External', because the choice is offered to
the environment).  This interpretation has been influential in the
study of types for process calculus,
e.g.~\cite{HondaK:lanpriatdfscbp,HondaK:unitypsfsifLONG,TakeuchiK:intbaslaits}.
Implicitly, additive conjunction gives an explicit upper bound on how
many different options the environment can choose from. For example 
$A \& B \& C$ has  three options (assuming that none of $A, B, C$
can be decomposed into further additive conjunctions).  With this in
mind, and simplifying a great deal, a key difference between $!A$ and
additive conjunction $A \& B$ is that the individual actions in $!A$
have no continuation, while they do with $A \& B$: the tantum $!\{l, r\}$ says
that the only permitted  actions are $l$ and $r$. What
happens at later states is not constrained by $!A$.  In contrast, $A \&
B$ says not only that at this point the only permissible options are $A$
and $B$, but also that if we choose $A$, then $A$ holds `for ever',
and likewise for choosing $B$. To be sure, the alternatives in $A \&
B$ may themselves contain further additive conjunctions, and in this
way express how exclusion changes 'over time'.

In summary, \cathoristic{} and linear logic offer  operators that restrict
the permissible options. How are they related? Linear logic has an
explicit linear negation $(\cdot)^{\bot}$ which, unlike classical
negation, is constructive. In contrast, \cathoristic{} defines a restricted
form of negation using $!A$. Can these two perspectives be fruitfully
reconciled?

\subsection{Process calculus}

Process calculi are models of concurrent computation.  They are based
on the idea of message passing between actors running in parallel.
Labelled transition systems are often used as models for process
calculi, and many concepts used in the development of \cathoristic{} -
for example, bisimulations and Hennessy-Milner logic - originated in
process theory (although some, such as bisimulation, evolved
independently in other contexts).

Process calculi typically feature a construct called sum, that is an
explicit description of mutually exclusive option:
\[
     \sum_{i \in I} P_i
\]
That is a process that can internally choose, or be chosen externally
by the environment to evolve into the process $P_i$ for each $i$. Once
the choice is made, all other options disappear.  Sums also relate
closely to linear logic's additive conjunction. Is this conceptual
proximity a coincidence or indicative of deeper common structure?

\subsection{Linguistics}

Linguists have also investigated how mutually exclusive alternatives
are expressed, often in the context of antonymy
\cite{AllanK:conencos,AronoffM:hanlin,OKeeffeA:rouhanocl}, but, to the
best of our knowledge have not proposed formal theories of linguistic
exclusion.

\section{Open problems}\label{conclusion}

In this paper, we have introduced \cathoristic{} and established key
meta-logical properties. However,
many questions are left open. 

\subsection{Excluded middle}

One area we would like to investigate further is what happens to the law of excluded middle in \cathoristic{}.
The logical law of excluded middle states that either a proposition or
its negation must be true. In \cathoristic{}
\[
\models \phi \lor \neg_S(\phi)
\]

\NI does not hold in general. (The negation operator $\neg_{S}(\cdot)$
was defined in Section \ref{ELAndNegation}.) For example, let $\phi$ be
$\langle a \rangle \top$ and $S = \Sigma = \{a, b\}$.  Then
\[
   \phi \lor \neg_{S} \phi 
       \quad=\quad 
   \langle a \rangle \top \; \lor \; ! \{b\} \; \lor \; \langle a \rangle \bot
\]

\NI Now this will not in general be valid - it will be false for
example in the model $((\{x\}, \emptyset, \{(x, \Sigma)\}), x)$, the
model having just the start state (labelled $\Sigma$) and no transitions.
Restricting $S$ to be a proper subset of $\Sigma = \{a, b\}$ is also not
enough. For example with $S = \{a\}$ we have
\[
   \MAY{a}{\TRUE} \lor \neg_{S}(\MAY{a}{\TRUE})
      \quad=\quad
   \MAY{a}{\TRUE}\ \lor\ !\emptyset \lor \MAY{a}{\FALSE}
\]
This formula cannot hold in any cathoristic model which contains a
$b$-labelled transition, but no $a$-transition from the start state.

Is it possible to identify classes of models that nevertheless verify
excluded middle? The answer to this question appears to depend 
on the chosen notion of semantic model.

\subsection{Understanding the expressive strength of \cathoristic{}}

\subsubsection{Comparing \cathoristic{} and Hennessy-Milner logic}

Section \ref{standardTranslation} investigated the relationship
between \cathoristic{} and first-order logic. Now we compare
\cathoristic{} with a logic that is much closer in spirit:
Hennessy-Milner logic \cite{HennessyM:alglawfndac}, a multi-modal logic
designed to reason about process calculi. Indeed, the present shape of
\cathoristic{} owes much to Hennessy-Milner logic. We contrast both by
translation from the former into the latter.  This will reveal, more
clearly than the translation into first-order logic, the novelty of
\cathoristic{}.

\begin{definition}
Assume a set $\Sigma$ of symbols, with $s$ ranging over
$\Sigma$, the \emph{formulae} of Hennessy-Milner logic are given
by the following grammar:
\begin{GRAMMAR}
  \phi 
     &\quad ::= \quad & 
  \top \fOr \BIGAND_{i \in I} \phi_i  \fOr \langle s \rangle \phi \fOr \neg \phi 
\end{GRAMMAR}
\end{definition}

\NI The index set $I$ in the conjunction can be infinite, and needs to
be so for applications in process theory.

\begin{definition}
 \emph{Models} of Hennessy-Milner logic are simply pairs $(\LLL, s)$
 where $\LLL = (S, \rightarrow)$ is a labelled transition system over
 $\Sigma$, and $s \in S$.  The \emph{satisfaction relation} $(\LLL, s)
 \models \phi$ is given by the following inductive clauses.

\[
\begin{array}{lclcl}
  (\LLL, s) 
     & \models & 
  \top  \\
  (\LLL, s) 
     & \models & 
  \BIGAND_{i \in I} \phi_i  &\  \mbox{ iff }\  & \mbox { for all $i \in I$ }: (\LLL, s) \models \phi_i  \\
  (\LLL, s) 
     & \models & 
  \langle a \rangle \phi & \mbox{ iff } & \mbox{ there is a } s \xrightarrow{a} s' \mbox { such that } (l,s') \models \phi  \\
  (\LLL, s) 
     & \models & 
  \neg \phi & \mbox{ iff } & (\LLL, s)  \nvDash \phi 
\end{array}
\]
\end{definition}

\NI There are two differences between \cathoristic{} and
Hennessy-Milner logic - one syntactic, the other semantic.

\begin{itemize}

\item Syntactically, \cathoristic{} has the tantum operator ($!$) instead of
  logical negation ($\neg$).

\item Semantically,  cathoristic models are deterministic,
  while (typically) models of Hennessy-Milner logic are
  non-deterministic (although the semantics makes perfect sense for
  deterministic transition systems, too). 
  Moreover, models of Hennessy-Milner logic lack state labels.

\end{itemize}

\begin{definition}
 We translate formulae of \cathoristic{} into Hennessy-Milner logic
 using the function $\SEMB{\cdot}$:

\begin{eqnarray*}
  \SEMB{\top} & \ = \ & \top  \\
  \SEMB{\phi_1 \AND \phi_2} & \ = \ & \SEMB{\phi_1} \AND \SEMB{\phi_2}  \\
  \SEMB{\langle a \rangle \phi} & \ = \ & \langle a \rangle \SEMB{\phi}  \\
  \SEMB{! A} & \ = \ & \bigwedge_{a \in \Sigma \setminus A}\!\!\!\! \neg \langle a \rangle \top 
\end{eqnarray*}

\end{definition}

\NI If $\Sigma$ is an infinite set, then the translation of a $!$-formula 
will be an infinitary conjunction.  If $\Sigma$ is finite, then
the size of the Hennessy-Milner logic formula will be of the order of $n \cdot | \Sigma |$
larger than the original cathoristic formula, where $n$ is the number of
tantum operators occurring in the cathoristic formula). In both logics we
use the number of logical operators as a measure of size.

We can also translate cathoristic models by forgetting state-labelling:
\[
   \SEMB{((S, \rightarrow, \lambda), s)} 
      =
   ((S, \rightarrow), s)
\]

\NI We continue with an obvious consequence of the translation.

\begin{theorem}
Let $\MMM$ be a (deterministic or non-deterministic) cathoristic
  model. Then $\MMM \models \phi$ implies $\SEMB{\MMM} \models
  \SEMB{\phi}$.
\end{theorem}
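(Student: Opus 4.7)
The plan is to proceed by structural induction on the \cathoristic{} formula $\phi$, with the inductive hypothesis that for every sub-formula $\psi$ and every cathoristic model $\NNN = (\LLL', s')$, $\NNN \models \psi$ implies $\SEMB{\NNN} \models \SEMB{\psi}$. Since the translation $\SEMB{\cdot}$ on models simply forgets the state-labelling function $\lambda$, and the underlying transition system is preserved, the cases for $\top$, conjunction, and the may-modality are essentially unchanged from the corresponding clauses in both satisfaction relations.

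Concretely, first I would dispatch $\top$ trivially. For $\phi = \phi_1 \AND \phi_2$, satisfaction decomposes in both semantics into satisfaction of each conjunct, so the induction hypothesis gives the result immediately. For $\phi = \MAY{a}\psi$, the cathoristic clause produces a transition $s \TRANS{a} t$ with $(\LLL, t) \models \psi$; this same transition lives in $\SEMB{\LLL}$ by definition, and $((\LLL, t))^{\SEMB{}} = (\SEMB{\LLL}, t)$, so the induction hypothesis gives $(\SEMB{\LLL}, t) \models \SEMB{\psi}$, hence $\SEMB{\MMM} \models \MAY{a}\SEMB{\psi} = \SEMB{\MAY{a}\psi}$.

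The only case that is not a direct rewriting is $\phi = !A$. Here the semantic content of $\lambda$ must be recovered on the Hennessy-Milner side purely from the transition structure, and this is precisely where admissibility of cathoristic transition systems (Definition \ref{cathoristicTS}) does the work. Assume $\MMM \models !A$, so $\lambda(s) \subseteq A$. Pick any $a \in \Sigma \setminus A$; by admissibility, $\{b \mid \exists t. s \TRANS{b} t\} \subseteq \lambda(s) \subseteq A$, so no $a$-transition leaves $s$ in $\LLL$, and hence none in $\SEMB{\LLL}$ either. Therefore $\SEMB{\MMM} \not\models \MAY{a}\top$, i.e.\ $\SEMB{\MMM} \models \neg \MAY{a}\top$. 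Taking the conjunction over all $a \in \Sigma \setminus A$ yields $\SEMB{\MMM} \models \SEMB{!A}$.

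I do not anticipate any genuine obstacle; the only subtlety is the $!A$ case, and even there admissibility reduces the argument to a one-line set-inclusion chain. Note also that the theorem is deliberately a one-way implication: the converse would fail, for example when $\Sigma$ is infinite and $\lambda(s) = \Sigma$ while $s$ has only finitely many outgoing transitions, because then $\SEMB{\MMM}$ vacuously validates $\SEMB{!A}$ for any cofinite $A$ that still excludes some present action's complement, whereas $\MMM$ does not. The parenthetical ``deterministic or non-deterministic'' in the statement requires no extra work, because none of the inductive cases above inspect determinism: the translation and the cathoristic satisfaction clauses behave identically whether or not the transition system branches.
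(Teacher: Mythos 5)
Your proof is correct and follows exactly the route the paper intends (the paper itself only says ``straightforward by induction on $\phi$''): structural induction where the one non-trivial case, $!A$, is discharged by the admissibility condition $\{b \mid \exists t.\, s \TRANS{b} t\} \subseteq \lambda(s)$. The only blemish is in your side remark on the converse, where ``cofinite $A$'' should read ``finite $A$ containing the (finitely many) actions actually enabled at $s$,'' since $A$ is finite by syntax; this does not affect the proof of the stated implication.
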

\begin{proof}
Straightforward by induction on $\phi$.
\end{proof}

\NI However, note that the following natural extension is \emph{not} true
under the translation above:
\[
  \text{If } \phi \models \psi \text{ then } \SEMB{\phi} \models \SEMB{\psi}
\]
To see this, consider an entailment which relies on determinism, such as
\[
\MAY{a} \MAY{b} \land \MAY{a} \MAY{c} \models \MAY{a} (\MAY{b} \land \MAY{c})
\]

\NI The first entailment is valid in \cathoristic\ because of the
restriction to deterministic models, but not in Hennessy-Milner
logic, where it is invalidated by any model with two outgoing $a$
transitions, one of which satisfies $\MAY{b}$ and one of which
satisfies $\MAY{c}$.

We can restore the desired connection between cathoristic implication and
Hennessy-Milner logic implication in two ways. First we can restrict
our attention to deterministic models of Hennessy-Milner logic.  The
second solution is to add a determinism constraint to our
translation. Given a set $\Gamma$ of cathoristic formulae, closed under
sub formulae, that contains actions from the set $A \subseteq \Sigma$,
let the determinism constraint for $\Gamma$ be:
\[
\bigwedge_{a \in A, \phi \in \Gamma, \psi \in \Gamma} \neg \; (\MAY{a} \phi \land \MAY{a} \psi \land \neg \MAY{a} (\phi \land \psi) )
\]
If we add this sentence as part of our translation $\SEMB{\cdot}$, we
do get the desired result that
\[
\text{If } \phi \models \psi \text{ then } \SEMB{\phi} \models \SEMB{\psi}
\]

\subsubsection{Comparing \cathoristic{} with Hennessy-Milner logic and propositional logic}

Consider the following six languages:

\begin{FIGURE}
\begin{tikzpicture}[node distance=3.3cm,>=stealth',bend angle=45,auto]
  \tikzstyle{place}=[circle,thick,draw=blue!75,fill=blue!20,minimum size=6mm]
  \tikzstyle{red place}=[place,draw=red!75,fill=red!20]
  \tikzstyle{transition}=[rectangle,thick,draw=black!75,
  			  fill=black!20,minimum size=4mm]
  \tikzstyle{every label}=[red]
  \begin{scope}  
    \node [place] (w1) {PL[$\land$]};
    \node [place] (e1) [below of=w1] {PL [$\land, \neg$] };
  \end{scope}
  \begin{scope}[xshift=4cm]
    \node [place] (w1) {HML[$\land$]};
    \node [place] (e1) [below of=w1] {HML [$\land, \neg$] };
  \end{scope} 
  \begin{scope}[xshift=8cm]
    \node [place] (w1) {CL[$\land, !$]};
    \node [place] (e1) [below of=w1] {CL [$\land, !, \neg$] };
  \end{scope}
  \draw (2,0) node {$\subseteq $};
  \draw (6,0) node {$\subseteq$};
  \draw (2,-3) node {$\subseteq $};
  \draw (6,-3) node {$\subseteq$};
  \draw (0,-1.5) node {$\subseteq $};
  \draw (4,-1.5) node {$\subseteq $};
  \draw (8,-1.5) node {$\subseteq $};
\end{tikzpicture}
\caption{Conjectured relationships of expressivity between
  logics. Here $L_1 \subseteq L_2$ means that the logic $L_2$ is more
  expressive than $L_1$. We leave the precise meaning of logical
  expressivity open.}\label{figure:relationships}
\end{FIGURE}

\begin{center}
\begin{tabular}{ l | r }
Language & Description \\
\hline
PL[$\land$] & Propositional logic without negation \\
Hennessy-Milner logic[$\land$] & Hennessy-Milner logic without negation \\
CL[$\land, !$] & \Cathoristic{} \\
PL [$\land, \neg$] & Full propositional logic \\
HML [$\land, \neg$] & Full Hennessy-Milner logic \\
CL [$\land, !, \neg$] & \Cathoristic{} with negation\\
\end{tabular}
\end{center}

\NI The top three languages are simple. In each case: there is no
facility for expressing disjunction, every formula that is satisfiable
has a simplest satisfying model, and there is a simple quadratic-time
decision procedure But there are two ways in which CL[$\land, !$] is
more expressive.  Firstly, CL[$\land, !$], unlike HML[$\land$], is expressive enough to be able to distinguish
between any two models that are not bisimilar, cf.~Theorem
\ref{hennessymilnertheorem}.  The second way in which
CL[$\land, !$] is significantly more expressive than both PL[$\land$]
and HML[$\land$] is in its ability to express incompatibility.  No two
formulae of PL[$\land$] or HML[$\land$] are incompatible\footnote{The notion of incompatibility applies to all logics: two formulae are incompatible if there is no model which satisfies both.} with each
other.  But many
pairs of formulae of CL[$\land, !$] are incompatible.  (For example:
$\langle a \rangle \top$ and $! \emptyset$).  Because CL[$\land, !$] is
expressive enough to be able to make incompatible claims, it satisfies
Brandom's incompatibility semantics constraint.
CL[$\land, !$] is the only logic (we are aware of) with a
quadratic-time decision procedure that is expressive enough to respect
this constraint. 

The bottom three language can all be decided in super-polynomial time.  We
claim that Hennessy-Milner logic is more expressive than PL, and CL[$\land, !, \neg$] is more expressive than full Hennessy-Milner logic.
To see that full Hennessy-Milner logic is more expressive than full
propositional logic, fix a propositional logic with the nullary
operator $\top$ plus an infinite number of propositional atoms
$P_{(i,j)}$, indexed by $i$ and $j$.  Now translate each formula of
Hennessy-Milner logic via the rules:
\begin{align*}
  \SEMB{\top}  & =  \top  &
  \SEMB{\phi \land \psi} & =  \SEMB{\phi} \land \SEMB{\psi}  \\
  \SEMB{\neg \phi} & =  \neg \SEMB{\phi}   &
  \SEMB{\langle a_i \rangle \phi_j} & =  P_{(i,j)} 
\end{align*}

\NI We claim Hennessy-Milner logic is more expressive because there
are formulae $\phi$ and $\psi$ of Hennessy-Milner logic such that
\[
\phi \models_{\text{{HML}}} \psi \mbox{ but } \SEMB{\phi} \nvDash_{\text{PL}} \SEMB{\psi}
\]
For example, let $\phi = \langle a \rangle \langle b \rangle \top$ and
$\psi = \langle a \rangle \top$.  Clearly, $\phi \models_{\text{HML}}
\psi$. But $\SEMB{\phi} = P_{(i,j)}$ and $\SEMB{\psi} = P_{(i',j')}$
for some $i,j,i',j'$, and there are no entailments in propositional
logic between arbitrary propositional atoms.

We close by stating that CL[$\land, !, \neg$] is more expressive than
full Hennessy-Milner logic. As mentioned above, the formula $\fBang A$ of
\cathoristic\ can be translated into Hennessy-Milner logic as:
\[
\bigwedge_{a \in \Sigma - A} \neg \langle a \rangle \top
\]
But if $\Sigma$ is infinite, then this is an infinitary disjunction.
\Cathoristic{} can express the same proposition in a finite sentence.

\subsection{Acknowledgements}


We thank Tom Smith and Giacomo Turbanti for their thoughtful comments.

\bibliography{bib} 

\appendix
\section{Alternative semantics for \cathoristic{}}\label{pureModels}

\NI We use state-labelled transition systems as models for
\cathoristic{}. The purpose of the labels on states is to express
constraints, if any, on outgoing actions. This concern is reflected in
the semantics of $!A$.
\[
\begin{array}{lclcl}
  ((S, \rightarrow, \lambda), s) & \models & !A  &\mbox{\quad iff\quad } & \lambda(s) \subseteq A
\end{array}
\]

\NI There is an alternative, and in some sense even simpler approach
to giving semantics to $!A$ which does not require state-labelling: we
simply check if all actions of all outgoing transitions at the current
state are in $A$.  As the semantics of other formula requires state-labelling in its satisfaction condition, this means we can use plain
labelled transition systems (together with a current state) as
models. This gives rise to a subtly different theory that we now
explore, albeit not in depth.

\subsection{Pure cathoristic models}

\begin{definition}\label{pureModelsDef}
By a \emph{pure cathoristic model}, ranged over by $\PPP, \PPP', ...$,
we mean a pair $(\LLL, s)$ where $\LLL = (S, \rightarrow)$ is a
deterministic labelled transition system and $s \in S$ a state.
\end{definition}

\NI Adapting the satisfaction relation to pure cathoristic models is
straightforward.

\begin{definition}
Using pure cathoristic models, the  \emph{satisfaction relation} is defined 
inductively by the following clauses, where we assume that $\MMM =
(\LLL, s)$ and $\LLL = (S, \rightarrow)$.

\[
\begin{array}{lclcl}
  \MMM & \models & \top   \\
  \MMM & \models & \phi \AND \psi &\ \mbox{ iff } \ & \MMM  \models \phi \mbox { and } \MMM \models \psi  \\
  \MMM & \models & \langle a \rangle \phi & \mbox{ iff } & \text{there is a } s \xrightarrow{a} t \mbox { such that } (\LLL, t) \models \phi  \\
  \MMM & \models & A &\mbox{ iff } & \{a\ |\ \exists t.s \TRANS{a} t \} \subseteq A
\end{array}
\]
\end{definition}

\NI Note that all but the last clause are unchanged from Definition
\ref{ELsatisfaction}.

In this interpretation, $!A$ restricts the out-degree of the current
state $s$, i.e.~it constraints the 'width' of the graph.  It is easy
to see that all rules in Figure \ref{figure:elAndBangRules} are sound
with respect to the new semantics.  The key advantage pure cathoristic
models have is their simplicity: they are unadorned labelled
transition systems, the key model of concurrency theory
\cite{SassoneV:modcontac}. The connection with concurrency theory is
even stronger than that, because, as we show below (Theorem
\ref{hennessymilnertheorem}), the elementary equivalence on (finitely
branching) pure cathoristic models is bisimilarity, one of the more
widely used notions of process equivalence. This characterisation even
holds if we remove the determinacy restriction in Definition \ref{pureModelsDef}.

\subsection{Relationship between pure and cathoristic models}

The obvious way of converting an cathoristic model into a pure cathoristic model
is by forgetting about the state-labelling:
\[
   ((S, \rightarrow, \lambda), s ) \qquad\mapsto\qquad ((S, \rightarrow), s ) 
\]
Let this function be $\FORGET{\cdot}$. For going the other way, we
have two obvious choices:

\begin{itemize}

\item $((S, \rightarrow), s ) \mapsto ((S, \rightarrow, \lambda), s )$
  where $\lambda(t) = \Sigma$ for all states $t$. Call this map $\MAX{\cdot}$.

\item $((S, \rightarrow), s ) \mapsto ((S, \rightarrow, \lambda), s )$
  where $\lambda(t) = \{a \ |\ \exists t'. t \TRANS{a} t'\}$ for all
  states $t$. Call this map $\MIN{\cdot}$.

\end{itemize}

\begin{lemma}\label{modelRelationships}
Let $\MMM$ be an cathoristic model, and $\PPP$ a pure cathoristic model.
\begin{enumerate}

\item\label{modelRelationships:1}  $\MMM \models \phi$ implies
  $\FORGET{\MMM} \models \phi$. The reverse implication does not hold.

\item\label{modelRelationships:2}  $ \MAX{\PPP} \models \phi$ implies
  $\PPP \models \phi$. The reverse implication does not hold.

\item\label{modelRelationships:3} $\MIN{\PPP} \models \phi$ if and only if
  $\PPP \models \phi$. 

\end{enumerate}
\end{lemma}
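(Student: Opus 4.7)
The plan is to prove all three parts by a simultaneous structural induction on $\phi$, since the cases for $\top$, $\phi_1 \AND \phi_2$, and $\MAY{a}\psi$ are identical across the three statements and follow trivially from the inductive hypothesis (the modality case uses that $\FORGET{\cdot}$, $\MAX{\cdot}$ and $\MIN{\cdot}$ all preserve the underlying transition relation). The only nontrivial case in each part is $\phi = !A$, so the real content is checking what $!A$ means under the two different semantic clauses.

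For part (\ref{modelRelationships:1}), in the $!A$ case, if $\MMM = ((S,\rightarrow,\lambda),s) \models !A$ then $\lambda(s) \subseteq A$. The key step is invoking \emph{admissibility} from Definition \ref{cathoristicTS}, which gives $\{a \mid \exists t.\, s \TRANS{a} t\} \subseteq \lambda(s) \subseteq A$, which is precisely the pure satisfaction clause for $\FORGET{\MMM} \models !A$. For the failure of the converse, I would exhibit an cathoristic model $\MMM$ with a single state $s$, no transitions, and $\lambda(s) = \Sigma$; then $\FORGET{\MMM} \models \, !\emptyset$ vacuously (no outgoing actions), but $\MMM \not\models \, !\emptyset$ because $\Sigma \not\subseteq \emptyset$ (recall $\Sigma$ is nonempty by convention).

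For part (\ref{modelRelationships:2}), in the $!A$ case, $\MAX{\PPP} \models !A$ means $\Sigma \subseteq A$; since $A$ is by construction a finite subset of $\Sigma$, this forces $A = \Sigma$ (and $\Sigma$ finite). But then $\{a \mid \exists t.\, s \TRANS{a} t\} \subseteq \Sigma = A$ trivially, so $\PPP \models !A$. The reverse implication fails for the same reason as in (\ref{modelRelationships:1}): a pure model $\PPP$ with no outgoing transitions satisfies $!\emptyset$, whereas $\MAX{\PPP}$ assigns $\lambda(s) = \Sigma \not\subseteq \emptyset$ and hence falsifies $!\emptyset$.

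For part (\ref{modelRelationships:3}), the biconditional is essentially a tautology once unfolded: $\MIN{\PPP}$ is defined so that $\lambda(s) = \{a \mid \exists t.\, s \TRANS{a} t\}$, which makes the two satisfaction clauses for $!A$ literally identical. The one subtlety I anticipate as the main obstacle is well-sizedness: Definition \ref{cathoristicTS} requires $\lambda(s)$ to be either finite or all of $\Sigma$, but the set of outgoing action labels at a state of $\PPP$ could in principle be an infinite proper subset of $\Sigma$, in which case $\MIN{\PPP}$ is not strictly a cathoristic model. I would address this either by restricting attention to finitely branching pure models (which matches the intended application) or by noting that one can coarsen $\MIN{\cdot}$ to send such pathological states to $\Sigma$ without changing satisfaction of any formula $!A$ whose finite $A$ cannot contain the infinite out-label set anyway. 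Either way, the induction goes through with no further complications.
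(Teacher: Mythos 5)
Your proof is correct and follows essentially the same route as the paper's: the implications are proved by induction on $\phi$ (with $!A$ the only interesting case) and the failed converses are witnessed by explicit counterexamples; your counterexamples differ from the paper's (you use $!\emptyset$ on transition-free states, the paper uses $!\{a\}$ on a one-transition model) but are equally valid. The well-sizedness subtlety you flag in part (3) is a genuine point that the paper's one-line proof silently glosses over, and your fix --- coarsening an infinite out-label set to $\Sigma$, which cannot change the truth value of any $!A$ since a finite $A$ can contain neither --- is sound.
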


\begin{proof}
The implication in (\ref{modelRelationships:1}) is immediate by
induction on $\phi$. A counterexample for the reverse implication is
given by the formula $\phi = !\{a\}$ and the cathoristic model $\MMM = ( \{s,
t\}, s \TRANS{a} t, \lambda), s)$ where $\lambda (s) = \{a, b, c\}$:
clearly $\FORGET{\MMM} \models \phi$, but $\MMM \not\models
\phi$.

The implication in (\ref{modelRelationships:2}) is immediate by
induction on $\phi$. To construct a counterexample for the reverse
implication, assume that $\Sigma$ is a strict superset of $\{a\}$
$a$. The formula $\phi = !\{a\}$ and the pure cathoristic model $\PPP = (
\{s, t\}, s \TRANS{a} t ), s)$ satisfy $\PPP \models \phi$, but clearly
$\MAX{\PPP} \not\models \phi$.

Finally, (\ref{modelRelationships:3}) is also straightforward by
induction on $\phi$.

\end{proof}

\subsection{Non-determinism and cathoristic models}

Both, cathoristic models and pure cathoristic models must be deterministic. That
is important for the incompatibility semantics. However, formally, the
definition of satisfaction makes sense for non-deterministic models as
well, pure or otherwise. Such models are important in the theory of
concurrent processes. Many of the theorems of the precious section
either hold directly, or with small modifications for
non-deterministic models. The rules of inference in Figure
\ref{figure:elAndBangRules} are sound except for
    [\RULENAME{Determinism}] which cannot hold in properly
    non-deterministic models. With this omission, they are also
    complete.  Elementary equivalence on non-deterministic cathoristic
    models also coincides with mutual simulation, while elementary
    equivalence on non-deterministic pure cathoristic models is
    bisimilarity. The proofs of both facts follow those of Theorems
    \ref{theorem:completeLattice} and \ref{hennessymilnertheorem},
    respectively. Compactness by translation can be shown following
    the proof in Section \ref{compactness}, except that the constraint
    $\phi_{det}$ is unnecessary.

We have experimented with a version of \cathoristic{} in which the models are
\emph{non-deterministic} labelled-transition systems.  Although
non-determinism makes some of the constructions simpler,
non-deterministic \cathoristic{} is unable to express incompatibility properly.
Consider, for example, the claim that Jack is married\footnote{We assume, in this discussion, that $married$ is a many-to-one predicate. We assume that polygamy is one person \emph{attempting} to marry two people (but failing to marry the second).} to Jill
In standard deterministic \cathoristic{} this would be rendered as:
\begin{eqnarray*}
  \MAY{jack} \MAY{married} (\MAY{jill} \land \fBang \{jill\})
\end{eqnarray*}
There are three levels at which this claim can be denied.
First, we can claim that Jack is married to someone else - Joan, say:
\begin{eqnarray*}
   \MAY{jack} \MAY{married} (\MAY{joan} \land \fBang \{joan\})
\end{eqnarray*}
Second, we can claim that Jack is unmarried (specifically, that being unmarried is Jack's only property):
\begin{eqnarray*}
  \MAY{jack} !\{unmarried\}
\end{eqnarray*}
Third, we can claim that Jack does not exist at all. Bob and Jill, for example, are the only people in our domain:
\begin{eqnarray*}
  \fBang \{bob, jill\}
\end{eqnarray*}

Now we can assert the same sentences in non-deterministic \cathoristic{}, but they
are \emph{no longer incompatible with our original sentence}.  In
non-deterministic \cathoristic{}, the following sentences are compatible (as long
as there are two separate transitions labelled with $married$, or two
separate transitions labelled with $jack$):
\begin{eqnarray*}
  \MAY{jack} \MAY{married} (\MAY{jill} \land \fBang \{jill\}) 
      \qquad
  \MAY{jack} \MAY{married} (\MAY{joan} \land \fBang \{joan\})
\end{eqnarray*}

\NI Similarly, the following sentences are fully compatible as long as
there are two separate transitions labelled with $jack$:
\begin{eqnarray*}
  \MAY{jack} \MAY{married}
     \qquad
  \MAY{jack} \fBang \{unmarried\}
\end{eqnarray*}

\NI Relatedly, non-deterministic \cathoristic{} does not satisfy Brandom's
incompatibility semantics property:
\[
   \phi \models \psi \; \mbox{ iff } \; \mathcal{I}(\psi) \subseteq \mathcal{I}(\phi)
\]

\NI To take a simple counter-example, $\MAY{a}\MAY{b}$ implies $\MAY{a}$,
but not conversely.  But in non-deterministic \cathoristic{}, the set of sentences
incompatible with $\MAY{a}\MAY{b}$ is identical with the set of
sentences incompatible with $\MAY{a}$.

\subsection{Semantic characterisation of elementary equivalence}

In Section \ref{elementaryEquivalence} we presented a semantic
analysis of elementary equivalence, culminating in Theorem
\ref{theorem:completeLattice} which showed that elementary equivalence
coincides with $\MODELEQ$, the relation of mutual simulation of
models. We shall now carry out a similar analysis for pure cathoristic
models, and show that elementary equivalence coincides with
bisimilarity, an important concept in process theory and modal logics
\cite{SangiorgiD:intbisac}. Bisimilarity is strictly finer on
non-deterministic transition systems than $\MODELEQ$, and more
sensitive to branching structure.  In the rest of this section, we
allow non-deterministic pure models, because the characterisation is
more interesting that in the deterministic case.

\begin{definition}
A pure cathoristic model $(\LLL, s)$ is finitely branching if its
underlying transition system $\LLL$ is finitely branching.
\end{definition}

\begin{definition}
A binary relation $\RRR$ is a \emph{bisimulation} between pure cathoristic
models $\PPP_i = (\LLL_i), s_i)$ for $i = 1, 2$ provided (1) $\RRR$ is
a bisimulation between $\LLL_1$ and $\LLL_2$, and (2) $(s_1, s_2) \in
\RRR$. We say $\PPP_1$ and $\PPP_2$ are \emph{bisimilar}, written
$\PPP_1 \BISIM \PPP_2$ if there is a bisimulation between $\PPP_1$ and
$\PPP_2$.
\end{definition}

\begin{definition}
The \emph{theory} of $\PPP$, written $\THEORY{\PPP}$, is the set
$\{\phi\ |\ \PPP \models \phi\}$.
\end{definition}

\begin{theorem}
\label{hennessymilnertheorem}
Let $\PPP$ and $\PPP'$ be two finitely branching pure cathoristic
models. Then: $\PPP \BISIM \PPP'$ if and only if $\THEORY{\PPP} =
\THEORY{\PPP'}$. 
\end{theorem}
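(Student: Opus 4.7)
The plan is to prove the biconditional in two stages, with the backward direction being the substantive one.

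For the forward direction (bisimilarity implies theory equality), I would induct on the structure of $\phi$. The cases $\top$ and $\phi_1 \AND \phi_2$ are immediate, and $\MAY{a}\phi$ reduces to the standard Hennessy--Milner argument: if $(\LLL, s) \models \MAY{a}\phi$, pick $s \TRANS{a} t$ with $(\LLL, t) \models \phi$, use the forward clause of the bisimulation $\RRR$ to match it by some $s' \TRANS{a} t'$ with $(t,t') \in \RRR$, and apply the induction hypothesis. The genuinely new case is $!A$: assuming $(\LLL, s) \models !A$, i.e.~$\mathsf{out}(s) \subseteq A$, any $s' \TRANS{a} t'$ is mirrored by some $s \TRANS{a} t$ via the backward clause of the bisimulation, forcing $a \in A$; hence $\mathsf{out}(s') \subseteq A$. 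The converse of this implication is symmetric.

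For the backward direction I propose the canonical relation
\[
   \RRR \;=\; \{\,(s,s') \;\mid\; \THEORY{(\LLL, s)} = \THEORY{(\LLL', s')}\,\}.
\]
The hypothesis places the two start states in $\RRR$, and by symmetry of equality of theories it suffices to verify the forward zig-zag. Fix $(s,s') \in \RRR$ and a transition $s \TRANS{a} t$; finite branching on the right supplies a finite list $t'_1, \ldots, t'_n$ of $a$-successors of $s'$, and the aim is to find one with $\THEORY{(\LLL', t'_i)} = \THEORY{(\LLL, t)}$. Assume for contradiction none does, and try to assemble a formula $\phi$ with $t \models \phi$ and no $t'_i \models \phi$; then $\MAY{a}\phi$ belongs to $\THEORY{(\LLL, s)}$ but not $\THEORY{(\LLL', s')}$, contradicting $(s,s') \in \RRR$.

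The main obstacle, and the reason this argument is more delicate than in the classical Hennessy--Milner theorem, is the absence of negation: when $\THEORY{(\LLL, t)} \subsetneq \THEORY{(\LLL', t'_i)}$ for some $i$, no formula in $\THEORY{(\LLL', t'_i)} \setminus \THEORY{(\LLL, t)}$ can be flipped into a distinguisher satisfied by $t$. I plan to rule out such strict inclusions using tantum. Indeed $\THEORY{(\LLL,t)} \subseteq \THEORY{(\LLL',t'_i)}$ forces $\mathsf{out}(t) \subseteq \mathsf{out}(t'_i)$ via the $\MAY{a}\top$ formulae, and (for finite $\mathsf{out}(t)$) also $\mathsf{out}(t'_i) \subseteq \mathsf{out}(t)$ via $!\,\mathsf{out}(t) \in \THEORY{(\LLL,t)}$; thus the strict inclusion must already be witnessed by a nested $\MAY{a}$-formula, so peeling one modality descends to a smaller-depth instance of the same problem. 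The cleanest packaging I would adopt is to introduce depth-stratified approximants $\RRR_n$ given by agreement on all formulae of modal depth at most $n$, prove by induction on $n$ that each $\RRR_{n+1}$ enjoys the zig-zag property into $\RRR_n$ (this is where finite branching is essential, as it bounds the number of distinguishers to conjoin at each layer), and conclude that $\RRR_\omega = \bigcap_n \RRR_n$ is a bisimulation containing $(s_0, s'_0)$, giving $\PPP \BISIM \PPP'$.
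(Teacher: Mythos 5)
Your forward direction matches the paper's: induction on $\phi$, with the only non-standard case being $!A$, discharged by the observation that bisimilar states have the same out-sets. The gap is in the backward direction. Your skeleton there --- the canonical theory-equivalence relation, finite branching to get a finite list $t'_1,\dots,t'_n$ of candidate matches, and a contradiction via a distinguishing $\MAY{a}{}$-formula --- is also the paper's, and you correctly identify the crux: when some $t'_i$ satisfies a formula that $t$ does not, the absence of negation prevents you from flipping it into a conjunct true at $t$ and false at $t'_i$. But your proposed resolution, ruling out strict inclusions $\THEORY{(\LLL,t)} \subsetneq \THEORY{(\LLL',t'_i)}$ (also in depth-stratified form), is false for the models this theorem actually covers: the paper explicitly lifts the determinism restriction for this section (``we allow non-deterministic pure models''), and its worked example is non-deterministic. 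Concretely, let $t$ have a single $a$-successor $u$ whose only transition is a $b$-step to a dead state, and let $t'$ have two $a$-successors, one deadlocked and one isomorphic to $u$. Both models are finitely branching, every formula true at $t$ is true at $t'$ (via the copy of $u$, since $\mathsf{out}(t)=\mathsf{out}(t')=\{a\}$), yet $t' \models \MAY{a}{!\emptyset}$ while $t \not\models \MAY{a}{!\emptyset}$; the strict inclusion persists in every depth-stratified approximant rich enough to contain $\MAY{a}{!\emptyset}$. Your ``peel one modality'' descent does not repair this: peeling $\MAY{a}{\chi}$ leaves ``some $a$-successor of $t'$ satisfies $\chi$ and no $a$-successor of $t$ does'', which is not a strict inclusion one level down, so there is nothing of the same shape to recurse on. (If you instead read the theorem as restricted to deterministic pure models, your argument survives but collapses to a triviality: each state has at most one $a$-successor, so the zig-zag follows directly from $\MAY{a}{\xi}\in\THEORY{\cdot}$ and no contradiction argument is needed.)

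What the paper supplies, and what your proof is missing, is an explicit pseudo-negation: a meta-level function $\mathsf{neg}(y,\phi)$, defined by recursion on $\phi$ whenever $y \not\models \phi$, returning a formula true at $y$ and incompatible with $\phi$. The tantum operator carries the base cases: if $y \not\models\ !A$ then some $a \in \mathsf{out}(y)\setminus A$ is enabled at $y$ and $\mathsf{neg}(y,!A) = \MAY{a}{\top}$; if $y \not\models \MAY{a}{\psi}$ because $y$ has no $a$-transition then $\mathsf{neg}(y,\MAY{a}{\psi}) =\ !\,\mathsf{out}(y)$, a legal formula because finite branching makes $\mathsf{out}(y)$ finite; otherwise $\mathsf{neg}(y,\MAY{a}{\psi}) = \bigwedge_{y \TRANS{a} z}\MAY{a}{\mathsf{neg}(z,\psi)}$, and conjunctions are handled componentwise. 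The distinguishing formula is then $\MAY{a}{\bigl(\bigwedge_{i\le m}\phi_i \AND \bigwedge_{j>m}\mathsf{neg}(y,\phi_j)\bigr)}$, where the first group of conjuncts handles successors that fail something $y$ satisfies and the second handles successors that satisfy something $y$ fails. You need to prove such a lemma (or find a substitute for it); without it the contradiction in your zig-zag step cannot be completed.
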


\begin{proof}
\NI Let $\PPP = (\LLL, w)$ and $\PPP' = (\LLL', w')$ be finitely
branching, where $\LLL = (W, \rightarrow)$ and $(W', \rightarrow')$.
We first show the left to right direction, so assume that $\PPP \BISIM
\PPP'$.

The proof is by induction on formulae.  The only case which differs
from the standard Hennessy-Milner theorem is the case for $!A$, so
this is the only case we shall consider.  Assume $w \BISIM w'$ and $w
\models !A$. We need to show $w' \models !A$.
From the semantic clause for $!$, $w \models !A$ implies $\lambda(w)
\subseteq A$.  If $w \BISIM w'$, then $\lambda(w) = \lambda'(w')$.
Therefore $\lambda'(w') \subseteq A$, and hence $w' \models !A$.

The proof for the other direction is more involved.
For states $x \in W$ and $x' \in W$, we write 
\[
   x \equiv x'
      \qquad\text{iff}\qquad
   \THEORY{(\LLL, x)} = \THEORY{(\LLL', x')}.
\]

We define the bisimilarity relation:
\[
   Z = \{(x,x') \in \mathcal{W} \times \mathcal{W}' \fOr x \equiv x' \}
\]
To prove $w \BISIM w'$, we need to show:
\begin{itemize}

\item $(w,w') \in Z$. This is immediate from the definition of Z.

\item The relation $Z$ respects the transition-restrictions: if
  $(x,x') \in Z$ then $\lambda(x) = \lambda'(x')$

\item The forth condition: if $(x,x') \in Z$ and $x \xrightarrow{a}
  y$, then there exists a $y'$ such that $x' \xrightarrow{a} y'$ and $(y, y') \in Z$.

\item The back condition: if $(x,x') \in Z$ and $x' \xrightarrow{a}
  y'$, then there exists a $y$ such that $x \xrightarrow{a} y$ and $(y, y') \in Z$.

\end{itemize}
To show that $(x,x') \in Z$ implies $\lambda(x) = \lambda'(x')$, we
will argue by contraposition.  Assume $\lambda(x) \neq \lambda'(x')$.
Then either $\lambda'(x') \nsubseteq \lambda(x)$ or $\lambda(x)
\nsubseteq \lambda'(x')$.  If $\lambda'(x') \nsubseteq \lambda(x)$,
then $x' \nvDash \fBang \lambda(x)$.  But $x \models \fBang
\lambda(x)$, so $x$ and $x'$ satisfy different sets of propositions
and are not equivalent.  Similarly, if $\lambda(x) \nsubseteq
\lambda'(x')$ then $x \nvDash \fBang \lambda'(x')$.  But $x' \models
\fBang \lambda'(x')$, so again $x$ and $x'$ satisfy different sets of
propositions and are not equivalent.

We will show the forth condition in detail. The back condition is very
similar.  To show the forth condition, assume that $x \xrightarrow{a}
y$ and that $(x,x') \in Z$ (i.e. $x \equiv x'$).  We need to show that
$\exists y'$ such that $x' \xrightarrow{a} y'$ and $(y,y') \in Z$
(i.e. $y \equiv y'$).

Consider the set of $y'_i$ such that $x' \xrightarrow{a} y'_i$. Since
$x \xrightarrow{a} y$, $x \models \langle a \rangle \top$, and as $x
\equiv x'$, $x' \models \langle a \rangle \top$, so we know this set
is non-empty.  Further, since $(\mathcal{W}', \rightarrow')$ is
finitely-branching, there is only a finite set of such $y'_i$, so we
can list them $y'_1, ..., y'_n$, where $n >= 1$.

Now, in the Hennessy-Milner theorem for Hennessy-Milner logic, the proof proceeds as
follows: assume, for reductio, that of the $y'_1, ..., y'_n$, there is
no $y'_i$ such that $y \equiv y'_i$.  Then, by the definition of
$\equiv$, there must be formulae $\phi_1, ..., \phi_n$ such that for
all $i$ in $1$ to $n$:
\[
y'_i \models \phi_i \mbox{ and } y \nvDash \phi_i
\]
Now consider the formula:
\[
[a] (\phi_1 \lor ... \lor \phi_n)
\]
As each $y'_i \models \phi_i$, $x' \models [a] (\phi_1 \lor ... \lor \phi_n)$, but $x$ does not satisfy this formula, as each $\phi_i$ is not satisfied at $y$.
Since there is a formula which $x$ and $x'$ do not agree on, $x$ and $x'$ are not equivalent, contradicting our initial assumption.

But this proof cannot be used in \cathoristic{} because it relies on a formula $[a] (\phi_1 \lor ... \lor \phi_n)$ which cannot be expressed in \cathoristic{}: 
\Cathoristic{} does not include the box operator or disjunction, so this formula is ruled out on two accounts.
But we can massage it into a form which is more amenable to \cathoristic{}'s expressive resources:
\begin{eqnarray*}
[a] (\phi_1 \lor ... \lor \phi_n) & = & \neg \langle a \rangle \neg (\phi_1 \lor ... \lor \phi_n)  \\
	& = & \neg \langle a \rangle (\neg \phi_1\AND ... \AND \neg \phi_n) 
\end{eqnarray*}
Further, if the original formula $[a] (\phi_1 \lor ... \lor \phi_n)$ is true in $x'$ but not in $x$, then its negation will be true in $x$ but not in $x'$. 
So we have the following formula, true in $x$ but not in $x'$:
\[
 \langle a \rangle (\neg \phi_1\AND ... \AND \neg \phi_n)
 \]
The reason for massaging the formula in this way is so we can express it in \cathoristic{} (which does not have the box operator or disjunction).
At this moment, the revised formula is \emph{still} outside \cathoristic{} because it uses negation. 
But we are almost there: the remaining negation is in innermost scope, and innermost scope negation can be simulated in \cathoristic{} by the $!$ operator. 

We are assuming, for reductio, that of the $y'_1, ..., y'_n$, there is no $y'_i$ such that $y \equiv y'_i$.
But in \cathoristic{} without negation, we cannot assume that each $y'_i$ has a formula $\phi_i$ which is satisfied by $y'_i$ but not by $y$ - it might instead be the other way round: $\phi_i$ may be satisfied by $y$ but not by $y'_i$. So, without loss of generality, assume that $y'_1, ..., y'_m$ fail to satisfy formulae $\phi_1, ..., \phi_m$ which $y$ does satisfy, and that $y'_{m+1}, ..., y'_n$ satisfy formulae $\phi_{m+1}, ..., \phi_n$ which $y$ does not:
\begin{eqnarray*}
y \models \phi_i \mbox{ and } y'_i \nvDash \phi_i & & i = 1 \mbox{ to } m  \\
y \nvDash \phi_j \mbox{ and } y'_j \models \phi_j & & j = m+1 \mbox{ to } n 
\end{eqnarray*}
The formula we will use to distinguish between $x$ and $x'$ is:
\[
 \langle a \rangle ( \bigwedge_{i=1}^m \phi_i \; \AND \; \bigwedge_{j=m+1}^n \mathsf{neg}(y, \phi_j))
 \]
 Here, $\mathsf{neg}$ is a meta-language function that, given a state y and a formula $\phi_j$, returns a formula that is true in $y$ but incompatible with $\phi_j$. We will show that, since $y \nvDash \phi_j$, it is always possible to construct $ \mathsf{neg}(y, \phi_j)$ using the $!$ operator.

Consider the possible forms of $\phi_j$:
\begin{itemize}
\item
$\top$: this case cannot occur since all models satisfy $\top$.
\item
$\phi_1 \AND \phi_2$: we know $y'_j \models \phi_1 \AND \phi_2$ and $y \nvDash \phi_1 \AND \phi_2$. There are three possibilities:
\begin{enumerate}
\item
$y \nvDash \phi_1$ and $y \models \phi_2$. In this case, $\mathsf{neg}(y, \phi_1 \AND \phi_2) = \mathsf{neg}(y, \phi_1) \AND \phi_2$.
\item
$y \models \phi_1$ and $y \nvDash \phi_2$. In this case, $\mathsf{neg}(y, \phi_1 \AND \phi_2) = \phi_1 \AND \mathsf{neg}(y, \phi_2)$.
\item
$y \nvDash \phi_1$ and $y \nvDash \phi_2$. In this case, $\mathsf{neg}(y, \phi_1 \AND \phi_2) =  \mathsf{neg}(y, \phi_1) \AND \mathsf{neg}(y, \phi_2)$.
\end{enumerate}
\item
$!A$: if $y \nvDash !A \mbox{ and } y'_j \models !A$, then there is an action $a \in \Sigma-A$ such that $y \xrightarrow{a} z$ for some $z$ but there is no such $z$ such that $y'_j \xrightarrow{a} z$. In this case, let $\mathsf{neg}(y, \phi_j) = \langle a \rangle \top$.
\item
$\langle a \rangle \phi$. There are two possibilities:
\begin{enumerate}
\item
$y \models \langle a \rangle \top$. In this case, $\mathsf{neg}(y, \langle a \rangle \phi) =  \bigwedge\limits_{y \xrightarrow{a} z}  \langle a \rangle \mathsf{neg}(z, \phi)$.
\item
$y \nvDash \langle a \rangle \top$. In this case, $\mathsf{neg}(y, \langle a \rangle \phi) = \fBang \{ b \fOr \exists z. y \xrightarrow{b} z\}$. This set of $b$s is finite since we are assuming the transition system  is finitely-branching.
\end{enumerate}
\end{itemize}

\end{proof}

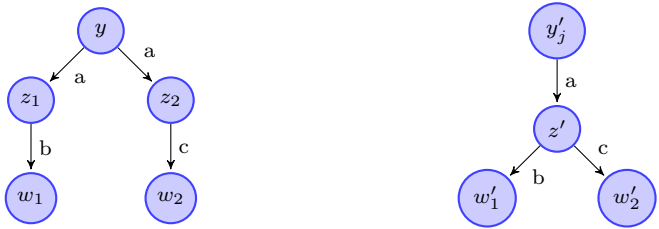
\begin{figure}[h]
\centering
\begin{tikzpicture}[node distance=1.3cm,>=stealth',bend angle=45,auto]
  \tikzstyle{place}=[circle,thick,draw=blue!75,fill=blue!20,minimum size=6mm]
  \tikzstyle{red place}=[place,draw=red!75,fill=red!20]
  \tikzstyle{transition}=[rectangle,thick,draw=black!75,
  			  fill=black!20,minimum size=4mm]
  \tikzstyle{every label}=[red]

  \begin{scope}[xshift=0cm]
    \node [place] (t) {$y$};
    \node [place] (a) [below left of=t] {$z_1$}
      edge [pre]  node[swap] {a}                 (t);
    \node [place] (a2) [below right of=t] {$z_2$}
      edge [pre]  node[swap] {a}                 (t);
    \node [place] (b) [below of=a] {$w_1$}
      edge [pre]  node[swap] {b}                 (a);
    \node [place] (c) [below of=a2] {$w_2$}
      edge [pre]  node[swap] {c}                 (a2);
  \end{scope}  
  \begin{scope}[xshift=6cm]
    \node [place] (t) {$y'_j$};
    \node [place] (a) [below of=t] {$z'$}
      edge [pre]  node[swap] {a}                 (t);
    \node [place] (b) [below left of=a] {$w'_1$}
      edge [pre]  node[swap] {b}                 (a);
    \node [place] (c) [below right of=a] {$w'_2$}
      edge [pre]  node[swap] {c}                 (a);
  \end{scope}  
\end{tikzpicture}
\caption{{\small Worked example of $\mathsf{neg}$. Note that the transition
  system on the left is non-deterministic.}}\label{figure:example:neg}
\end{figure}

\NI We continue with a worked example of $\mathsf{neg}$.  Consider
  $y$ and $y'_j$ as in Figure \ref{figure:example:neg}.  One formula
  that is true in $y'_j$ but not in $y$ is

\[
   \langle a \rangle (\langle b \rangle \top \AND \langle c \rangle \top)
\]

\NI Now:

\begin{eqnarray*}
\lefteqn{\mathsf{neg}(y, \langle a \rangle (\langle b \rangle \top \AND \langle c \rangle \top))}\qquad \qquad \qquad  \\
& = & \bigwedge\limits_{y \xrightarrow{a} z} \langle a \rangle \mathsf{neg}(z, \langle b \rangle \top \AND \langle c \rangle \top)  \\
& = & \langle a \rangle \mathsf{neg}(z_1, \langle b \rangle \top \AND \langle c \rangle \top) \AND \langle a \rangle\mathsf{neg}(z_2, \langle b \rangle \top \AND \langle c \rangle \top)  \\
& = & \langle a \rangle (\langle b \rangle \top \AND \mathsf{neg}(z_1, \langle c \rangle \top)) \AND \langle a \rangle\mathsf{neg}(z_2, \langle b \rangle \top \AND \langle c \rangle \top)  \\
& = & \langle a \rangle (\langle b \rangle \top \AND \mathsf{neg}(z_1, \langle c \rangle \top)) \AND \langle a \rangle(\mathsf{neg}(z_2, \langle b \rangle \top) \AND \langle c \rangle \top)  \\
& = & \langle a \rangle (\langle b \rangle \top \AND \fBang \{b\}) \AND \langle a \rangle(\mathsf{neg}(z_2, \langle b \rangle \top) \AND \langle c \rangle \top)  \\
& = & \langle a \rangle (\langle b \rangle \top \AND \fBang \{b\}) \AND \langle a \rangle(\fBang \{c\} \AND \langle c \rangle \top) 
\end{eqnarray*}

\NI The resulting formula is true in $y$ but not in $y'_j$.

\section{Omitted proofs}\label{app:completeness:proofs}\label{inc-appendix}

\subsection{Proof of Lemma \ref{lemmasimpl}}\label{app:decision:proofs}

If $\MMM \models \phi$ then $\MMM \MODELLEQ \SIMPL{\phi}$.

\begin{proof}
We shall show $\THEORY{\SIMPL{\phi}} \subseteq \THEORY{\MMM}$.
The desired result will then follow by applying Theorem \ref{theorem:completeLattice}.
We shall show that
\[
\text{If } \MMM \models \phi \text{ then } \THEORY{\SIMPL{\phi}} \subseteq \THEORY{\MMM}
\]
by induction on $\phi$.
In all the cases below, let $\SIMPL{\phi} = (\mathcal{L}, w)$ and let $\MMM = (\mathcal{L}', w')$.
The case where $\phi = \top$ is trivial.
Next, assume $\phi = \MAY{a} \psi$.
We know $\MMM \models \MAY{a} \psi$ and need to show that $\THEORY{\SIMPL{\MAY{a} \psi}} \subseteq \THEORY{\MMM}$.
Since $(\mathcal{L}', w') \models \MAY{a} \psi$, there is an $x'$ such that $w' \xrightarrow{a} x'$ and $(\mathcal{L}', x') \models \psi$.
Now from the definition of $\SIMPL{}$, $\SIMPL{\MAY{a} \psi}$ is a model combining $\SIMPL{\psi}$ with a new state $w$ not appearing in $\SIMPL{\psi}$ with an arrow $w \xrightarrow{a} x$ (where $x$ is the start state in $\SIMPL{\psi}$), and $\lambda(w) = \Sigma$. 
Consider any sentence $\xi$ such that $\SIMPL{\MAY{a} \psi} \models \xi$. Given the construction of $\SIMPL{\MAY{a}\psi}$, $\xi$ must be a conjunction of $\top$ and formulae of the form $\MAY{a} \tau$. In the first case, $(\mathcal{L}', x')$ satisfies $\top$; in the second case, $(\mathcal{L}', x') \models \tau$ by the induction hypothesis and hence $(\mathcal{L}', w') \models \MAY{a} \tau$.

Next, consider the case where $\phi = !A$, for some finite set $A \subset \Sigma$.
From the definition of $\SIMPL{}$, $\SIMPL{!A}$ is a model with one state $s$, no transitions, with $\lambda(s) = A$.
Now the only formulae that are true in $\SIMPL{!A}$ are conjunctions of $\top$ and $!B$, for supersets $B \supseteq A$.
If $\MMM \models !A$ then by the semantic clause for $!$, $\lambda'(w') \subseteq A$, hence $\MMM$ models all the formulae that are true in $\SIMPL{!A}$.

Finally, consider the case where $\phi = \psi_1 \land \psi_2$.
Assume $\MMM \models \psi_1$ and $\MMM \models \psi_2$.
We assume, by the induction hypothesis that $\THEORY{\SIMPL{\psi_1}} \subseteq \THEORY{\MMM}$ and $\THEORY{\SIMPL{\psi_2}} \subseteq \THEORY{\MMM}$.
We need to show that $\THEORY{\SIMPL{\psi_1\land \psi_2}} \subseteq \THEORY{\MMM}$.
By the definition of $\SIMPL{}$, $\SIMPL{\psi_1 \land \psi_2} = \SIMPL{\psi_1} \sqcap \SIMPL{\psi_2}$.
If $\SIMPL{\psi_1}$ and $\SIMPL{\psi_2}$ are $\mathsf{inconsistent}$ (see the definition of $\mathsf{inconsistent}$ in Section \ref{simpl}) then $\MMM = \bot$. In this case, $\THEORY{\SIMPL{\psi_1} \land \SIMPL{\psi_2}} \subseteq \THEORY{\bot}$.
If, on the other hand, $\SIMPL{\psi_1}$ and $\SIMPL{\psi_2}$ are not $\mathsf{inconsistent}$, we shall show that $\THEORY{\SIMPL{\psi_1 \land \psi_2}} \subseteq \THEORY{\MMM}$ by reductio.
Assume a formula $\xi$ such that $\SIMPL{\psi_1 \land \psi_2} \models \xi$ but $\MMM \nvDash \xi$.
Now $\xi \neq \top$ because all models satisfy $\top$.
$\xi$ cannot be of the form $\MAY{a} \tau$ because, by the construction of $\mathsf{merge}$ (see Section \ref{simpl}), all transitions in $\SIMPL{\psi_1 \land \psi_2}$ are transitions from $\SIMPL{\psi_1}$ or $\SIMPL{\psi_2}$ and we know from the inductive hypothesis that $\THEORY{\SIMPL{\psi_1}} \subseteq \THEORY{\MMM}$ and $\THEORY{\SIMPL{\psi_2}} \subseteq \THEORY{\MMM}$.
$\xi$ cannot be $!A$ for some $A \subset \Sigma$, because, from the construction of $\mathsf{merge}$, all state-labellings in $\SIMPL{\psi_1 \land \psi_2}$ are no more specific than the corresponding state-labellings in $\SIMPL{\psi_1}$ and $\SIMPL{\psi_2}$, and we know from the inductive hypothesis that $\THEORY{\SIMPL{\psi_1}} \subseteq \THEORY{\MMM}$ and $\THEORY{\SIMPL{\psi_2}} \subseteq \THEORY{\MMM}$.
Finally, $\xi$ cannot be $\xi_1 \land xi_2$ because the same argument applies to $xi_1$ and $xi_2$ individually.
We have exhausted the possible forms of $\xi$, so conclude that there is no formula $\xi$ such that $\SIMPL{\psi_1 \land \psi_2} \models \xi$ but $\MMM \nvDash \xi$.
Hence $\THEORY{\SIMPL{\psi_1\land \psi_2}} \subseteq \THEORY{\MMM}$.
\end{proof}

\subsection{Proof of Lemma \ref{inc1}}
If $\phi \models \psi \mbox{ then } \SIMPL{\phi} \MODELLEQ \SIMPL{\psi}$

\begin{proof}
By Theorem \ref{theorem:completeLattice}, $ \SIMPL{\phi} \MODELLEQ \SIMPL{\psi}$ iff $\THEORY{\SIMPL{\psi}} \subseteq  \THEORY{\SIMPL{\phi}}$.
Assume $\phi \models \psi$, and assume $\xi \in \THEORY{\SIMPL{\psi}} $. We must show $\xi \in \THEORY{\SIMPL{\phi}} $.
Now $\SIMPL$ is constructed so that:
\[
\SIMPL{\psi} = \bigsqcup \{ \MMM \; | \; \MMM \models \psi \}
\]
So  $\xi \in \THEORY{\SIMPL{\psi}} $ iff for all models $\MMM$, $\MMM \models \psi$ implies $\MMM \models \xi$.
We must show that $\MMM \models \phi$ implies $\MMM \models \xi$ for all models $\MMM$.
Assume $\MMM \models \phi$. Then since $\phi \models \psi$,  $\MMM \models \psi$. 
But since $\xi \in \THEORY{\SIMPL{\psi}} $, $\MMM \models \xi$ also.

\end{proof}

\subsection{Proof of Lemma \ref{inc3}}
If $\mathcal{I}(\psi) \subseteq \mathcal{I}(\phi) \mbox{ then } \mathcal{J}(\SIMPL{\psi}) \subseteq \mathcal{J}(\SIMPL{\phi})$
\begin{proof}
Assume $\mathcal{I}(\psi) \subseteq \mathcal{I}(\phi)$ and $\MMM \sqcap \SIMPL{\psi} = \bot$.
We need to show $\MMM \sqcap \SIMPL{\phi} = \bot$.
If $\mathcal{I}(\psi) \subseteq \mathcal{I}(\phi)$ then for all formulae $\xi$, if $\SIMPL{\xi} \sqcap \SIMPL{\psi} = \bot$ then $\SIMPL{\xi} \sqcap \SIMPL{\phi} = \bot$.
Let $\xi$ be $\CHAR{\MMM}$.
Given that $\MMM \sqcap \SIMPL{\psi} = \bot$ and $\SIMPL{\CHAR{\MMM}} \MODELLEQ \MMM$, $\SIMPL{\CHAR{\MMM}} \sqcap \SIMPL{\psi} = \bot$.
Then as $\mathcal{I}(\psi) \subseteq \mathcal{I}(\phi)$, $\SIMPL{\CHAR{\MMM}} \sqcap \SIMPL{\phi} = \bot$.
Now as $\MMM  \MODELLEQ \SIMPL{\CHAR{\MMM}}$, $\MMM \sqcap \SIMPL{\phi} = \bot$.

\end{proof}

\end{document}